\def\setN{\mathbb{N}}
\def\obs{\textit{obs}}
\def\ie{i.e.,~}
\newcommand{\PSPACE}{\textbf{\rm PSPACE}}
\newcommand{\EXPSPACE}{\textbf{\rm EXPSPACE}}
\newcommand{\word}[1]{\langle #1\rangle}
\newcommand{\set}[1]{\{#1\}}
\newcommand\LT[1]{\xrightarrow{#1}}
\newcommand{\view}{{\tt view}}
\newcommand\nxt{\mathit{next}}
\newcommand\power[1]{\mathcal{P}(#1)}
\newcommand\Run[1]{\mathcal{R}(#1)}
\newcommand\A{\mathcal{A}}
\newcommand\tview{\view}
\newcommand\M{\mathbb{M}}
\newcommand{\proj}{proj}  
\newcommand{\syncmachines}{\M^s}
\newcommand{\run}{r}
\newcommand{\vw}{v}
\newcommand{\tNDI}{{\tt NDI}}
\newcommand{\tNDS}{{\tt NDS}}
\newcommand{\tRES}{{\tt RES}}
\newcommand{\NDIs}{\tNDI}
\newcommand{\NDSs}{\tNDS}
\newcommand{\pvh}{{\cal V}_H}
\newcommand{\sched}{\mathit{sched}}
\newcommand{\Aut}{\mathcal{\A}}
\newcommand{\intrel}{\rightarrow}
\newcommand{\commentout}[1]{}
\newcommand{\NSPACE}{\textbf{{\rm NSPACE}}}
\newcommand{\DSPACE}{\textbf{{\rm DSPACE}}}
\renewcommand{\emptyset}{\varnothing}
\lstdefinelanguage{Algo}{%
commentstyle={\color{gris}\it},
morekeywords={Initialization,Main,while,do,if,then,else,endif,endwhile},
keywordstyle={\sffamily  \textbf},%
sensitive=false,%
texcl=true,
flexiblecolumns=true,
morecomment=[l][\it\color{gris}]//,%
morecomment=[s][\it\color{gris}]{/*}{*/},
showstringspaces=true,
}
\renewcommand*\thelstnumber{${\the\value{lstnumber}}\!\!:$}
\newcommand{\lline}[1]{\the\value{#1}}
\journalname{ }
\begin{document}

\title{The Complexity of Synchronous Notions \\of Information Flow Security}

\author{Franck Cassez
       \and Ron van der Meyden
       \and Chenyi Zhang\thanks{Supported by Australian Research Council Discovery grant DP1097203.}
}

\institute{Franck Cassez \at
              National ICT Australia, Sydney, Australia \\
              \email{{\tt Franck.Cassez@nicta.com.au}}           
           \and
           Ron van der Meyden \at
              School of Computer Science and Engineering,\\
              University of New South Wales, Sydney, Australia\\
              \email{{\tt meyden@cse.unsw.edu.au}}
           \and
           Chenyi Zhang \at
             School of Computer Science and Engineering,\\
             University of New South Wales, Sydney, Australia \\
             \email{{\tt chenyi@uq.edu.au}}\\
             \emph{Current affiliation:} 
             School of Information Technology and Electrical Engineering,\\ 
             University of Queensland, Brisbane, Australia
}

\date{Received: date / Accepted: date}

\maketitle

\begin{abstract}
The paper considers the complexity of verifying that a finite state system
satisfies a number of definitions of information flow security.
The systems model considered is one in which agents operate
synchronously with awareness of the global clock. This enables timing
based attacks to be captured, whereas previous work on this topic
has dealt primarily with asynchronous  systems.
Versions of the notions of
nondeducibility on inputs, nondeducibility on strategies, and
an unwinding based notion are formulated for this model.
All three notions are shown to be decidable, and their computational
complexity is characterised.
\end{abstract}

\section{Introduction}

Information flow security is concerned with the ability
of agents in a system to deduce information about the activity and
secrets of other agents. An information flow security policy prohibits
some agents from knowing information about other agents. In an
insecure system, an agent may nevertheless be able to make inferences
from its observations, that enables it to deduce facts that it is not
permitted to know. In particular, a class of system design flaws,
referred to as \emph{covert channels}, provide unintended ways for
information to flow between agents, rendering a system insecure.

Defining what it is for a system to satisfy an information flow
security policy has proved to be a subtle matter.  A substantial
literature has developed that provides a range of formal systems
models and a range of definitions of security.  In particular, in
non-deterministic systems it has been found necessary to clarify the
attack model, and distinguish between a passive attacker, which merely
aims to deduce secret information from observations it is able to make
from its position outside the security domain to be protected, and a
more active attacker, that may have planted a Trojan Horse in the
domain to be protected, and which seeks to use covert channels to pass
information out of this domain. While this distinction turns out not
to matter in asynchronous systems~\cite{FG95}, in synchronous settings,
it leads to two different definitions of security, known as
Nondeducibility on Inputs ($\tNDI$)  \cite{sutherland_86}, and Nondeducibility on Strategies
($\tNDS$) \cite{WJ90}.  (The term strategies in the latter refers to the
strategies that a Trojan Horse may employ to pass information out of
the security domain.)  Considerations of proof methods for security,
and compositionality of these methods, has led to the introduction of
further definitions of security, such as {\em unwinding
  relations} \cite{GM84} and the associated definition of {\em restrictiveness}
($\tRES$) \cite{McCullough88}.

One of the dimensions along which it makes sense to evaluate a
definition of security is the practicality of verification techniques
it enables. The early literature on the topic was motivated primarily by 
theorem proving verification methods, but in recent years the feasibility 
of automated verification techniques has begun to be investigated~\cite{DHKRS08,FG95,FG96,FGM00,KB06,MZ06b}. 
This recent work on automated verification of security has dealt primarily 
with asynchronous systems models.

\paragraph{\bfseries Our Contribution}
In this paper we investigate the complexity of
automated verification for a range of definitions of information flow
in a {\em synchronous} systems model, in which agents are aware of a
global clock and may use timing information in their deductions. This
model is significant in that a number of timing-based attacks have
been demonstrated that, e.g., enable cryptographic keys to be deduced
just from the amount of time taken to perform cryptographic
operations~\cite{Kocher96}.  It is therefore desirable that systems
designs are free of timing-based covert channels; the asynchronous
definitions of security that have been the focus of much of the
literature fail to ensure this.

We study three definitions of security in this paper: synchronous
versions of Nondeducibility on Inputs ($\tNDI$), Nondeducibility on
Strategies ($\tNDS$) and an unwinding based definition called
Restrictiveness ($\tRES$). We consider just a two-agent setting, with
agents $L$ for a low security domain and $H$ for a high security
domain, and the (classical) security policy that permits $H$ to know
about $L$'s activity, but prohibits $L$ from knowing about the
activity of $H$. We show that all three definitions are decidable in
finite state systems, and with complexities of PSPACE-complete for
$\tNDI$, EXPSPACE-complete for $\tNDS$, and polynomial time for
$\tRES$.  A preliminary version of this paper,
with only proof sketches, appeared
in~\cite{CvdMZ2010}.  In this extended version,
we provide detailed proofs for all results.

\paragraph{\bfseries Outline of the paper}
The structure of the paper is as follows. Section~\ref{sec:semdef}
introduces our systems model, the definitions of security that we
study, and states the main results of the paper.  The following
sections discuss the proofs of these results.  Section~\ref{sec:ndi}
deals with Nondeducibility on Inputs, Section~\ref{sec:nds} deals with
Nondeducibility on Strategies, and Section~\ref{sec:res} deals with
the unwinding-based definition.  Related literature is discussed in
Section~\ref{sec:related}, and Section~\ref{sec:concl} makes some
concluding remarks.

\section{Notations, Semantic Model and Information Flows Security
  Policies}\label{sec:semdef}

\subsection{Notation}
Otherwise stated, we use standard notation from automata theory.
Given a finite set (alphabet) $A$, we write $A^*$ for the set of finite words over $A$. 
We denote the empty word by $\epsilon$, and for $w \in A^*$, we write $|w|$ for the length of $w$.  
For $n \in \setN$, $A^n$ stands for the set of words of length $n$ over $A$.

\subsection{Synchronous Machines}

We work with a synchronous, non-deterministic state machine model for
two agents, $H$ and $L$. At each step of the computation, the agents
(simultaneously) perform an action, which is resolved
non-deterministically into a state transition. Both agents make
(possibly incomplete) observations of the state of the system, and do
so with awareness of the time. Time is discrete and measured by the
number of steps in a computation.

Our machine model is given in the following definition.  We do not
make any finiteness assumptions in this section and the results in
this section hold for this unconstrained model.
\begin{definition}[Synchronous Machine] \label{def-sync-mac}
  A {\em synchronous machine} $M$ is a tuple of the form $\word{S, A,
    s_0, \rightarrow,$ $ O, \obs}$ where
  \begin{itemize}
  \item $S$ is the set of states,
  \item $A=A_H\times A_L$ is a set of joint actions (or joint inputs),
    each composed of an action of $H$ from the set $A_H$ and an action
    of $L$ from the set $A_L$,
  \item $s_0$ is the initial state,
  \item $\rightarrow \subseteq S \times A\times S$ defines state
    transitions resulting from the joint actions,
  \item $O$ is a set of observations,
  \item $\obs : S\times\set{H, L} \rightarrow O$ represents the
    observations made by each agent in each state.
  \end{itemize}
  We write $\obs_u$ for the mapping $\obs(\cdot,u): S \rightarrow O$,
  and $s\xrightarrow{\ a \ } s'$ for $\langle s,a,s'\rangle\in
  \rightarrow$.  We assume that machines are {\em input-enabled}, by
  requiring that for all $s\in S$ and $a\in A$, there exists $s'\in S$
  such that $s\xrightarrow{\ a \ } s'$. We write $\syncmachines$ for
  the set of synchronous machines.
\end{definition}
A \emph{run} $\run$ of $M$ is a finite sequence $\run = s_0 a_1
s_1\ldots a_n s_n$ with: $a_i \in A$ and $s_i\xrightarrow{a_{i+1}}
s_{i+1} $ for all $i= 0\ldots n-1$.  We write $\Run{M}$ for the set of
all runs of $M$.  We denote the sequence of joint actions $a_1\ldots
a_n$ in the run $\run$ by $Act(\run)$.  For each agent $u \in \{H,L\}$
we define $\proj_u : A \rightarrow A_u$ to be the projection of joint
actions onto agent $u$'s actions. We write $Act_u(\run)$ for the
sequence of agent $u$'s actions in $Act(\run)$, e.g., if $Act(\run) =
a_1\ldots a_n$ then $Act_u(\run) = \proj_u(a_1)\ldots \proj_u(a_n)$.

\subsection{Agent Views}
For a sequence $w$, and $1\leq i \leq |w|$, we write $w_i$ for the
$i$-th element of $w$, and $w[i]$ for the prefix of $w$ up to the
$i$-th element.  We assume agents have a {\em synchronous} view of the
machine, making an observation at each moment of time and being aware
of each of their own actions (but not the actions of the other agent,
which are given simultaneously and independently).  Given a
synchronous machine $M$, and $u \in \{H,L\}$, we define \emph{$u$
views} by the mapping $\tview_u: \Run{M} \rightarrow O(A_{u} O)^*$ by:
$$\tview_u(s_0a_1s_1a_2 \cdots a_ns_n) = \obs_u(s_0)\, \proj_u(a_1)\, \obs_u(s_1)\,
\proj_u(a_2) \cdots \proj_u(a_n)\, \obs_u(s_n) \mathpunct.$$
Intuitively, this says that an agent's view of a run is the history of
all its state observations as well as its own actions in the run.  We
say that a sequence $\vw$ of observations and actions is a {\em
  possible $u$ view in a system $M$} if there exists a run $r$ of $M$
such that $\vw = \view_u(r)$.  The mapping
$\tview_u$ extends straightforwardly to sets of runs $R\subseteq
\Run{M}$, by $\tview_u(R) = \{\tview_u(r)~|~r\in R\}$.  We define the
length $|\vw|$ of a view $\vw$ to be the number of actions it
contains.

\subsection{Expressiveness Issues} \label{sec:schedmachine}

We remark that the model is sufficiently expressive to represent an
alternate model in which agents act in turn under the control of a
scheduler. We say that a synchronous machine is {\em scheduled} if for
each state $s\in S$ either
\begin{itemize}
\item for all actions $a\in A_H$ and $b,b'\in A_L$, and states $t\in S$,
      $s\xrightarrow{(a,b)} t$ iff $s\xrightarrow{(a,b')} t$, or
\item for all actions $a,a'\in A_H$ and $b\in A_L$, and states $t\in S$,
      $s\xrightarrow{(a,b)} t$ iff $s\xrightarrow{(a',b)} t$.
\end{itemize}
This definition says that state transitions in a scheduled machine are
determined by the actions of \emph{at most one} of the agents (the
agent scheduled at that state); the other agent has no control over
the transition.  The model involving machines under the control of a
scheduler of~\cite{MZ08}, in which at most one agent acts at each step
of the computation, can be encoded as scheduled synchronous machines.

\subsection{Notions of Information Flow Security}
We consider a number of different notions of information flow
security.  Each definition provides an interpretation for the security
policy $L \intrel H$, which states that information is permitted to
flow from $L$ to $H$, but not from $H$ to $L$.  Our definitions are
intended for synchronous systems, in which the agents share a clock
and are able to make deductions based on the time.  (Much of the prior
literature has concentrated on asynchronous systems, in which an agent
may not know how many actions another agent has performed.)

\subsubsection{Non-Deducibility on Inputs}

The first definition we consider states that $L$ should not be able to infer $H$
actions from its view.

\begin{definition} A synchronous machine $M$ satisfies
  \emph{Non-Deducibility on Inputs} ($M\in\NDIs$) if for every possible $L$
  view $\vw$ in $M$ and every sequence of $H$ actions $\alpha\in
  A_H^*$ with $|\alpha|=|\vw|$,
  there exists a run $\run\in\Run{M}$ such that $Act_H(\run)=\alpha$
  and $\view_L(\run)=\vw$.
\end{definition}

Intuitively, in a synchronous system, $L$ always knows how many
actions $H$ has performed, since this is always identical to the
number of actions that $L$ has itself performed.  In particular, if
$L$ has made view $\vw$, then $L$ knows that $H$ has performed $|\vw|$
actions. The definition says that the system is secure if this is {\em
  all} that $L$ can learn about what sequence of actions $H$ has
performed.  Whatever $L$ observes is consistent with any sequence of
actions by $H$ of this length\footnote{Recall that $M$ is
  input-enabled.}.  More precisely, define $K_L(\vw)$ for an $L$ view
$\vw$ to be the set of $H$ action sequences $Act_H(r)$ for $r$ a run
with $\vw=\view_L(r)$; this represents what $L$ knows about $H$'s
actions in the run.  Then $M\in\tNDI$ iff for all possible $L$ views
$\vw$ we have $K_L(\vw) = A_H^{|\vw|}$.

The definition of $\NDIs$ takes the viewpoint that a system is secure
if it is not possible for $L$ to make any nontrivial deductions about
$H$ behaviour, provided that $H$ does not actively seek to communicate
information to $L$. This is an appropriate definition when $H$ is
trusted not to deliberately act so as to communicate information to
$L$, and the context is one where $H$ is equally likely to engage in
any of its possible behaviours.  In some circumstances, however,
$\NDIs$ proves to be too weak a notion of security. In particular,
this is the case if the attack model against which the system must be
secure includes the possibility of Trojan Horses at the $H$ end of the
system, which must be prevented from communicating $H$ secrets to $L$.
The following example, due in essence to Wittbold and Johnson
\cite{WJ90} shows that it is possible for a system to satisfy $\NDIs$,
but still allow for $L$ to deduce $H$ information.

\begin{figure}
\centering
    \includegraphics[scale=0.94]{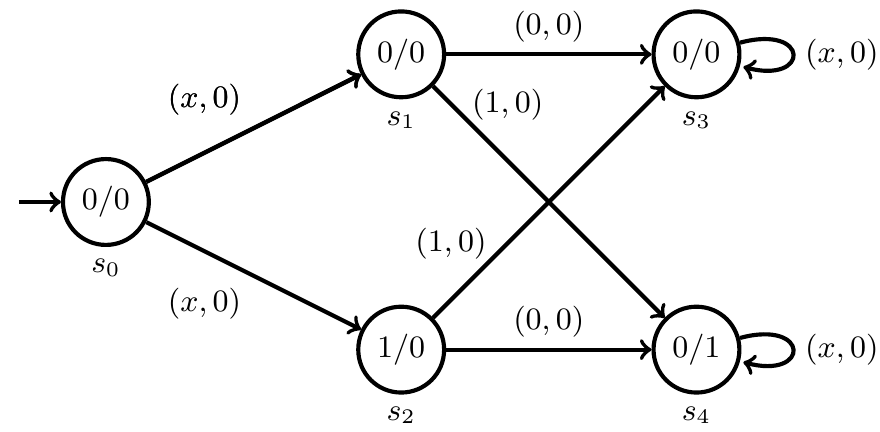}
\caption{A synchronous machine in $\NDIs$, but not in $\NDSs$, where $x\in\set{0,1}$.}
\label{fig:nds}
\end{figure}

\begin{example}\label{example:nds-ndi}
  We present a synchronous machine that satisfies $\NDIs$ in
  Fig.~\ref{fig:nds}.  We use the convention in such figures that the
  observations are shown on a state $s$ in the form of
  $\obs_H(s)/\obs_L(s)$.  Edges are labelled with joint actions
  $(a,a')$ where $a\in A_H$ and $a'\in A_L$. When $a$ is $x$ this means
  that there is such an edge for all $a\in A_H$.  In this example the
  action sets are $A_H=\set{0,1}$, $A_L=\set{0}$.  Note that in state
  $s_1$ and $s_2$, $L$'s observation in the next state is determined
  as the \emph{exclusive-or} of $H$'s current observation and $H$'s
  action. The system is in $\NDIs$ since every $H$ action sequence is compatible
  with every $L$ view of the same length.  For example, the
  $L$ view $00000$ is consistent with $H$ action
  sequence $00$ and $10$ 
  (path $s_0s_1s_3$) and with $H$ action sequence $01$ and $11$ 
  (path $s_0s_2s_3$). Nevertheless, $H$ can communicate a bit $b$ of
  information to $L$, as follows.  Note that $H$ is able to
  distinguish between state $s_1$ and $s_2$ by means of the
  observation it makes on these states (at time 1).  Suppose $b=1$,
  then $H$ chooses action $1$ at $s_1$ and action $0$ at $s_2$; in
  either case the next state is $s_4$, and $L$ observes $1$.
  Alternately, if $b=0$, then $H$ chooses action $0$ at $s_1$ and
  action $1$ at $s_2$; in either case the next state is $s_3$, and $L$
  observes $0$. Whatever the value of $b$, $H$ has guaranteed that $L$
  observes $b$ at time 2, so this bit has been
  communicated. Intuitively, this means that the system fails to
  block Trojan Horses at $H$ from communicating with $L$, even though
  it satisfies $\tNDI$.  (The structure can be repeated so that $H$
  can communicate a message of any length to $L$ in plain text.)  \qed
\end{example}

\subsubsection{Non-Deducibility on Strategies}
The essence of Example~\ref{example:nds-ndi} is that $L$ is able to
deduce $H$ secrets based not just on its knowledge of the system, but
also its knowledge that $H$ is following a particular strategy for
communication of information to $L$. In response to this example,
Wittbold and Johnson proposed the following stronger definition of
security that they called {\em non-deducibility on strategies}.
To state this definition, we first formalize the possible communication strategies that
can be used by $H$. Intuitively, $H$'s behaviour may %
depend on what $H$ has been able to observe in the system.

\begin{definition}[$H$ Strategy, Consistent Runs]
  An {\em $H$ strategy} in $M$ is a function $\pi : \tview_H(\Run{M})
  \rightarrow A_H$ mapping each possible view of $H$ (in $M$) to an
  $H$ action. A run $r=s_0 a_1 s_1 \ldots a_n s_n$ of $M$ is {\em
    consistent with} an $H$ strategy $\pi$ if for all $i=0\dots n-1$,
  we have $\proj_H(a_{i+1}) = \pi(\view_H(s_0 a_1 s_1 \ldots a_{i}
  s_{i}))$.  We write $\Run{M,\pi}$ for the set of runs of $M$ that
  are consistent with the $H$ strategy $\pi$.
\end{definition}
We can now state Wittbold and Johnson's definition.

\begin{definition}
  A synchronous system $M$ satisfies {\em Nondeducibility on
    Strategies} ($M\in \NDSs$), if for all $H$ strategies $\pi_1,
  \pi_2$ in $M$, we have $\view_L(\Run{M,\pi_1}) =
  \view_L(\Run{M,\pi_2})$.
\end{definition}

Intuitively, this definition says that the system is secure if $L$ is
not able to distinguish between different $H$ strategies by means of
its views.  In Example~\ref{example:nds-ndi}, 
given an $H$ strategy $\pi_1$ satisfying $\pi_1(0x0)=0$ and
$\pi_1(0x1)=1$, and another $H$ strategy $\pi_2$ satisfying
$\pi_2(0x0)=1$ and $\pi_2(0x1)=0$, we have the $L$ view $00001$ in
$\view_L(\Run{M,\pi_2}$ but not in $\view_L(\Run{M,\pi_1})$. %
Thus, the sets of $L$ views differ for these two strategies, so the
system is not in $\tNDS$.

An alternate formulation of the definition can be obtained by noting
that for every possible $L$ view $\vw$, there is an $H$ strategy $\pi$
such that $\vw \in \view_L(\Run{M,\pi})$, viz., if $\vw = \view_L(r)$,
we take $\pi$ to be a strategy that always performs the same action at
each time $i<|r|$ as $H$ performs at time $i$ in $r$.  Thus, we can
state the definition as follows:

\begin{proposition}\label{prop:charnds}
$M\in \NDSs$ iff  for all $H$ strategies $\pi$ in $M$, we have
 $\view_L(\Run{M,\pi}) = \view_L(\Run{M})$.
\end{proposition}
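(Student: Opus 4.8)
The plan is to establish the two directions of the biconditional separately, leaning on the witnessing fact recorded immediately above the statement: for every possible $L$ view $\vw$ there is an $H$ strategy $\pi_\vw$ with $\vw \in \view_L(\Run{M,\pi_\vw})$. The backward direction is immediate: if $\view_L(\Run{M,\pi}) = \view_L(\Run{M})$ holds for every $H$ strategy $\pi$, then for any pair $\pi_1,\pi_2$ both left-hand sides equal $\view_L(\Run{M})$, so $\view_L(\Run{M,\pi_1}) = \view_L(\Run{M,\pi_2})$ and $M \in \NDSs$ by definition.

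For the forward direction I would assume $M \in \NDSs$ and fix an arbitrary $H$ strategy $\pi$. One inclusion is free: since $\Run{M,\pi} \subseteq \Run{M}$, we have $\view_L(\Run{M,\pi}) \subseteq \view_L(\Run{M})$. For the reverse inclusion, take $\vw \in \view_L(\Run{M})$, so $\vw = \view_L(\run)$ for some run $\run$. Applying the witnessing fact yields a strategy $\pi_\vw$ with $\vw \in \view_L(\Run{M,\pi_\vw})$; then $M \in \NDSs$ gives $\view_L(\Run{M,\pi}) = \view_L(\Run{M,\pi_\vw})$, whence $\vw \in \view_L(\Run{M,\pi})$. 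Combining the two inclusions yields $\view_L(\Run{M,\pi}) = \view_L(\Run{M})$, as required.

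Since the only real content is the witnessing fact, I would spell it out. Given $\vw = \view_L(\run)$ with $\run = s_0 a_1 s_1 \cdots a_n s_n$, I define $\pi_\vw$ on each $H$ view $\view_H(s_0 a_1 \cdots a_i s_i)$ with $0 \le i < n$ by setting $\pi_\vw(\view_H(s_0 a_1 \cdots a_i s_i)) = \proj_H(a_{i+1})$, and extend $\pi_\vw$ to all remaining $H$ views by any fixed default action. This is a well-defined total function because the constraining prefixes have pairwise distinct lengths and are therefore distinct $H$ views, so no view is assigned two values; and by construction $\run$ is consistent with $\pi_\vw$, giving $\vw = \view_L(\run) \in \view_L(\Run{M,\pi_\vw})$.

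I expect the main (and essentially only) obstacle to be this well-definedness check: one must confirm that constraining $\pi_\vw$ along the single run $\run$ never demands conflicting actions at the same $H$ view, which follows from the distinct-length argument, and that the arbitrary extension to the remaining views does not disturb consistency of $\run$ with $\pi_\vw$. Everything else is routine set manipulation.
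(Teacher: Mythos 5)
Your proof is correct and follows essentially the same route as the paper: the backward direction is immediate, and the forward direction combines the trivial inclusion $\view_L(\Run{M,\pi})\subseteq\view_L(\Run{M})$ with the witnessing observation that every possible $L$ view is realized by some strategy. The only cosmetic difference is that you constrain $\pi_\vw$ just along the prefixes of the witnessing run (checking well-definedness via distinct lengths), whereas the paper uses the uniform ``blind'' strategy that plays $\proj_H(a_{i+1})$ at every view of length $i$; both yield the same conclusion.
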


\begin{proof}
  It is trivial that if $\view_L(\Run{M,\pi}) = \view_L(\Run{M})$ for
  all strategies $\pi$ then $M\in \NDSs$.  Conversely, suppose that
  $M\in \NDSs$, and let $\pi$ be any strategy.  Plainly
  $\view_L(\Run{M,\pi}) \subseteq \view_L(\Run{M})$; we show the
  reverse containment.  Let $\vw\in \view_L(\Run{M})$ be a possible
  $L$ view. By the above observation there exists a strategy $\pi_1$
  such that $\vw \in \view_L(\Run{M,\pi_1})$. By $M\in \NDSs$,
  $\view_L(\Run{M,\pi_1})= \view_L(\Run{M,\pi})$, so also $\vw \in
  \view_L(\Run{M,\pi})$, as required. \qed
\end{proof}

This formulation makes
it clear that $H$ cannot communicate any information to $L$ by means of its
strategies.  It is also apparent that allowing $H$ strategies to be
non-deterministic (i.e., functions from $H$ views to a \emph{set} of $H$
actions) would not lead to a different definition of $\tNDS$, since the more
choices $H$ has in a strategy the more $L$-views are compatible with that
strategy.  We remark that in asynchronous systems (in which we use an
asynchronous notion of view), similarly defined notions of non-deducibility on
inputs and non-deducibility on strategies turn out to be
equivalent~\cite{FG95,MZ10}.  The example~\ref{example:nds-ndi} above shows
that this is not the case in synchronous machines, where the two notions are
distinct.

\subsubsection{Unwinding Relations}

Nondeducibility-based definitions of security are quite intuitive, but
they turn out to have some disadvantages as a basis for secure systems
development.  One is that they are not compositional: combining two
systems, each secure according to such a definition, can produce a
compound system that is not secure~\cite{McCullough88}.  For this
reason, some stronger, but less intuitive definitions have been
advocated in the literature.

One of these, McCullough's notion of
restrictiveness~\cite{McCullough88}, is closely related to an approach
to formal proof of systems security based on what are known as
``unwinding relations."  A variety of definitions of unwinding
relations have been proposed in the
literature~\cite{GM84,Rushby92,Mantel01,BFPR03}, in the context of a
number of different underlying systems models and associated
definitions of security for which they are intended to provide a proof
technique.  We propose here a variant of such definitions that is
appropriate to the machine model we consider in this paper, drawing on
definitions proposed by van der Meyden and Zhang \cite{MZ08} for
machines acting under the control of a scheduler.

\begin{definition}[Synchronous Unwinding Relation]\label{def-unwinding}
  A \emph{synchronous unwinding relation} on a system $M$ is a
  symmetric relation $\sim\subseteq S\times S$ satisfying the
  following:
\begin{enumerate}
  \item $s_0\sim s_0$,
  \item $s\sim t$ implies $\obs_L(s)=\obs_L(t)$,
  and
  \item $s\sim t$ implies that for all $a_1,a_2\in A_H$ and $a_3\in A_L$,
    if $s\xrightarrow{(a_1,a_3)}s'$ then there exists a state $t'$
    such that $t\xrightarrow{(a_2,a_3)}t'$, and $s'\sim t'$.
  \end{enumerate}
\end{definition}

Intuitively, an unwinding relation is a bisimulation-like
relation over $S$ that shows $L$ observations are locally uncorrelated with $H$ actions.

\begin{definition}\label{def:res}
  A synchronous machine $M$ satisfies restrictiveness
  ($M\in \tRES$), if there exists a synchronous unwinding relation on $M$.
\end{definition}

Part of the significance of $\tRES$ is that it provides a proof
technique for our notions of nondeducibility, as shown by the
following result, which relates the three notions of security we have
introduced:

\begin{theorem}\label{thm:containments}
  The following containments hold and are strict: $\tRES \subset
  \NDSs\subset \NDIs$.
\end{theorem}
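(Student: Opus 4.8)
The plan is to establish the two inclusions $\NDSs \subseteq \NDIs$ and $\tRES \subseteq \NDSs$ separately, and then to exhibit separating examples witnessing that each is strict.

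For the easier inclusion $\NDSs \subseteq \NDIs$ I would use the characterization of Proposition~\ref{prop:charnds}. Suppose $M \in \NDSs$, let $\vw$ be a possible $L$ view with $|\vw| = n$, and let $\alpha = \alpha_1 \cdots \alpha_n \in A_H^n$. Define the time-dependent but view-independent $H$ strategy $\pi$ that plays $\alpha_{i+1}$ after any $H$ view of length $i$ (for $i < n$, and some fixed action thereafter). Then every run of length $n$ consistent with $\pi$ has $H$-action sequence exactly $\alpha$. Since $\vw \in \view_L(\Run{M}) = \view_L(\Run{M,\pi})$ by Proposition~\ref{prop:charnds}, there is a run $\run$ consistent with $\pi$ with $\view_L(\run) = \vw$; as $|\run| = |\vw| = n$, this gives $Act_H(\run) = \alpha$, which is exactly the $\NDIs$ condition. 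Strictness of this inclusion is immediate from Example~\ref{example:nds-ndi}, which lies in $\NDIs$ but not $\NDSs$.

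For $\tRES \subseteq \NDSs$ I again invoke Proposition~\ref{prop:charnds}, so it suffices, given an unwinding relation $\sim$ on $M$ and an arbitrary $H$ strategy $\pi$, to show $\view_L(\Run{M}) \subseteq \view_L(\Run{M,\pi})$ (the reverse inclusion being trivial). Given a run $\run = s_0 a_1 s_1 \cdots a_n s_n$ with $\view_L(\run) = \vw$, I would build a run $\run' = s_0 b_1 t_1 \cdots b_n t_n$ consistent with $\pi$ by induction on $i$, maintaining the invariant $s_i \sim t_i$. The base case $t_0 = s_0$ is handled by condition~1. At each step I hold the $L$-component fixed, $\proj_L(b_{i+1}) = \proj_L(a_{i+1})$, and set the $H$-component to the move dictated by the strategy, $\proj_H(b_{i+1}) = \pi(\view_H(s_0 b_1 t_1 \cdots b_i t_i))$. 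Condition~3, applied with $a_1 = \proj_H(a_{i+1})$, $a_2 = \proj_H(b_{i+1})$ and common $L$-action $a_3 = \proj_L(a_{i+1})$, supplies a successor $t_{i+1}$ with $s_{i+1} \sim t_{i+1}$, and condition~2 then forces $\obs_L(s_{i+1}) = \obs_L(t_{i+1})$. Since the $L$-actions and $L$-observations of $\run'$ agree with those of $\run$ at every step, $\view_L(\run') = \vw$, while $\run'$ is consistent with $\pi$ by construction.

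The crux of the whole argument is this inductive construction: condition~3 is tailored precisely to let $H$ deviate from the action it took in $\run$ to the action prescribed by $\pi$ while the $L$-action is held fixed and $\sim$ is preserved, with conditions~1 and~2 anchoring the base case and guaranteeing observational agreement. The remaining and, I expect, most delicate task is the strictness of $\tRES \subset \NDSs$: I would exhibit a concrete finite machine that lies in $\NDSs$ yet admits no synchronous unwinding relation. To show non-membership in $\tRES$ I would argue that conditions~1 and~3 force any candidate $\sim$ to relate a pair of states whose $L$-observations differ, contradicting condition~2; establishing membership in $\NDSs$ requires the more global verification that $\view_L(\Run{M,\pi})$ is independent of $\pi$, which is where most of the case analysis will lie.
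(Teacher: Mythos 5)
Your two inclusion arguments are correct and are essentially the paper's own: the ``blind'' time-indexed strategy for $\NDSs \subseteq \NDIs$ and the inductive construction of a $\pi$-consistent run matching the $L$ view step by step, using condition~3 of the unwinding definition to swap the $H$-action while fixing the $L$-action, are exactly what the paper does (the paper even phrases the first inclusion via the same strategy without routing it through Proposition~\ref{prop:charnds}, but that is an immaterial difference). Strictness of $\NDSs \subset \NDIs$ via Example~\ref{example:nds-ndi} is also as in the paper.

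The genuine gap is that you never produce the separating example for $\tRES \subsetneq \NDSs$; you only announce that you would. Since the theorem asserts strictness, this half of the claim is simply not proved in your write-up. Moreover, your sketched route to non-membership in $\tRES$ (``conditions~1 and~3 force $\sim$ to relate states with different $L$-observations'') is only the final step of the argument; the real content lies in choosing a machine where this forcing actually happens while $\tNDS$ still holds. The paper's Example~\ref{example:res-nds} does this by exploiting \emph{nondeterminism of the transition relation on a fixed joint action}: from $s_0$ the joint action $(1,0)$ leads nondeterministically to two states $s_2$ and $s_3$ whose future $L$-observation languages are disjoint ($(00)^*$ versus $(01)^*$), while $(0,0)$ leads to a state $s_1$ from which both futures remain possible. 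Condition~3 then forces $s_1 \sim s_2$ or $s_1 \sim s_3$, and iterating condition~3 along the diverging futures eventually violates condition~2 --- so no unwinding exists. Yet the machine is in $\NDSs$ because $H$'s action does not determine which branch is taken (both $s_2$ and $s_3$ are reachable under every strategy), so no strategy can exclude any $L$ view. Without an example of this kind --- and in particular without the observation that the nondeterministic resolution of a single joint action is what decouples $\NDSs$ from the local state-matching demanded by unwinding --- the strictness claim is unsupported.
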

\begin{proof}

  To show that $\tRES \subseteq \NDSs$ we argue as follows.  Suppose
  that $\sim$ is a synchronous unwinding on $M$.  Let $\vw$ be any
  possible $L$ view, and $\pi$ any $H$ strategy.  We have to show that
  $\vw\in \view_L(\Run{M,\pi})$. For this, let $r = s_0 a_1 s_1 \ldots
  a_n s_n$ be a run such that $\vw = \view_L(r)$. We show that there
  exists a run $r' = s'_0 a'_1 s'_1 \ldots a'_n s'_n$ consistent with
  $\pi$ such that $\vw = \view_L(r')$. We proceed inductively, showing
  for each $i= 0\ldots n$ that $\view_L(s_0 a_1 s_1 \ldots a_i s_i) =
  \view_L(s'_0 a'_1 s'_1 \ldots a'_i s'_i)$ and $s_i \sim s_i'$, where
  $s'_0 a'_1 s'_1 \ldots a'_i s'_i$ is a run consistent with $\pi$.
  In the base case, we have $s_0' = s_0$ and the claim is trivial.
  For the inductive case, let $a = \pi(\view_H(s'_0 a'_1 s'_1 \ldots
  a'_i s'_i))$ and $b = \proj_L(a_{i+1})$, and take $a'_{i+1} =
  (a,b)$. Since $\proj_L(a_{i+1}) = \proj_L(a'_{i+1})$ and $\sim $ is
  an unwinding relation, there exists a state $s'_{i+1}$ such that
  $s'_i \xrightarrow{a'_{i+1}} s'_{i+1}$ and $s_{i+1} \sim s'_{i+1}$.
  Further, we conclude that $\obs_L(s_{i+1}) = \obs_L (s'_{i+1})$.
  Since $s'_0 a'_1 s'_1 \ldots a'_i s'_i$ is consistent with $\pi$, so
  is $s'_0 a'_1 s'_1 \ldots a'_i s'_i a'_{i+1}s'_{i+1}$, and
\[\begin{array}{rcl}
\view_L(s'_0 a'_1 s'_1 \ldots a'_i s'_i a'_{i+1}s'_{i+1})
& = & \view_L(s'_0 a'_1 s'_1 \ldots a'_i s'_i) \proj_L(a'_{i+1})\obs_L(s'_{i+1})\\
& = & \view_L(s_0 a_1 s_1 \ldots a_i s_i) \proj_L(a_{i+1})\obs_L(s_{i+1})\\
& = & \view_L(s_0 a_1 s_1 \ldots a_i s_i a_{i+1} s_{i+1})~,
\end{array}\] as required.

Next we show that $\tNDS \subseteq \tNDI$.  Let $\alpha\in A_H^*$ and
$\vw$ be an $L$ observation satisfying $|\alpha|=|\vw|$. We
construct a ``blind'' $H$ strategy $\pi$ as
$\pi(\vw')=\alpha_{|\vw'|+1}$ if $|\vw'|<|\alpha|$ and $\pi(\vw')=a_H$
otherwise, where $a_H$ is an arbitrary action in $A_H$. Since
$M\in\tNDS$, we have $\view_L(\Run{M,\pi})=\view_L(\Run{M})$, so there
exists a run $r\in\Run{M,\pi}$ such that $\view_L(r)=\vw$. By the
construction of $\pi$ we have $Act_H(r)=\alpha$.

That the inclusions are strict follows from
Example~\ref{example:nds-ndi} and Example~\ref{example:res-nds} below.  \qed
\end{proof}

\begin{figure}[t]
\centering \includegraphics[scale=1]{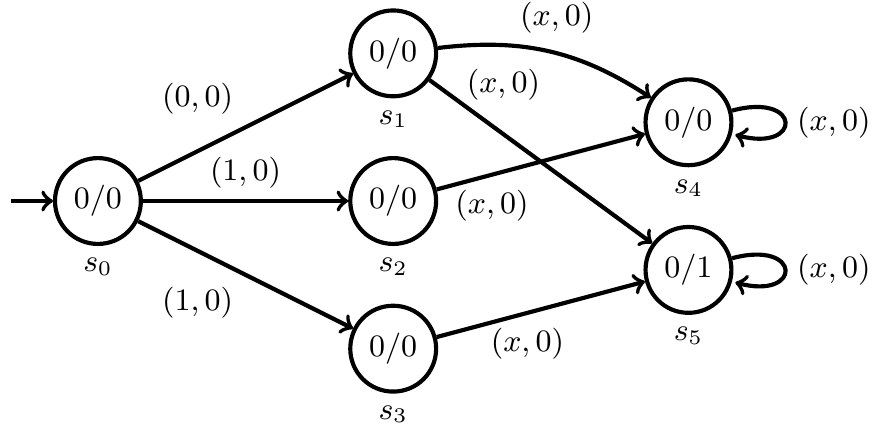}
\caption{A synchronous machine $M$ in $\NDSs$, but not in $\tRES$, where $x\in\set{0,1}$.
Every state $s$ is labelled with a pair $\obs_H(s)/\obs_L(s)$.}\label{fig:res-nds}
\end{figure}

\begin{example}\label{example:res-nds}
  We present a machine in Fig.~\ref{fig:res-nds} that satisfies
  $\tNDS$ but does not satisfy $\tRES$. In this system we let
  $A_H=\set{0,1}$, $A_L=\set{0}$.  We use the conventions from
  Example~\ref{example:nds-ndi}.  One may easily observe that
  the set of $L$ views is given by the regular language
  $000((00)^*+(01)^*)$ and all the views are compatible with every
  possible $H$ strategy. However, there does not exist a synchronous
  unwinding relation. Suppose there were such a relation $\sim$.
  Then $s_0\sim s_0$, and for joint actions
  $(0,0)$ and $(1,0)$, we have $s_0\xrightarrow{(0,0)}s_1$,
  $s_0\xrightarrow{(1,0)}s_2$ and $s_0\xrightarrow{(1,0)}s_3$, and we
  would require $s_1$ to be related to either $s_2$ or $s_3$.
  However, neither $s_2$ nor $s_3$ can be related to $s_1$: from $s_2$
  user $L$ can only observe $(00)^*$ in the future, and from $s_3$
  only $(01)^*$ can be observed by $L$.  Note from $s_1$ both $(00)^*$
  and $(01)^*$ are possible for~$L$.  \qed
\end{example}

In the following sections, we study the complexity of
the notions of security we have defined above.

\section{Synchronous Nondeducibility on Inputs}\label{sec:ndi}

In this section we establish  the following result:
\begin{theorem} For the class of finite state synchronous machines,
$\NDIs$ is \PSPACE-complete with respect to logspace reductions.
\end{theorem}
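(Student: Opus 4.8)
The plan is to establish the two directions separately. For membership in $\PSPACE$, recall the characterization stated earlier: $M\in\NDIs$ iff for every possible $L$ view $\vw$ we have $K_L(\vw)=A_H^{|\vw|}$. Equivalently, $M\notin\NDIs$ iff there is a possible $L$ view $\vw$ together with an $H$ action sequence $\alpha$ with $|\alpha|=|\vw|$ such that \emph{no} run $r$ satisfies both $\view_L(r)=\vw$ and $Act_H(r)=\alpha$. Since $\PSPACE=\NSPACE$ by Savitch, and $\PSPACE$ is closed under complement, it suffices to give a nondeterministic polynomial-space procedure that detects this failure. First I would track, step by step, a single run $r$ whose $L$-view and $H$-action sequence serve as the witnesses $\vw$ and $\alpha$, guessing the joint actions and successor states on the fly so that only the current state is held in memory. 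Simultaneously I would maintain the set of states reachable by \emph{some} run that produces the same $L$-view prefix $\vw[i]$ and the same $H$-action prefix $\alpha[i]$; this is a standard subset-construction (powerset) bookkeeping, and the violation is witnessed exactly when this set becomes empty after some step while the guessed run continues. The crucial point is that the reachable-state set is a subset of $S$, so it needs only linear space in $|S|$, and the whole nondeterministic search therefore runs in space polynomial in the size of $M$.

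The subtlety in the membership argument is to keep the two pieces of guessed data consistent: the guessed witness run fixes the sequence of $L$-observations and $H$-actions, while the tracked set must collect all states consistent with \emph{that same} observable data under \emph{arbitrary} $H$-choices (constrained to match $\alpha$) and the fixed $L$-actions. Concretely, at step $i+1$ the new tracked set is $\{\,t' \mid \exists t \text{ in the old set},\ t\xrightarrow{(\alpha_{i+1},\,\proj_L(a_{i+1}))} t',\ \obs_L(t')=\obs_L(s_{i+1})\,\}$, where $a_{i+1}$ and $s_{i+1}$ come from the guessed run. I expect the main obstacle to be arguing termination and a bound on the length of witnesses: one must show that if a violating $(\vw,\alpha)$ exists at all, then one exists whose length is bounded (by the number of distinct pairs $\langle s, \text{tracked set}\rangle$, i.e.\ exponential but representable in polynomial space), so the nondeterministic procedure can reject runs that exceed this bound without missing any genuine violation. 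This is a pumping-style argument on the product of the run-state and the powerset component.

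For $\PSPACE$-hardness I would reduce from a known $\PSPACE$-complete problem under logspace reductions; the natural candidate is nonuniversality (or the complement, universality) of nondeterministic finite automata, which is $\PSPACE$-complete and whose structure matches the $K_L(\vw)=A_H^{|\vw|}$ condition closely. Given an NFA $\mathcal{N}$ over an alphabet that I encode into $A_H$, I would build a synchronous machine in which $L$'s observations reveal only whether a simulated run of $\mathcal{N}$ has reached an accepting configuration, while the $H$-actions drive the transitions of $\mathcal{N}$; input-enabledness is arranged by adding a sink that absorbs undefined transitions without changing $L$'s observation. The design goal is that the universal property ``every $H$-action sequence of each length is compatible with the trivial $L$-view'' holds exactly when $\mathcal{N}$ accepts all words, so that $M\in\NDIs$ iff $\mathcal{N}$ is universal. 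I expect the hardness direction to be the more delicate half: the encoding must force $L$'s view to leak \emph{no} information in the secure case yet expose a missing $H$-sequence precisely when $\mathcal{N}$ rejects some word, and all gadget states, observation labels, and the sink must be writable in logarithmic space as a function of $\mathcal{N}$. Verifying that the reduction is computable in logspace is routine once the gadget is fixed, so the real work is getting the correspondence between missing accepted words and violated $\NDIs$ exactly right.
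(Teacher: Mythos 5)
Your overall architecture (subset-tracking for membership, reduction from NFA universality for hardness) matches the paper's, but both halves as written contain a genuine gap.

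For membership, the problem is where the witness sequence $\alpha$ comes from. You say you guess ``a single run $r$ whose $L$-view \emph{and} $H$-action sequence serve as the witnesses $\vw$ and $\alpha$.'' If $\alpha = Act_H(r)$, then $r$ itself is a run with $Act_H(r)=\alpha$ and $\view_L(r)=\vw$, so the tracked set $K(\alpha[i],\vw[i])$ always contains the current state $s_i$ of the guessed run and can never become empty: the procedure would never detect any violation and would accept every machine. The witness $\alpha$ must be guessed \emph{independently} of the run $r$; the run $r$ is needed only to certify that $\vw$ is a possible $L$ view, while $\alpha$ is a free sequence in $A_H^{|\vw|}$ that, in a violating instance, is precisely \emph{not} the $H$-action sequence of any run producing $\vw$. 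The paper's $LTS(M)$ makes this decoupling explicit by labelling transitions with triples $(a,b,a')$, where $(a,b)$ drives the underlying run and $a'$ extends $\alpha$. Your update formula, which uses $\alpha_{i+1}$ rather than $\proj_H(a_{i+1})$ in the transition, is consistent with the correct construction, so the fix is local, but as stated the first paragraph describes a procedure that fails. The remaining ingredients (space bound $|S|\cdot 2^{|S|}$ on the number of pairs, Savitch, closure of $\PSPACE$ under complement) are fine.

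For hardness, the choice of which agent drives the automaton matters more than you suggest, and your gadget as sketched does not give the stated equivalence in either direction. If $H$'s actions supply the input word and $L$ observes at each step whether the simulated NFA run is in an accepting state, then (i) when $L(\Aut)$ contains no word of some length $n$, the ``accepting'' observation is simply not a possible $L$ view at time $n$ and imposes no constraint, so $M$ can satisfy $\NDIs$ while $\Aut$ is far from universal; and (ii) even when $\Aut$ is universal, the $L$ view records the whole acceptance-status trajectory of the nondeterministically chosen NFA run, and two words need not realize the same set of trajectories (e.g.\ one letter may lead to both an accepting and a non-accepting state while another leads only to accepting states), so $M$ can fail $\NDIs$ while $\Aut$ is universal. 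A sink ``that does not change $L$'s observation'' repairs input-enabledness but neither of these problems. The paper's construction avoids both by letting $L$ drive the automaton, reducing $H$'s influence to a single initial binary choice $h/h'$, and adding auxiliary states ($s_1$ that can delay the observation $1$ arbitrarily, absorbing states $s_2,s_3$) so that on the $h'$ branch every candidate view is realizable and the only question is whether the $h$ branch can match it --- which happens exactly when $L(\Aut)=\Sigma^*$. Some such machinery is unavoidable; without it the claimed equivalence ``$M\in\NDIs$ iff $\Aut$ is universal'' is false for the machine you describe.
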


\subsection{PSPACE-Easiness}

Stating the definition in the negative, a system is not in $\NDIs$ if
there exists an $L$ view $\vw$ and a sequence of $H$ actions $\alpha$
with $|\alpha|=|\vw|$ such that there exists no run $\run$ with
$Act_H(\run)=\alpha$ and $\view_L(\run)=\vw$.  We show that $\NDIs$ is
decidable by a procedure that searches for such an $L$ view $\vw$ and
$H$ action sequence $\alpha$. The key element of the proof is to show
that we need to maintain only a limited amount of information during
this search, so that we can bound the length of the witness
$(\vw,\alpha)$, and the amount of space needed to show that such a
witness~exists.

To show this, suppose we are given a machine $M=\word{S, A, s_0, \rightarrow, O,
  \obs}$.  Given a sequence $\alpha\in A_H^*$ and
a sequence $\vw \in O(A_LO)^*$, we define the set $K(\alpha, \vw)$ to
be the set of all final states of runs $\run$ of $M$
consistent with $\alpha$ and $\vw$,  \ie
such that $Act_H(\run)=\alpha$ and $\view_L(\run)=\vw$.
For each $a\in A_H$, $b\in A_L$ and $o\in O$, we also define the
function $\delta_{a,b,o}:{\cal P}(S)\rightarrow {\cal P}(S)$, by
$$\delta_{a,b,o}(T) = \{t\in S~|~
  \mbox{for some}~t'\in T~\mbox{we have} ~t'\xrightarrow{(a,b)}
  t~\mbox{and} ~\obs_L(t) = o\}~.$$
For the system $M$
define the labelled transition system $LTS(M) = (Q,\Sigma,q_0,\Rightarrow)$ as
follows:
\begin{enumerate}
\item $Q = S\times {\cal P}(S)$,
\item $q_0 = (s_0,\{s_0\})$,
\item $\Sigma=A_H\times A_L\times A_H$,
\item $\Rightarrow \subseteq Q \times \Sigma \times Q$ is the labelled
  transition relation defined by $(s,T) \Rightarrow^{(a,b,a')}
  (s',T')$ if $a\in A_H$, $b\in A_L$, $a'\in A_H$ such that $s
  \xrightarrow{(a,b)} s'$ and
  $T' = \delta_{a',b,\obs_L(s')}(T)$.
\end{enumerate}
Intuitively, the component $s$ in a state $(s,T) \in Q$ is used to
ensure that we generate an $L$ view $\vw$ that is in fact
possible. The components $a,b$ in a transition $(s,T)
\Rightarrow^{(a,b,a')} (s',T')$ represent the actions used to generate
the run underlying $\vw$, and the component $a'$ is used to generate a
sequence $\alpha$. The set $T$ represents $K(\alpha, \vw)$.  More
precisely, we have the following result:

\begin{lemma} \label{lem:ndilts} If $q_0 \Rightarrow^{(a_1,b_1,a'_1)}
  (s_1,T_1)\Rightarrow\dots \Rightarrow^{(a_n,b_n,a'_n)} (s_n,T_n)$,
  then the sequence $\vw = \obs_L(s_0) b_1 \obs_L(s_1)$ $ \ldots b_n \obs_L(s_n)$
  is a possible $L$ view, 
  and $\alpha = a'_1\ldots a'_n$ is a sequence of $H$ actions such
  that $|\vw| = |\alpha|$ and $ K(\alpha, \vw) = T_n$.

  Conversely, for every possible $L$ view $\vw$ with $|\vw|=n$, and
  sequence of $H$ actions $\alpha = a'_1\ldots a'_n$, there exists a
  path $q_0\Rightarrow^{(a_1,b_1,a'_1)} (s_1,T_1)\Rightarrow\dots
  \Rightarrow^{(a_n,b_n,a'_n)} (s_n,T_n)$ such that $\vw = \obs_L(s_0)
  b_1 \obs_L(s_1) \ldots b_n \obs_L(s_n)$ and $ K(\alpha, \vw) = T_n$.
\end{lemma}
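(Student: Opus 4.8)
The plan is to prove the two directions around a single induction whose heart is the identity $K(\alpha,\vw)=T_n$; the remaining conclusions come essentially for free. Observe first that along any path $q_0 \Rightarrow^{(a_1,b_1,a'_1)} (s_1,T_1)\Rightarrow\cdots\Rightarrow^{(a_n,b_n,a'_n)} (s_n,T_n)$, the definition of $\Rightarrow$ forces $s_{i-1}\xrightarrow{(a_i,b_i)}s_i$ at every step, so $s_0 (a_1,b_1) s_1 \cdots (a_n,b_n) s_n$ is a genuine run of $M$; reading off its $L$-view gives exactly $\vw=\obs_L(s_0)b_1\obs_L(s_1)\cdots b_n\obs_L(s_n)$, which is therefore a possible $L$ view, and $|\vw|=n=|\alpha|$ is immediate. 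Thus the entire content of the first part reduces to the claim $K(\alpha,\vw)=T_n$.

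I would prove this claim by induction on $n$. In the base case $n=0$ we have $T_0=\set{s_0}$, and the only run with no actions is $s_0$ itself, whose $H$-action sequence is $\epsilon=\alpha$ and whose $L$-view is $\obs_L(s_0)=\vw$, so $K(\alpha,\vw)=\set{s_0}=T_0$. For the inductive step, write $\alpha'=a'_1\cdots a'_{n-1}$ and $\vw'=\obs_L(s_0)b_1\cdots b_{n-1}\obs_L(s_{n-1})$, so that $T_{n-1}=K(\alpha',\vw')$ by the induction hypothesis and $T_n=\delta_{a'_n,b_n,\obs_L(s_n)}(T_{n-1})$. The key structural observation is that a run $r=s_0 c_1 \cdots c_n t_n$ satisfies $Act_H(r)=\alpha$ and $\view_L(r)=\vw$ if and only if its joint action at step $i$ is forced to be $c_i=(a'_i,b_i)$ (by $\proj_H(c_i)=a'_i$ and $\proj_L(c_i)=b_i$) and its $L$-observations are forced to be $\obs_L(t_i)=\obs_L(s_i)$. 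Consequently its length-$(n-1)$ prefix is consistent with $(\alpha',\vw')$, and $t_{n-1}\xrightarrow{(a'_n,b_n)}t_n$ with $\obs_L(t_n)=\obs_L(s_n)$. This is precisely matched by $\delta_{a'_n,b_n,\obs_L(s_n)}$: the inclusion $K(\alpha,\vw)\subseteq \delta_{a'_n,b_n,\obs_L(s_n)}(K(\alpha',\vw'))$ follows by truncating a consistent run to its prefix, and the reverse inclusion follows by extending a consistent prefix by the forced last joint action. Combining with the induction hypothesis yields $K(\alpha,\vw)=T_n$.

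For the converse I would start from a possible $L$ view $\vw$ with $|\vw|=n$ and fix a witnessing run $r=s_0 c_1 s_1 \cdots c_n s_n$ with $\view_L(r)=\vw$, naming its states $s_i$ and setting $b_i=\proj_L(c_i)$ and $a_i=\proj_H(c_i)$, so that $\vw=\obs_L(s_0)b_1\obs_L(s_1)\cdots b_n\obs_L(s_n)$; the sequence $\alpha=a'_1\cdots a'_n$ is given. Defining $T_0=\set{s_0}$ and $T_i=\delta_{a'_i,b_i,\obs_L(s_i)}(T_{i-1})$ produces a legal $LTS(M)$ path $q_0\Rightarrow^{(a_1,b_1,a'_1)}(s_1,T_1)\Rightarrow\cdots\Rightarrow^{(a_n,b_n,a'_n)}(s_n,T_n)$, since each step meets both the transition condition $s_{i-1}\xrightarrow{(a_i,b_i)}s_i$ (because $r$ is a run) and the set-update condition by construction. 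This path reproduces the given $\vw$ and $\alpha$, and $K(\alpha,\vw)=T_n$ then follows immediately from the first part already established.

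The step I expect to be the main obstacle is the bookkeeping in the inductive case: verifying carefully that the consistency constraints $Act_H(r)=\alpha$ and $\view_L(r)=\vw$ pin down the joint action $(a'_i,b_i)$ and the observation $\obs_L(s_i)$ at every position, so that the one-step knowledge update coincides exactly with $\delta_{a'_n,b_n,\obs_L(s_n)}$ and not with a coarser or finer set. Everything else — the possibility of $\vw$, the length equality, and the construction of the witnessing path in the converse — is routine once this correspondence between prefix decomposition of consistent runs and the $\delta$-update is set up.
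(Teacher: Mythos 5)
Your proposal is correct and follows essentially the same route as the paper: both rest on the one-step identity $K(\alpha a, v b o) = \delta_{a,b,o}(K(\alpha,v))$ (which you fold into the inductive step rather than isolating as a preliminary claim) and both obtain the converse by reading the witnessing run of $\vw$ into an $LTS(M)$ path. The only organizational difference is that you derive $K(\alpha,\vw)=T_n$ in the converse by citing the forward direction instead of repeating the induction, which is a harmless simplification.
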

\begin{proof}
  We first show that for all $\alpha\in A_H^*$, $v\in O(A_LO)^*$,
  $a\in A_H$, $b\in A_L$ and $o\in O$, we have $K(\alpha a, v b o) =
  \delta_{a,b,o}(K(\alpha, v))$. To show $K(\alpha a, v b o) \subseteq
  \delta_{a,b,o}(K(\alpha, v))$, suppose that $t\in K(\alpha a, v b
  o)$. Then there exists a run $r$ of $M$ such that $\textit{Act}_H(r)
  = \alpha a$ and $\view_L(r) = vbo$ and the final state of $r$ is
  $t$. It follows that $\obs_L(t) = o$.  Thus, we may write $r = r'
  \LT{(a,b)} t$, where $\textit{Act}_H(r') = \alpha$, and $\view_L(r')
  = v$.  Thus, the final state $t'$ of $r'$ is in $K(\alpha,
  v)$. Since $t'\LT{(a,b)}t$ and $\obs_L(t) = o$, it follows that
  $t\in \delta_{a,b,o}(K(\alpha, v))$.

  Conversely, if $t\in \delta_{a,b,o}(K(\alpha, v))$ then by
  definition of $\delta_{a,b,o}$ there exists $t'\in K(\alpha, v)$
  such that $t'\LT{(a,b)} t$ and $\obs_L(t) = o$. By definition of
  $K(\alpha, v)$ there exists a run $r$ of $M$ such that
  $\textit{Act}_H(r) = \alpha$ and $\view_L(r) = v$.  Taking $r'=
  r\LT{(a,b)} t$, we see that $r'$ is a run of $M$ with
  $\textit{Act}_H(r) = \alpha a$ and $\view_L(r') = v b o$. Thus,
  $t\in K(\alpha a, v b o)$, as required.  This completes the proof
  that $K(\alpha a, v b o) = \delta_{a,b,o}(K(\alpha, v))$.

We can now prove the two parts of the result:
\begin{itemize}
\item Suppose $q_0
  \Rightarrow^{(a_1,b_1,a'_1)}(s_1,T_1)\Rightarrow\dots\Rightarrow^{(a_n,b_n,a'_n)}(s_n,T_n)$
  is a run of $LTS(M)$, then by definition
  $r=s_0\LT{(a_1,b_1)}s_1\LT{(a_2,b_2)}\dots\LT{(a_n,b_n)}s_n$ is a
  run of $M$, such that $\view_L(r)=\obs_L(s_0) b_1 \obs_L(s_1) \ldots
  b_n \obs_L(s_n) =\vw$ is a possible $L$ view.
  Moreover, $T_{i+1} = \delta_{a_i,b_i,\obs_L(s_{i+1})}(T_i)$ for $i=
  0\ldots n-1$, where we take $T_0 =\{s_0\}$. Since $T_0 =
  K(\varepsilon, \obs_L(s_0))$, it follows from the above using a
  straightforward induction that $T_n = K(\alpha, v)$, where $\alpha =
  a_1'\ldots a_n'$.

\item Let $\vw$ be a possible $L$ view, then there exists a run
  $r=s_0\LT{(a_1,b_1)}s_1\LT{(a_2,b_2)}\dots\LT{(a_n,b_n)}s_n$ of $M$
  such that $\view_L(r)=\vw$. Given a sequence of $H$ actions $\alpha
  = a'_1\ldots a'_n$,
  we inductively define $T_0 = \{s_0\}$ and $T_{i+1} =
  \delta_{a_i,b_i,\obs_L(s_{i+1})}(T_i)$. It is then immediate by
  definition that we have a path
  $(s_0,\set{s_0})\Rightarrow^{(a_1,b_1,a'_1)}(s_1,T_1)\Rightarrow\dots\Rightarrow^{(a_n,b_n,a'_n)}
  (s_n,T_n)$ in $LTS(M)$. By a straightforward induction using what
  was proved above, we have that $T_n = K(\alpha, \vw)$, where $\alpha
  = a_1'\ldots a_n'$ and $\vw = \obs_L(s_0) b_1 \obs_L(s_1) \ldots b_n
  \obs_L(s_n)$.  \qed
\end{itemize}
\end{proof}
We now note that for an $H$ action sequence $\alpha$ and a possible
$L$ view $\vw$, with $|\vw| = |\alpha|$, there exists no run $r$ such
that $Act_H(\run)=\alpha$ and $\view_L(\run)=\vw$ iff $K(\alpha,\vw) =
\emptyset$.  The existence of such a pair $(\alpha, \vw)$, is
therefore equivalent, by Lemma~\ref{lem:ndilts}, to the existence of a
path in $LTS(M)$ from $q_0$ to a state $(s,T)$ with $T= \emptyset$.
This can be decided in $\NSPACE(O(|M|)) = \DSPACE(O(|M|^2))\subseteq
\PSPACE$.  This proves the following theorem.
\begin{theorem}
$M \in \NDIs$ is decidable in $\PSPACE$.
\end{theorem}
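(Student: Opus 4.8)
The plan is to reduce the negation of the $\NDIs$ condition to a reachability problem in the transition system $LTS(M)$, and then to argue that this reachability can be tested within the required space bound. Starting from the negative form of the definition already recorded, $M \notin \NDIs$ holds exactly when there is a pair $(\alpha, \vw)$, with $\vw$ a possible $L$ view and $|\alpha| = |\vw|$, for which no run $r$ satisfies $Act_H(r) = \alpha$ and $\view_L(r) = \vw$. Since $K(\alpha, \vw)$ is by definition the set of final states of precisely such runs, the absence of any witnessing run is the same as $K(\alpha, \vw) = \emptyset$. So the entire decision problem reduces to detecting whether some pair $(\alpha, \vw)$ has empty $K(\alpha, \vw)$.

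Next I would invoke Lemma~\ref{lem:ndilts} to turn this existence question into reachability. By its first part, any path of $LTS(M)$ from $q_0$ to a state $(s, T)$ has $T = K(\alpha, \vw)$ for the pair $(\alpha, \vw)$ read off the labels along the path; conversely, by its second part, every possible $L$ view together with every equally long $\alpha$ is generated by some such path whose final second component equals $K(\alpha, \vw)$. Hence a pair $(\alpha, \vw)$ with $K(\alpha, \vw) = \emptyset$ exists if and only if some state $(s, T)$ with $T = \emptyset$ is reachable from $q_0$ in $LTS(M)$; equivalently, $M \in \NDIs$ iff no state whose $\mathcal{P}(S)$-component is empty is reachable.

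The complexity estimate is the only remaining point, and is where the care lies. The state space $Q = S \times \mathcal{P}(S)$ of $LTS(M)$ is exponential in $|M|$, so it must not be built explicitly. The key observation is that a single configuration $(s, T)$ is storable in $O(|M|)$ space: the state $s$ needs $\lceil \log |S| \rceil$ bits and the subset $T$ a bitvector of length $|S|$. A nondeterministic search can therefore guess the path one step at a time, at each stage holding only the current configuration and computing a successor $(s', T')$ with $T' = \delta_{a', b, \obs_L(s')}(T)$ from a guessed label $(a, b, a')$ and the transition relation of $M$, accepting as soon as it reaches some $(s, T)$ with $T = \emptyset$. This places the complement of $\NDIs$ in $\NSPACE(O(|M|))$, and by Savitch's theorem $\NSPACE(O(|M|)) = \DSPACE(O(|M|^2)) \subseteq \PSPACE$; since deterministic space classes are closed under complement, $\NDIs$ itself is in $\PSPACE$.

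The main obstacle is conceptual rather than computational, and it has already been absorbed into Lemma~\ref{lem:ndilts}: one must simultaneously track a genuinely realizable $L$ view (the job of the $s$-component, which guarantees the guessed view is possible) and the full set of states consistent with the \emph{independently} guessed $H$-action sequence $\alpha$ (the job of the $T$-component). The subset-construction character of $T$ is exactly what forces the exponential state count, and the one subtlety in the bound is recognising that an exponential number of states does not preclude polynomial space, because each individual configuration is only linear in size and the on-the-fly successor computation via $\delta$ never materialises $LTS(M)$.
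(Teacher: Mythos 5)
Your proposal is correct and follows essentially the same route as the paper: negate the definition, observe that the non-existence of a witnessing run is equivalent to $K(\alpha,\vw)=\emptyset$, invoke Lemma~\ref{lem:ndilts} to recast this as reachability of a state $(s,T)$ with $T=\emptyset$ in $LTS(M)$, and perform the search on the fly in nondeterministic linear space, concluding via Savitch's theorem. The only difference is that you spell out the space accounting (linear-size configurations, on-the-fly successor computation, closure of deterministic space under complement) more explicitly than the paper, which simply asserts the $\NSPACE(O(|M|))$ bound.
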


We note, moreover, that since there are at most $|S|\times 2^{|S|}$ states in $Q$,
if there exists a pair $(\alpha, \vw)$ witnessing that $M \not \in \NDIs$
there exists such a pair with $|\alpha| \leq |S|\times 2^{|S|}$.

\subsection{PSPACE-Hardness}

We show that $\NDIs$ is \PSPACE-hard already in the special case of
scheduled machines. The proof is by a polynomial time reduction from
the problem of deciding, given a non-deterministic finite state automaton 
$\Aut$ on alphabet $\Sigma$, if the language $L(\Aut)$ accepted by $\Aut$
is equal to $\Sigma^*$. This \emph{Universality} problem is
PSPACE-hard~\cite{SM73}.

Let $\Aut = \word{Q, Q_0, \Sigma, \delta, F}$ be a non-deterministic
finite state automaton (without $\varepsilon$-transitions), with
states $Q$, initial states $Q_0\subseteq Q$, alphabet $\Sigma$,
transition function $\delta: Q\times \Sigma \rightarrow {\cal P}(Q)$,
and final states $F$. We define $M(\Aut)=\word{S,
  A, s_0, \rightarrow,\obs, O}$ to be a scheduled machine, and use a
function $\sched:S\rightarrow \{H,L\}$ to indicate the agent (if any)
whose actions determine transitions. In view of this, when $\sched (s)
= u$ and $a\in A_u$, we may write $s\xrightarrow{a} t $ to represent
that $s\xrightarrow{b} t$ for all joint actions $b$ with $\proj_u(b)
=a$.  The components of $M(A)$ are defined as follows.
\begin{itemize}
\item $S = Q\cup \set{s_0,s_1,s_2,s_3}$, where $Q\cap\set{s_0,s_1, s_2, s_3}=\emptyset$,
\item $\sched(s_0)=H$ and $\sched(s)=L$ for all $s\in S\setminus\set{s_0}$,
\item $A = A_H\cup A_L$ where $A_L=\Sigma$ and $A_H=\set{h,h'}$,
\item $O=\set{0,1}$,
\item $\obs:\{H,L\}\times S\rightarrow O$ with $\obs_H(s) = 0$ for all $s\in S$
      and $\obs_L(s) = 0$ for all $s\in S\setminus\set{s_2}$, and $\obs_L(s_2) = 1$.
\item $\longrightarrow \subseteq  S\times A\times S$ is defined as
      consisting of the following transitions (using the convention noted above)
      \begin{itemize}
 \item $s_0 \xrightarrow{\ h\ } q$ for all $q\in Q_0$,
 \item $s_0 \xrightarrow{\ h\ } s_2$, provided $Q_0\cap F \neq \emptyset$,
 \item $s_0 \xrightarrow{\ h'\ } s_1$
          and $s_0 \xrightarrow{\ h' \ } s_2$,
      \item $s_1 \xrightarrow{\ a\ } s_1$ and $s_1 \xrightarrow{\ a\ } s_2$ for all $a\in \Sigma$,
      \item $s_2 \xrightarrow{\ a\ } s_2$ for all $a\in \Sigma$,
      \item $s_3 \xrightarrow{\ a\ } s_3$ for all $a\in \Sigma$,
      \item for $q,q' \in Q$ and $a\in A_L = \Sigma$ we have
            $q\xrightarrow{\ a\ } q'$ for all $q'\in \delta(q,a)$
      \item for $q\in Q$ and $a\in A_L= \Sigma$ such that
            $\delta(q,a) \cap F \neq \emptyset$, we have
            $q\xrightarrow{ \ a\ } s_2$,
      \item for $q\in Q$ and $a\in A_L= \Sigma$ such that
            $\delta(q,a)=\emptyset$, we have $q\xrightarrow{\ a\ } s_3$.
      \end{itemize}
\end{itemize}
The construction of $M(\Aut)$ from $\Aut$ can be done in logspace.

Intuitively, the runs of $M(\Aut)$ produce two sets of $L$ views,
depending on whether the first $H$ action is $h$ or $h'$.  In all
circumstances, runs in which $H$ does $h'$ in the first step,
with the first transition to $s_1$, produce an $L$ view for each
sequence in $0\Sigma 1 (\Sigma 1)^*$ or $0\Sigma 0(\Sigma 0)^*(\Sigma
1)^*$ by switching from to $s_2$ at the first occurrence of
observation $1$.  Runs in which $H$ does $h$ in the first step
with a transition to a state in $Q_0$, correspond to simulations of
$\Aut$ and produce two types of $L$ views:
\begin{enumerate}
\item any sequence in $0\Sigma 0(\Sigma 0)^*$
  (by means of a run that stays in $Q$ for as long as possible, and
  moves to $s_3$ whenever an action is not enabled), and
\item any sequence of the form
  $0\Sigma 0 b_1 0 \ldots b_{n-1} 0 b_n 1 (\Sigma 1)^*$ with
  $b_1\ldots b_n \in L(\Aut)$
  (these come from runs that pass through $Q$ and then jump to $s_2$).
\end{enumerate}
  In case $\varepsilon\in L(\Aut)$,
  i.e., $Q_0\cap F \neq \emptyset$, then
  also all sequences in $0\Sigma 1(\Sigma 1)^*$ are produced as the
  $L$ view on a run in which the first transition, with $H$ action
  $h$, is to $s_2$. %

Note that since $L$ is always scheduled after the first step, replacing any
action by $H$ after the first step in a run by any other action of $H$ results
in another run, with no change to the $L$ view. Thus, the only thing that
needs to be checked to determine whether $M(\Aut) \in \tNDI$ is whether the
same can be said for the first step.
Moreover, it can be seen from the above that, for all $\Aut$, and
independently of whether $\varepsilon \in L(\Aut)$, any $L$ view
obtained from a run in which the first $H$ action is $h$ can also be
obtained from a run in which the first $H$ action is $h'$. Thus, to
show $M(\Aut)\in \tNDI$, it suffices to check that any $L$ view
obtained from a run in which the first $H$ action is $h'$ can also be
obtained from a run in which the first action is $h$.

\begin{proposition} 
  $L(\Aut)=\Sigma^*$ iff $M(\Aut)\in\NDIs$.
\end{proposition}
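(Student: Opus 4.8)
The plan is to characterize the possible $L$ views of $M(\Aut)$ according to the first action taken by $H$ and to reduce $\NDIs$ to a containment between the two resulting view sets. Call a possible $L$ view an \emph{$h$-view} (resp.\ \emph{$h'$-view}) if it is the $L$ view of a run whose first $H$ action is $h$ (resp.\ $h'$). Since $L$ is scheduled at every state other than $s_0$, the $H$ actions taken after the first step affect neither the transitions nor the $L$ view; hence a run realizing a given $L$ view together with a prescribed $H$ action sequence $\alpha\in A_H^{|\vw|}$ exists iff a run with the same $L$ view and first $H$ action $\alpha_1$ exists. Consequently $M(\Aut)\in\NDIs$ iff every possible $L$ view is simultaneously an $h$-view and an $h'$-view, and since every $h$-view is already an $h'$-view (as noted in the construction just above), the whole statement reduces to showing that every $h'$-view is an $h$-view, and that this holds exactly when $L(\Aut)=\Sigma^*$.

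First I would pin down the shape of the two view sets. Because $s_1$ accepts every letter with observation $0$ while $s_2$ is the observation-$1$ absorbing state, the $h'$-runs realize precisely those views whose observation sequence is monotone of the form $0^k1^{\ell}$ (with $k\geq 1$) carrying an \emph{arbitrary} action word, the switch from $0$ to $1$ occurring either at the step the run passes from $s_1$ to $s_2$ or immediately via $s_0\xrightarrow{h'}s_2$. The $h$-runs instead simulate $\Aut$: observation $1$ can be produced only by entering $s_2$, which from $q\in Q$ on letter $a$ is possible exactly when $\delta(q,a)\cap F\neq\emptyset$, whereas $s_3$ is an observation-$0$ absorbing sink used when a transition is undefined. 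The key lemma I would isolate is therefore: an $h$-run produces observation $1$ for the first time at position $p$ of a view iff the length-$(p-1)$ action word read while the run is inside $Q$ lies in $L(\Aut)$, with the degenerate case $p=1$ (i.e.\ $\epsilon\in L(\Aut)$) corresponding to $Q_0\cap F\neq\emptyset$ and the transition $s_0\xrightarrow{h}s_2$.

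For the direction $L(\Aut)=\Sigma^*\Rightarrow M(\Aut)\in\NDIs$, I would take an arbitrary $h'$-view, read off its monotone pattern $0^k1^{\ell}$ and its action word, and build a matching $h$-run: use $s_0\xrightarrow{h}q$ with $q\in Q_0$, follow an accepting path of $\Aut$ on the action prefix read before the switch, and move to $s_2$ at exactly the right step; universality guarantees that whatever prefix the view prescribes is accepted, so the enabling condition $\delta(q,a)\cap F\neq\emptyset$ is always available (the all-zeros case is handled by remaining inside $Q$ or falling into $s_3$, and $p=1$ by $\epsilon\in L(\Aut)$). Conversely, for $M(\Aut)\in\NDIs\Rightarrow L(\Aut)=\Sigma^*$ I would argue the contrapositive: given $w=b_1\cdots b_n\notin L(\Aut)$, consider the $h'$-view of length $n+1$ with observation pattern $0^{n+1}1$ whose action word after the irrelevant first $L$ action is exactly $w$, realized by staying in $s_1$ for $n$ steps and then moving to $s_2$. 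By the key lemma no $h$-run can realize it, since producing the first $1$ at that position would force $w\in L(\Aut)$, while the other $h$-run shapes ($s_3$-sink, or direct $s_0\xrightarrow{h}s_2$) yield incompatible observation patterns. Taking $\alpha=h^{n+1}$ then exhibits a possible $L$ view and an equal-length $H$ action sequence with no realizing run, so $M(\Aut)\notin\NDIs$.

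The step I expect to be the main obstacle is establishing the key lemma rigorously, in particular verifying that no $h$-run can fake the observation switch at a wrong position: one must check that the observation-$0$ states reachable under $h$ (namely $Q$ together with the sink $s_3$) force the monotone $0^\ast 1^\ast$ shape, that $s_1$ is genuinely unreachable once the first $H$ action is $h$, and that the absorbing behaviour of $s_2$ and $s_3$ leaves simulation of $\Aut$ as the only route to observation $1$. Folding the empty-word boundary case ($Q_0\cap F$) uniformly into the general argument is the other point requiring care.
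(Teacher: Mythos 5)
Your proposal is correct and follows essentially the same route as the paper's proof: the reduction of $\NDIs$ to the containment of $h'$-views in $h$-views (since only the first $H$ action can affect the transition relation), the case analysis on where the first observation $1$ occurs, the use of universality to build a matching simulation run of $\Aut$, and the same witness view $0\,a_0\,0\,b_1\,0\cdots b_n\,1$ for the contrapositive. The ``key lemma'' you isolate (first occurrence of observation $1$ at position $p$ in an $h$-run iff the length-$(p-1)$ word read inside $Q$ is in $L(\Aut)$, with the $p=1$ case handled by $Q_0\cap F\neq\emptyset$) is exactly the fact the paper argues inline, so no substantive difference remains.
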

\begin{proof}
For the `only if' part, suppose
  $L(\Aut)=\Sigma^*$. We show that  $M(\Aut)\in \tNDI$.
  As argued above, it suffices to show that any view obtained from a
  run in which the first $H$ action is $h'$ can also be obtained from
  a run in which the first action is $h$.  Let $r= s_0 (h', b_1) t_1 $
  $\ldots (a_n,b_n) t_n$ be a run of $M(\Aut)$, with the $a_i \in A_H$
  and the $b_i \in A_L$. If $t_1 = s_2$ then since $\varepsilon \in
  L(\Aut)$, simply by replacing the first transition by $s_0 \LT{(h,
    b_1)} s_2$ we obtain a run with first $H$ action $h$ that has
  exactly the same $L$ view.  Otherwise $t_1 = s_1$.  If all
  $\obs_L(t_i) = 0$ then we may construct a run of $M(\Aut)$ with the
  same $L$ view as $r$ by taking any transition into $Q$ in the first
  step, then remaining within $Q$ throughout, or making a transition
  to $s_3$ if there is no enabled transition of $\Aut$ on the given
  input $b_i$. Otherwise, let $i$ be the least index with $\obs_L(t_i)
  =1$. Since $L(\Aut) = \Sigma^*$, there exists a run $q_0 \LT{b_2}
  q_1 \ldots \LT{b_i} q_{i-1}$ of $\Aut$ with $q_{i-1}\in F$. By
  construction of $M(\Aut)$, it then follows that
$$r'= s_0 \LT{(h, b_1)} q_0 \LT{(h, b_1)} q_1 \ldots q_{i-2} \LT{(a_i,b_i)} s_2 \ldots \LT{(a_n,b_n)}  s_2$$
is a run with the same $L$ view as $r$ but with first $H$ action $h$.
Thus, $M(\Aut)\in \tNDI$.

For the `if' part, suppose there is a word $w=a_1a_2\dots a_n\not\in L(\Aut)$.
  If $w = \varepsilon$, then the transition $s_0 \LT{h} s_2$ is not
  present in $M(\Aut)$, so there is no run starting with $H$ action
  $h$ that produces the $L$ view $0 b 1$ (with $b\in \Sigma$) that we
  get from $s_0 \LT{(h',b)} s_1$.  Otherwise, for an arbitrary $a_0\in
  \Sigma$, the $L$ view $0a_0 0a_10a_2\dots a_n1$ cannot be obtained
  from runs in which the first $H$ action is $h$, because otherwise
  $w$ would be accepted by $\Aut$. However this view is obtained from
  a run in which the first action is $h'$.  Therefore
  $M(\Aut)\not\in\NDIs$. \qed
\end{proof}

As pointed out at the beginning of this section, this lower bound
result already holds for scheduled machines, and thus $\tNDI$ is
already PSPACE-hard for this subclass.

\section{Nondeducibility on Strategies} \label{sec:nds}

In this section we establish the following theorem:
\begin{theorem}\label{thm-NDS} For the class of finite state synchronous machines,
  and with respect to log-space reductions, $\NDSs$ is \EXPSPACE-complete.
\end{theorem}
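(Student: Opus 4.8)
The plan is to prove both directions separately: an \EXPSPACE\ upper bound by reducing to reachability in a doubly-exponential ``belief'' structure, and a matching \EXPSPACE-hardness by reducing from an exponential-space computation problem. Throughout I work with the negation together with the characterization of Proposition~\ref{prop:charnds}: $M\notin\NDSs$ iff there is an $H$ strategy $\pi$ and an $L$ view $\vw\in\view_L(\Run{M})\setminus\view_L(\Run{M,\pi})$, that is, an \emph{achievable} $L$ view that some strategy \emph{blocks}.

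For the upper bound I would search for such a pair $(\pi,\vw)$, building $\vw$ one letter at a time while maintaining two pieces of information relative to the current $L$-view prefix. The first is a set $P\subseteq S$ of states reachable by \emph{unconstrained} runs realising the prefix; this certifies achievability, exactly as the set $K(\alpha,\vw)$ does in the $\NDIs$ construction. The second is a set $\mathcal{G}\subseteq\mathcal{P}(S)$ whose elements are the current state-sets of the $\pi$-consistent surviving runs realising the prefix, \emph{grouped by $H$ view}. Two observations make this work cleanly: (i) the future of a group depends only on its state-set and on the $H$ actions the strategy will assign to its (future) views; and (ii) since distinct surviving $H$ views are always distinct extensions, a blocking strategy may without loss of generality treat two groups with equal state-sets identically (copy, onto the future of one, the sub-strategy used on the other). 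Hence collapsing $\mathcal{G}$ to a \emph{set} of state-sets loses nothing for either soundness or completeness. The search then proceeds by existentially picking the next $L$ action $b$, next $L$ observation $o$, and, for each group $G\in\mathcal{G}$, an $H$ action $a_G$ (this is precisely the freedom of choosing $\pi$ on the relevant $H$ views). Each group evolves by taking all matching successors and re-splitting by the next $H$ observation $p$, giving $G'_{p}=\{t:\exists s\in G,\ s\LT{(a_G,b)}t,\ \obs_L(t)=o,\ \obs_H(t)=p\}$, while $P'=\{t:\exists s\in P,\ a\in A_H,\ s\LT{(a,b)}t,\ \obs_L(t)=o\}$. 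Starting from $(\{\{s_0\}\},\{s_0\})$ with $\obs_L(s_0)=o$ for the chosen initial observation, $M\notin\NDSs$ iff one can reach a configuration with $\mathcal{G}=\emptyset$ (every $\pi$-consistent run has been killed) but $P\neq\emptyset$ (the view is genuinely achievable). A configuration lies in $\mathcal{P}(\mathcal{P}(S))\times\mathcal{P}(S)$, so there are at most $2^{2^{|S|}}\cdot 2^{|S|}$ of them, each stored in $2^{O(|S|)}$ space; this reachability is therefore decidable in $\NSPACE(2^{O(|S|)})\subseteq\EXPSPACE$ by Savitch's theorem.

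For the lower bound I would reduce from an \EXPSPACE-complete problem, most naturally acceptance for an exponential-space-bounded Turing machine (equivalently an exponential-width tiling problem), producing a scheduled synchronous machine of polynomial size. The intended encoding is that an $L$ view spells out a claimed computation history, with each configuration of exponential width presented cell by cell and successive configurations separated by markers, so that the distinguished ``accepting'' history is an achievable $L$ view. The role of $H$ is to \emph{check} the history, and the leverage is that $H$ has perfect recall of its (possibly doubly-exponentially long) view: a strategy can nondeterministically fix a cell address and, using the synchronous clock, compare it against the corresponding cell one configuration (exponentially many steps) later, catching any local violation of the transition rule. One then engineers the gadget so that $H$ possesses a strategy blocking the distinguished achievable view exactly when the encoded computation is faulty, tying $M\notin\NDSs$ to (non-)acceptance.

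The main obstacle is this hardness gadget. The delicate part is to make $H$'s view-based, perfect-recall strategy perform the exponential-distance consistency checks using only local, polynomially-described observations and polynomially many address bits, and to ensure that the \emph{only} $L$ views $H$ can block are precisely those witnessing a violated (or accepting) computation, with no spurious strategy able to block a legitimate view; the correctness proof must then argue both directions of this equivalence. By contrast the upper-bound construction is comparatively routine once observation~(ii) is in hand, the one subtle point there being to justify that the collapse by state-set is simultaneously sound and complete.
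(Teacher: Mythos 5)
Your upper bound is correct and is essentially the paper's own argument: your pair $(P,\mathcal{G})$ is the paper's transition system $T(M)$ over $\power{S}\times\power{\power{S}}$, your observation~(ii) is exactly the normal-form result (Lemma~\ref{lem:normalform}, that a blocking strategy may be assumed to act uniformly on $H$ views of equal length with equal surviving state-sets), your per-group action choice $a_G$ is the map $\rho:\mathcal{K}\rightarrow A_H$ labelling transitions, and the target condition and $2^{O(|S|)}$ space bound via Savitch coincide. That half stands, modulo actually writing out the two inductions that the paper records as Lemma~\ref{lem:normalform} (the representative-choosing construction of $f$ and $g$) and Lemma~\ref{lem:ndsreduction}.

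The hardness half is where the proposal has a genuine gap, and the obstacle you flag is not a loose end but the crux. The paper does not reduce from exponential-space Turing machine acceptance; it reduces from Reif's BLIND-PEEK, and that choice is forced by the only globalisation mechanism your gadget (and the paper's) has available: the machine nondeterministically commits to monitoring one component, $H$ cannot see which, and a view is excluded only if \emph{every} $\pi$-consistent run --- hence every monitored component --- is steered away from it. This works for BLIND-PEEK because the game state is $n$ independent plate bits and the winning condition is a DNF whose truth decomposes into an independent local test at each plate, so polynomially many monitored components suffice and each test consults only that component's own bit. For a $2^n$-cell tape the analogous decomposition fails: the transition-consistency check must match a cell of configuration $i$ against the cell at the \emph{same address} in configuration $i+1$, and an address is $n$ correlated bits. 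One monitored component per address is ruled out (there are $2^n$ of them), and if you monitor address bits separately, no single component can certify that the two cells being compared actually carry equal addresses. The standard bit-by-bit challenge tricks for such situations require an adversary who picks the challenged bit \emph{after} a commitment, and it is not clear how to realise that extra alternation inside the fixed quantifier pattern of $\neg\tNDS$ (one existential $H$ strategy, one existential $L$ view, a universal over runs). Until that gadget is built and both directions of its correctness proved, the lower bound is a plan rather than a proof.

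A secondary but substantive point: as stated, your intended equivalence points the wrong way. If $H$ can block the distinguished view exactly when the encoded history is \emph{faulty}, then $M\notin\NDSs$ whenever some achievable view encodes a faulty history --- which holds essentially always, independently of acceptance. What is needed is that a view is blockable exactly when it encodes a \emph{valid accepting} computation, so that $M\notin\NDSs$ iff the machine accepts; this forces $H$'s strategy to certify correctness at every monitored component simultaneously rather than to exhibit one error, and it is precisely this ``for all plates at once'' structure, together with ruling out spurious blocking strategies, that occupies the long ``if'' direction of the paper's correctness lemma (via the truthfulness analysis and Propositions~\ref{prop:undecided}--\ref{prop:LH}).
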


\subsection{EXPSPACE-Easiness}

For the proof that \NDSs\ is decidable in \EXPSPACE, we show that the
problem is in \DSPACE$(2^{O(|M|)})$.
It is convenient for this section to consider strategies $\pi$ that
are defined over the larger set $\pvh = O(A_H O)^*$ of candidate views
of $H$, rather than the subset $\view_H(\Run{M})$ of possible views.

 We use the characterization of $\NDSs$ given in
 Proposition~\ref{prop:charnds}.  Let $\pi$ be an $H$
 strategy, and $\beta$ be an $L$ view. 
 Say that $\pi$ {\em excludes} $\beta$ if there does not exist a run
 $r$ consistent with $\pi$ such that $\beta = \view_L(r)$.  Since
 always $\Run{M,\pi} \subseteq \Run{M}$, by
 Proposition~\ref{prop:charnds}, a system $M$ satisfies $\tNDS$ if and
 only if it is not the case that there exists a possible $L$ view
 $\beta$ in $M$ and a strategy $\pi$ such that $\pi$ excludes $\beta$.

Our decidability result and complexity bound is obtained by showing
that if such a strategy exists, then there is one of a particular
normal form, and it can be found using a space-bounded search. The
normal form strategies have a uniform structure, in that the choice of
next action on an $H$ view depends only on the length of the view and
the set of states that $H$ considers possible after that view, given
that $L$'s view $\beta$ has not yet been excluded.  We call this set
of states $H$'s {\em knowledge set}.

More precisely, the knowledge sets are defined as follows.  Given a
candidate $H$ view $\alpha \in O(A_HO)^*$ and a (to be excluded)
candidate $L$ view $\beta \in O(A_LO)^*$ with $|\alpha| \leq |\beta|$,
define $K(\alpha,\pi,\beta)$ to be the set of all final states of runs
$r$ consistent with $\pi$ such that $\view_H(r) = \alpha$ and
$\view_L(r)$ is a prefix of $\beta$.

These knowledge sets can be obtained in an incremental way using the
update operators
$\delta_{a_H,o_H,a_L,o_L}:\power{S} \rightarrow \power{S}$ defined for
each $a_H\in A_H$, $a_L\in A_L$, and $o_H, o_L\in O$, to map $T\in
\power{S}$ to
\[ 
\delta_{a_H,o_H,a_L,o_L}(T) = \{ s \in S \ | \
         \exists t \in T \text{ with }
    t \LT{(a_H,a_L)} s \text{ and } \obs_H(s) = o_H \text{ and } \obs_L(s) = o_L\}\mathpunct.
\]
The incremental characterisation is given in the following lemma.

\begin{lemma} \label{lem:delta} Suppose that $\pi(\alpha) = a_H$ and
  $|\alpha| = |\beta|$. Then $
  \delta_{a_H,o_H,a_L,o_L}(K(\alpha,\pi,\beta)) =
    K(\alpha a_H o_H,\pi, \beta a_Lo_L)$.
 \end{lemma}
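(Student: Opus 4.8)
The plan is to establish the set equality by proving the two inclusions separately, following the same pattern as the proof of Lemma~\ref{lem:ndilts} in the previous section, but now carrying both the $H$ view and the $L$ view along each run and insisting on consistency with $\pi$. The one observation that drives everything is that, because $|\alpha| = |\beta|$, any run $r$ with $\view_H(r) = \alpha$ has exactly $|\alpha| = |\beta|$ joint actions, so $\view_L(r)$ also contains exactly $|\beta|$ actions; hence the requirement that $\view_L(r)$ be a prefix of $\beta$ collapses to $\view_L(r) = \beta$. Thus $K(\alpha,\pi,\beta)$ is precisely the set of final states of runs $r\in\Run{M,\pi}$ with $\view_H(r) = \alpha$ and $\view_L(r) = \beta$, and similarly $K(\alpha a_H o_H,\pi,\beta a_L o_L)$ (whose two views now have $|\beta|+1$ actions each) is the set of final states of runs $r'$ consistent with $\pi$ with $\view_H(r') = \alpha a_H o_H$ and $\view_L(r') = \beta a_L o_L$.

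For the inclusion from left to right, I would take $s \in \delta_{a_H,o_H,a_L,o_L}(K(\alpha,\pi,\beta))$, so that there is some $t \in K(\alpha,\pi,\beta)$ with $t \LT{(a_H,a_L)} s$ and $\obs_H(s) = o_H$, $\obs_L(s) = o_L$. I would pick a witnessing run $r \in \Run{M,\pi}$ ending in $t$ with $\view_H(r) = \alpha$ and $\view_L(r) = \beta$, and extend it to $r' = r \LT{(a_H,a_L)} s$. This extension is consistent with $\pi$ because $r$ is and the new $H$ action $a_H$ equals $\pi(\view_H(r)) = \pi(\alpha)$, which is $a_H$ by hypothesis; appending the step adds $\proj_H(a_H,a_L)\,\obs_H(s) = a_H o_H$ to the $H$ view and $\proj_L(a_H,a_L)\,\obs_L(s) = a_L o_L$ to the $L$ view, giving $\view_H(r') = \alpha a_H o_H$ and $\view_L(r') = \beta a_L o_L$. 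Hence $s$, the final state of $r'$, lies in $K(\alpha a_H o_H,\pi,\beta a_L o_L)$.

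For the reverse inclusion, I would start from a run $r'$ consistent with $\pi$ with $\view_H(r') = \alpha a_H o_H$ and $\view_L(r') = \beta a_L o_L$ whose final state is $s$. Since its $H$ view has $|\alpha|+1$ actions, $r'$ has the form $r \LT{(a'_H,a'_L)} s$ for some run $r$ with $|\alpha|$ actions. Reading off the last action and observation of each view forces $a'_H = a_H$, $\obs_H(s) = o_H$ and, because $\view_L(r') = \beta a_L o_L$, also $a'_L = a_L$, $\obs_L(s) = o_L$, together with $\view_H(r) = \alpha$ and $\view_L(r) = \beta$. As $r$ is a prefix of $r'$ it is again consistent with $\pi$, so its final state $t$ belongs to $K(\alpha,\pi,\beta)$; and since $t \LT{(a_H,a_L)} s$ with $\obs_H(s) = o_H$ and $\obs_L(s) = o_L$, we conclude $s \in \delta_{a_H,o_H,a_L,o_L}(K(\alpha,\pi,\beta))$.

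The argument is almost entirely routine; the only place demanding care is the bookkeeping around the length hypothesis $|\alpha|=|\beta|$, which is what turns the ``prefix of $\beta$'' clause in the definition of $K$ into an exact equality and thereby lets the last action and observation of the extended $L$ view be identified unambiguously with $a_L$ and $o_L$. I would also make explicit that consistency with $\pi$ is preserved both under deleting the final step of a run and under appending a step whose $H$ action matches $\pi$ on the current $H$ view, since both directions of the proof rely on this and on the hypothesis $\pi(\alpha) = a_H$.
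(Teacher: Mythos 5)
Your proof is correct and takes essentially the same route as the paper's: both inclusions are handled by extending a witnessing run by the transition labelled $(a_H,a_L)$ (using $\pi(\alpha)=a_H$ for consistency with $\pi$) and, conversely, by truncating the last step of a run witnessing membership in $K(\alpha a_H o_H,\pi,\beta a_L o_L)$. Your explicit observation that $|\alpha|=|\beta|$ collapses the ``prefix of $\beta$'' clause to an equality is a detail the paper uses silently, but it does not change the argument.
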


\begin{proof}
  We first show that $$ \delta_{a_H,o_H,a_L,o_L}(K(\alpha,\pi,\beta))
  \subseteq K(\alpha a_H o_H, \pi, \beta a_Lo_L)\mathpunct.$$  Suppose
  $t\in
  \delta_{a_H,o_H,a_L,o_L}(K(\alpha,\pi,\beta))$. We show that $t\in
  K(\alpha a_Ho_H, \pi,\beta a_Lo_L)$.  We have that there exists
  $s\in K(\alpha,\pi,\beta)$ such that
  $s\xrightarrow{(a_H,a_L)} t$ and $\obs_H(t)= o_H$ and $\obs_L(t) =o_L$.
  Thus there exists a run $r$, consistent with $\pi$, and with
  final state $s$, such that $\view_H(r) = \alpha$ and $\view_L(r) =
  \beta$. Since $\pi(\alpha) = a_H$, we obtain that the run
  $r\, (a_H,a_L)\, t$ is consistent with $\pi$ and justifies $t\in
  K(\alpha a_Ho_H, \pi,\beta a_L o_L)$.

  Conversely, suppose $t\in K(\alpha a_Ho_H, \pi,\beta a_L o_L)$.
  Then there exists a run $r'$ consistent with $\pi$, which can be
  written in the form
  $r\, (a_H,a_L)\, t$ with $\view_H(r) = \alpha$, and $\obs_H(t)=o_H$,
  and $\view_L(r) = \beta$ and $\obs_L(t)=o_L$. Let $s$ be the final
  state of $r$.  Then we have $s\in K(\alpha,\pi,\beta)$.  It is now
  immediate that $t\in
  \delta_{a_H,o_H,a_L,o_L}(K(\alpha,\pi,\beta))$. \qed
\end{proof}

The following result shows that it suffices to consider strategies in
which the choice of action depends only on the time and $H$'s
knowledge set, given the $L$ view being excluded.

\begin{lemma}\label{lem:normalform}
  If there exists an $H$ strategy $\pi$ that excludes $\beta$, then there exists
  an $H$ strategy $\pi'$ that also excludes $\beta$, and has the property that
  for all $H$ views $\alpha$ and~$\alpha'$, if  $K(\alpha,\pi',\beta) = K(\alpha',\pi',\beta)$ and  $|\alpha| = |\alpha'|$
  then $\pi'(\alpha) = \pi'(\alpha')$.
\end{lemma}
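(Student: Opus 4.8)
The plan is to convert an arbitrary excluding strategy $\pi$ into a normal-form strategy $\pi'$ by a ``knowledge-set amalgamation'' argument: define $\pi'$ so that its action on a view $\alpha$ depends only on $|\alpha|$ and on the knowledge set $K(\alpha,\pi',\beta)$, and verify that this new strategy still excludes $\beta$. The central difficulty is that the normal-form condition refers to $K(\alpha,\pi',\beta)$, which depends on $\pi'$ itself — so the definition of $\pi'$ and the computation of its knowledge sets must be carried out together, by induction on the view length. I would therefore build $\pi'$ level by level, at each length $n$ choosing, for each knowledge set $T$ that can arise at level $n$, a single canonical action to be used on \emph{every} view $\alpha$ of length $n$ with $K(\alpha,\pi',\beta)=T$.

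Concretely, I would proceed by induction on $n=|\alpha|$, maintaining the invariant that $\pi'$ has been defined on all views of length $\leq n$, that it agrees with the normal-form condition up to length $n$, and — crucially — that for every such $\alpha$ we have $K(\alpha,\pi',\beta)\subseteq K(\alpha,\pi,\beta)$ (knowledge sets under $\pi'$ are no larger than under $\pi$). The base case is immediate since the length-$0$ view is just $\obs_H(s_0)$ with knowledge set $\{s_0\}$ under both strategies. For the inductive step, consider a view $\alpha$ of length $n$ with knowledge set $T=K(\alpha,\pi',\beta)$. By the induction hypothesis $T\subseteq K(\alpha,\pi,\beta)$; I pick any particular view $\hat\alpha$ of length $n$ realising this same $T$ under $\pi'$ (there is at least one, namely $\alpha$ itself) and set $\pi'(\alpha):=\pi(\hat\alpha)$, the \emph{same} value for all views of length $n$ sharing the knowledge set $T$. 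This is exactly the normal-form condition at length $n$. Lemma~\ref{lem:delta} then tells me how the knowledge sets propagate to length $n+1$ under this choice, preserving the containment invariant $K(\cdot,\pi',\beta)\subseteq K(\cdot,\pi,\beta)$.

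It remains to check that $\pi'$ still excludes $\beta$. Here the containment invariant does the work: if some run $r'$ consistent with $\pi'$ had $\view_L(r')=\beta$, then its final state would lie in $K(\view_H(r'),\pi',\beta)$, which by the invariant is contained in $K(\view_H(r'),\pi,\beta)$; tracing this back through the amalgamation, the ``witness'' view $\hat\alpha$ used to define each action is itself a view under $\pi'$ whose knowledge set is contained in the corresponding $\pi$-knowledge set, so a run consistent with $\pi'$ realising $\beta$ would yield a nonempty $\pi$-knowledge set along $\beta$, i.e.\ a run consistent with $\pi$ realising $\beta$ — contradicting that $\pi$ excludes $\beta$. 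The one point demanding care, and the \textbf{main obstacle}, is the self-referential definition: because $\pi'(\alpha)$ is allowed to depend on $K(\alpha,\pi',\beta)$, I must ensure the level-by-level construction is genuinely well-founded, i.e.\ that the knowledge set $K(\alpha,\pi',\beta)$ at length $n$ is fully determined by the already-fixed values of $\pi'$ on views of length $<n$. This is guaranteed because, by Lemma~\ref{lem:delta}, a length-$n$ knowledge set is computed from length-$(n{-}1)$ knowledge sets via the $\delta$ operators using only $\pi'$-actions on shorter views; hence defining $\pi'$ in order of increasing view length removes any circularity.
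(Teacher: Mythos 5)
Your high-level plan — amalgamate views level by level according to their knowledge sets and propagate with Lemma~\ref{lem:delta} — is the same as the paper's, but the invariant you build the exclusion argument on, namely $K(\alpha,\pi',\beta)\subseteq K(\alpha,\pi,\beta)$, is false and not inductive, and this is a genuine gap. The trouble starts at the first level where the amalgamation actually changes an action. Suppose $\alpha$ and $\hat\alpha$ are distinct length-$n$ views with the same $\pi'$-knowledge set $T$ but $\pi(\alpha)\neq\pi(\hat\alpha)$, and you set $\pi'(\alpha):=\pi(\hat\alpha)=:a$. Then $K(\alpha\, a\, o,\pi',\beta)=\delta_{a,o,a_L,o_L}(T)$ can be nonempty, whereas $K(\alpha\, a\, o,\pi,\beta)=\emptyset$ for the trivial reason that no run consistent with $\pi$ can have $H$ view $\alpha\, a\, o$: consistency with $\pi$ forces the recorded action after $\alpha$ to be $\pi(\alpha)\neq a$. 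So the containment fails at length $n+1$, and with it the claim that a $\pi'$-run realising $\beta$ yields a nonempty $\pi$-knowledge set. A second symptom of the same problem: your representative $\hat\alpha$ is chosen among views realising $T$ \emph{under $\pi'$}, and then you apply $\pi$ to it; but $\hat\alpha$ need not be realisable under $\pi$ at all (its recorded actions may already contradict $\pi$), so $\pi(\hat\alpha)$ is just an arbitrary value of $\pi$ off its ``live'' views and carries no guarantee of driving the knowledge sets to $\emptyset$.

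The repair is to compare $K(\alpha,\pi',\beta)$ not with $K(\alpha,\pi,\beta)$ but with $K(g(\alpha),\pi,\beta)$ for a \emph{companion view in the $\pi$-world}, re-routed step by step. This is what the paper does: it fixes a representative function $f$ on the equivalence classes of $\alpha\sim\alpha'$ iff $|\alpha|=|\alpha'|$ and $K(\alpha,\pi,\beta)=K(\alpha',\pi,\beta)$ (classes keyed on the \emph{original} strategy's knowledge sets), defines $g(\alpha\, a\, o)=f(g(\alpha))\,\pi(f(g(\alpha)))\, o$ and $\pi'(\alpha)=\pi(f(g(\alpha)))$, and proves the \emph{equality} $K(\alpha,\pi',\beta)=K(g(\alpha),\pi,\beta)$ by induction via Lemma~\ref{lem:delta}. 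Exclusion is then immediate, since every $\pi$-knowledge set is empty at length $|\beta|$, and the normal-form property follows because equal $\pi'$-knowledge sets force equal $\pi$-knowledge sets of the companions, hence equal representatives under $f$. Your construction becomes correct if, for each knowledge set $T$ arising at level $n$ under $\pi'$, you choose the canonical action as $\pi(\gamma_T)$ for a view $\gamma_T$ with $K(\gamma_T,\pi,\beta)=T$ — whose existence is exactly the corrected invariant you must carry through the induction — rather than $\pi(\hat\alpha)$ for a $\pi'$-view $\hat\alpha$.
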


\begin{proof}
  Suppose that $\pi$ excludes $\beta$.  For purposes of the proof,
  note that we can assume without loss of generality that $\beta$ is
  infinite --- this helps to avoid mention of views longer than
  $\beta$ as a separate case. (Note that it is equivalent to say that $\pi$
   excludes some prefix of $\beta$.)

  Let $f$ be any mapping from $\pvh$ to $\pvh$ such that for all
  $\alpha,\alpha'\in \pvh$ we have
  \begin{enumerate}
    \item $|f(\alpha)| = |\alpha|$,
      \item  $K(\alpha, \pi, \beta) =
    K(f(\alpha), \pi, \beta)$,
  \item if $|\alpha| =|\alpha'|$ and $K(\alpha, \pi, \beta) =
    K(\alpha', \pi, \beta)$, then $f(\alpha) = f(\alpha')$.
  \end{enumerate}
  Such a mapping always exists; intuitively, it merely picks, at each
  length, a representative $f(\alpha) \in [\alpha]_{\sim}$ of the
  equivalence classes of the equivalence relation defined by
  $\alpha\sim \alpha'$ if $|\alpha|=|\alpha'|$ and $K(\alpha, \pi,
  \beta) = K(\alpha', \pi, \beta)$.

  Now define the mapping $g$ on $\pvh$ as follows.  Let $\alpha_0 =
  \obs_H(s_0)$ be the only possible $H$ view of length $0$.  For
  $\alpha \in \pvh$ of length $0$, we define
  $g(\alpha)=\alpha$.
  For longer $\alpha$, we define $g(\alpha a o) = f(g(\alpha))
  \pi(f(g(\alpha)) o$.  Also, define the strategy $\pi'$ by
  $\pi'(\alpha) = \pi(f(g(\alpha)))$.

  We claim that for all $\alpha \in \pvh$ we have $K(\alpha, \pi',
  \beta) = K(g(\alpha), \pi, \beta)$.  The proof is by induction on
  the length of $\alpha$.  The base case is straightforward, since
  $\alpha_0$ is consistent with all strategies, so $K(\alpha_0, \pi',
  \beta) = \{s_0\} = K(\alpha_0, \pi, \beta)$, and $K(\alpha, \pi',
  \beta) = \emptyset = K(\alpha, \pi, \beta)$ for $\alpha \neq
  \alpha_0$ with $|\alpha| = 0$.
  Suppose the claim holds for $\alpha \in \pvh$ of length
  $i$.  Let $\alpha a o \in \pvh$.  By induction and (2), $K(\alpha,
  \pi', \beta) = K(g(\alpha), \pi, \beta)= K(f(g(\alpha)), \pi,
  \beta)$.
  Let $\beta' a_L o_L$ be the prefix of $\beta$ of length $|\alpha|+1$.
  Since action $a = \pi'(\alpha) = \pi(f(g(\alpha))$,
  using Lemma~\ref{lem:delta}, we have that
  $$\begin{array}{rcl}
  K(\alpha a o, \pi'\!, \beta) & = & K(\alpha a o , \pi'\!, \beta' a_L o_L)\\
  & = & \delta_{a,o,a_L, o_L}(K(\alpha, \pi', \beta'))\\
  & = & \delta_{a,o,a_L,o_L}(K(f(g(\alpha)), \pi, \beta'))\\
  & = & K(f(g(\alpha)) a o , \pi, \beta' a_L o_L)\\
  & = & K(f(g(\alpha)) a o , \pi, \beta) \\
    & = & K(g(\alpha a o) , \pi, \beta) ~~.\\
  \end{array}$$

  To see that $\pi'$ has the required property, if $K(\alpha,
  \pi',\beta) = K(\alpha',\pi',\beta)$ with $|\alpha|=|\alpha'|$, then
  we have $K(g(\alpha),\pi,\beta)=K(g(\alpha'),\pi,\beta)$.  By (3) we
  have $f(g(\alpha))=f(g(\alpha'))$. Therefore
  $\pi'(\alpha)=\pi(f(g(\alpha)))=\pi(f(g(\alpha')))=\pi'(\alpha')$, by
  definition.

  Since $\pi$ excludes $\beta$, there exists a length $n$ such that
  for all $\alpha \in \pvh$ with $|\alpha|=n$, %
  we have $K(\alpha, \pi,\beta)= \emptyset$. Thus, we also have for
  all $\alpha$ of length $n$ that $K(\alpha, \pi',\beta)= K(g(\alpha),
  \pi, \beta) = \emptyset$. This means that $\pi'$ also excludes
  $\beta$.  \qed
\end{proof}

\smallskip

\newcommand{\K}{{\cal K}}

Based on Lemma~\ref{lem:normalform}, we construct a transition system
$T(M)=(Q,q_0\Rightarrow)$ that simultaneously searches for the
strategy $\pi$ and an $L$ view $\beta$ that is
excluded by $\pi$. The components are defined by:
\begin{enumerate}
 \item $Q=\power{S}\times \power{\power{S}}$,
 \item $q_0=(\set{s_0},\set{\set{s_0}})$,
 \item the transition relation $\Rightarrow$ is defined by $(U,\K)
  \Rightarrow^{(\rho,a_L,o_L)} (U',\K')$ if
 \begin{enumerate}
 \item  $\rho: \K\rightarrow A_H$, and $a_L\in A_L$ and $o_L\in O$,
 \item $U' = \{t~|~$ there exists $s\in U$, and a transition
   $s\LT{(a'_H,a_L)}t$, with $a'_H \in A_H$ and $o_L = \obs_L(t)~\}
   \neq \emptyset$, and
  \item $\K' = \{ \delta_{a_H,o_H,a_L,o_L}(k) ~|~k\in \K \mbox{ and } a_H=\rho(k) \mbox{ and } o_H\in O \}$.
   \end{enumerate}
\end{enumerate}
Intuitively, the component $U$ in a state $(U,\K)$
is used to ensure that the view $\beta$ that we construct is in fact possible in $M$.
The component $\K$
represents a collection of all possible knowledge sets that $H$ can be
in at a certain point of time, while attempting to  exclude
$\beta$.  More specifically, each set $k$ in $\K$
corresponds to $\alpha\in \pvh $ such that $k= K(\alpha,\pi,\beta)$.  In a
transition, we both determine the next phase of $\pi$, by extending
$\pi$ so that $\pi(\alpha) = \rho(K(\alpha, \pi,\beta))$, and extend
$\beta$ to $\beta a_L o_L$. 
Moreover, an $H$ strategy generated by $T(M)$ is only sensitive to
$H$'s knowledge set and lengths of runs,
i.e., it satisfies that $K(\alpha,\pi',\beta) = K(\alpha',\pi',\beta)$ and $|\alpha| = |\alpha'|$ implies
$\pi'(\alpha) = \pi'(\alpha')$. In the above construct, $\rho$ represents the local choice of $H$ that
depends only on the knowledge set of $H$.

The following result justifies
the correspondence between the
transition system $T(M)$ and $\tNDS$.

\begin{lemma} \label{lem:ndsreduction} A machine $M$ does not satisfy
  $\tNDS$ iff $T(M)$ contains a path $q_0 \Rightarrow^* (U,\K)$ to a
  state where $\K = \emptyset$.
\end{lemma}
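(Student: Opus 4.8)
The plan is to prove both directions by setting up a tight correspondence between paths of $T(M)$ and pairs $(\pi,\beta)$ consisting of an $H$ strategy $\pi$ together with a possible $L$ view $\beta$ that $\pi$ excludes. By Proposition~\ref{prop:charnds}, $M\not\in\tNDS$ is exactly the statement that some such pair exists, and by Lemma~\ref{lem:normalform} I may assume $\pi$ is in normal form, i.e.\ $\pi(\alpha)$ depends only on $|\alpha|$ and on the knowledge set $K(\alpha,\pi,\beta)$. The central invariant I will maintain along a path $q_0 \Rightarrow^{(\rho_1,a_{L,1},o_{L,1})} (U_1,\K_1)\Rightarrow \cdots \Rightarrow^{(\rho_n,a_{L,n},o_{L,n})} (U_n,\K_n)$ is that, writing $\beta = \obs_L(s_0)\,a_{L,1}o_{L,1}\cdots a_{L,n}o_{L,n}$ for the $L$ view read off the labels, the component $U_i$ is the set of final states of runs of $M$ whose $L$ view is the length-$i$ prefix of $\beta$, while $\K_i$ is precisely the collection of (nonempty) knowledge sets $\{\,K(\alpha,\pi,\beta)\mid |\alpha|=i\,\}$ for the strategy $\pi$ determined by the $\rho_j$. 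The requirement $U_i\neq\emptyset$ built into the transition relation then guarantees that $\beta$ is a genuinely possible $L$ view, and the condition $\K_n=\emptyset$ says exactly that every knowledge set at time $n$ is empty, which by definition is what it means for $\pi$ to exclude $\beta$.

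For the direction ``$M\not\in\tNDS$ implies a path to $\K=\emptyset$'', I would start from a possible view $\beta$ and a normal-form strategy $\pi$ excluding it, and build the matching path step by step. At step $i$ I read $a_{L,i}$ and $o_{L,i}$ off $\beta$, and I define the label $\rho_i$ on $\K_{i-1}$ by $\rho_i(K(\alpha,\pi,\beta))=\pi(\alpha)$; this is well defined precisely because $\pi$ is in normal form, so the value does not depend on the representative $\alpha$ of a given knowledge set at a given length. An induction on $i$, whose inductive step is Lemma~\ref{lem:delta} applied observation-by-observation, shows that the $\K_i$ so generated is exactly $\{\,K(\alpha,\pi,\beta)\mid |\alpha|=i\,\}$, and that $U_i$ is the required (nonempty) set of final states, so each transition is legal. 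Since $\pi$ excludes $\beta$, the argument at the end of the proof of Lemma~\ref{lem:normalform} supplies a length $n$ at which all knowledge sets vanish, and the path reaches a state with $\K_n=\emptyset$.

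For the converse, given a path $q_0\Rightarrow^*(U,\K)$ with $\K=\emptyset$, I would read off $\beta$ from the labels as above and reconstruct a strategy $\pi$ from the sequence $\rho_1,\dots,\rho_n$. The subtlety is that the natural definition $\pi(\alpha)=\rho_{|\alpha|+1}(K(\alpha,\pi,\beta))$ is circular, since its right-hand side refers to knowledge sets computed using $\pi$ itself. I would break this by defining $\pi$ by recursion on $|\alpha|$ and proving, simultaneously and by the same induction, the invariant that $K(\alpha,\pi,\beta)\in\K_{|\alpha|}$ for every $\alpha$, so that $\rho_{|\alpha|+1}$ is indeed applicable to it; the inductive step again uses Lemma~\ref{lem:delta} to identify the one-step update of knowledge sets under $\pi$ with the clause defining $\K'$. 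On views longer than $n$, or on arguments not arising along the path, $\pi$ may be extended arbitrarily. The clause $U'\neq\emptyset$ ensures each prefix of $\beta$ is a possible $L$ view, hence so is $\beta$, and $\K=\emptyset$ forces $K(\alpha,\pi,\beta)=\emptyset$ for all $\alpha$ of length $n$, i.e.\ no run consistent with $\pi$ has $L$ view $\beta$. Thus $\pi$ excludes the possible view $\beta$, and $M\not\in\tNDS$ by Proposition~\ref{prop:charnds}.

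I expect the main obstacle to be the backward direction's circular definition of $\pi$ from the labels $\rho_i$: getting the recursion and the accompanying invariant $K(\alpha,\pi,\beta)\in\K_{|\alpha|}$ to interlock, so that each $\rho_i$ is only ever applied to knowledge sets on which it is defined. The two enabling facts are Lemma~\ref{lem:delta}, which lets the one-step knowledge-set update be matched with the transition clause for $\K'$, and the normal-form property of Lemma~\ref{lem:normalform}, which is what makes the passage from strategies over arbitrarily many $H$ views to labels $\rho$ over finitely many knowledge sets sound. A secondary bookkeeping point is to keep the two components separate throughout: $U$ certifies that $\beta$ is possible, while $\K$ tracks the knowledge sets, and only the latter is tested for emptiness.
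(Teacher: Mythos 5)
Your proposal follows the paper's proof essentially step for step: the forward direction defines $\rho_i$ on knowledge sets via the normal-form strategy of Lemma~\ref{lem:normalform} and verifies the invariant by induction with Lemma~\ref{lem:delta}, and the backward direction breaks the circularity by recursion on view length with an invariant tying $K(\alpha,\pi,\beta)$ to $\K_{|\alpha|}$, exactly as the paper does with its sequence $\pi_0,\dots,\pi_n$. The only adjustments needed in a full write-up are that $\K_i$ is in general only a \emph{subset} of $\{K(\alpha,\pi,\beta) \mid |\alpha|=i\}$ (an $\alpha$ inconsistent with $\pi$ yields the empty knowledge set, which need not occur in $\K_i$), and correspondingly the backward invariant must carry the guard ``if $K(\alpha,\pi,\beta)\neq\emptyset$'' so that $\rho_{|\alpha|+1}$ is only ever applied to sets in its domain --- both are one-line fixes that do not affect the argument.
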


\begin{proof}
  We first prove the implication from left to right.  Suppose first
  that $M$ does not satisfy $\tNDS$, witnessed by the fact that $\pi$
  excludes the possible $L$ view $\beta= o_0 b_1o_1 b_2, \ldots b_n
  o_n$. We may assume without loss of generality that no strict prefix
  of $\beta$ is excluded.  By Lemma~\ref{lem:normalform}, we may
  assume that $\pi$ has the property that it takes the same value on
  $H$ views $\alpha,\alpha'$ that have the same length and have
  $K(\alpha, \pi,\beta) = K(\alpha', \pi,\beta)$. We construct a path
  \[ q_0 = (U_0, \K_0) \Rightarrow^{(\rho_1, b_1, o_1)} (U_1,\K_1)
  \Rightarrow^{(\rho_2, b_2, o_2)} \ldots \Rightarrow^{(\rho_n, b_n,
    o_n)} (U_n,\K_n)\] in the transition system $T(M)$, by defining
  the functions $\rho_i:\K_{i-1} \rightarrow A_H$ for $i \geq 1$,
  and then deriving $U_i$ from $U_{i-1}$ using the equation in clause
  3(b) of the definition of $T(M)$, and deriving $\K_i$ from
  $\K_{i-1}$ and $\rho_i$ using the equation in clause 3(c).  (This
  guarantees that each step satisfies all the conditions of the
  definition of $\Rightarrow$, except the requirement in 3(b) that
  $U'\neq \emptyset$; we check this below.)  The construction will
  have the property that every $k\in \K_{i-1}$ is equal to some
  $K(\alpha, \pi, \beta)$ with $\alpha\in \pvh$ of length $i-1$.  This
  means that we may define $\rho_i (k) = \pi(\alpha)$.  Note that
  $\rho_i$ is well-defined, by the assumption on $\pi$.  More
  precisely, we claim that for each $i=0\ldots n$, $\K_i$ is a subset
  of the set $\{K(\alpha,\pi,o_0b_1o_1 \ldots b_i o_i)~|~
  |\alpha|=i\}$.  Note that this means that if $k\in \K_n$ then $k =
  \emptyset$, for else we have an $H$ view $\alpha$ of length
  $|\beta|$ such that $K(\alpha,\pi,\beta)\neq \emptyset$, which
  implies that $\pi$ does not exclude $\beta$.  Thus $\K_n =
  \emptyset$, as required for the right hand side of the result.

The proof of the claim is by induction on $i$. The base case of $n=0$ is
immediate from that fact that $\beta$ is a possible view, so $K(\obs_H(s_0), \pi,o_0) = \set{s_0}$.
Suppose $k' \in \K_{i+1}$. We show $k' = K(\alpha',\pi,o_0b_1o_1 \ldots b_i o_i b_{i+1} o_{i+1})$
for some $\alpha'$ of length $i+1$.
By definition of  $\K_{i+1}$,  there exists $k\in \K_i$,
and $o_H\in O$, such that with $a_H = \rho(k)$, we have
$k' = \delta_{b_{i+1},o_{i+1},a_H,o_H}(k)$.
By induction, there exists  $\alpha \in \pvh$ of length $i$ such that
$k= K(\alpha,\pi,o_0b_1o_1 \ldots b_i o_i)$.
By Lemma~\ref{lem:delta}, it follows that $k' = K(\alpha a_H o_H,\pi,o_0b_1o_1 \ldots b_i o_i b_{i+1} o_{i+1})$,
as required.

It remains to show that $U_i\neq \emptyset$ for each $i= 1\ldots n$. For this, note that since $\beta$ is a possible $L$ view, there exists a
run $s_0 \LT{(a_1,b_1)} s_1  \LT{(a_2,b_2)} \ldots  \LT{(a_n,b_n)} s_n$ such that
$\obs_L(s_i) = o_i$ for $i = 1\ldots n$. A straightforward induction shows that for each $i$, we have $s_i\in U_i$,
so in fact $U_i\neq \emptyset$, as required.

For the other direction, suppose that
\[ q_0 = (U_0,\K_0) \Rightarrow^{(\rho_1, b_1, o_1)} (U_1, \K_1)
\Rightarrow^{(\rho_2, b_2, o_2)} \ldots \Rightarrow^{(\rho_n, b_n,
  o_n)} (U_n,\K_n)\] and $\K_n = \{\emptyset\}$.  We construct a
strategy $\pi$ that excludes $\beta = \obs_L(s_0) b_1 o_1\ldots b_n o_n$.  A
straightforward induction using clause 3(b) of the definition of $T(M)$ shows
that $\beta$ is a possible $L$ view in $M$.  The construction of $\pi$ is done
inductively, by defining a sequence of strategies $\pi_0, \pi_1, \ldots,
\pi_{n}$ such that if $i\leq j$ then $\pi_i$ and $\pi_j$ agree on all $H$
views of length at most $i-1$. At each stage of the construction, we claim
that for all $H$ views $\alpha$ of length $i\geq 0$, if $K(\alpha, \pi_i,
\beta) \neq \emptyset$ then $K(\alpha, \pi_i, \beta)\in \K_i$.  Inductively,
we let $\pi_0$ be any strategy and define $\pi_{i+1}(\alpha) =
\rho_{i+1}(K(\alpha,\pi_i,\beta))$ if $|\alpha|=i$ and
$K(\alpha,\pi_i,\beta)\neq \emptyset$, and $\pi_{i+1}(\alpha) = \pi_i(\alpha)$
otherwise.  Evidently, $\pi_{i+1}$ is well defined by the claim that
$K(\alpha, \pi_i, \beta) \neq \emptyset$ then $K(\alpha, \pi_i, \beta)\in
\K_i$.  Also this definition plainly satisfies the condition that if $i\leq j$
then $\pi_i$ and $\pi_j$ agree on all views of length at most $i-1$.  Note
also that since $K_n = \{\emptyset\}$, by the claim there does not exist an
$H$ view $\alpha$ of length $n$ such that $K(\alpha, \pi_n, \beta) \neq
\emptyset$. It follows that $\pi_n$ excludes $\beta$.

It therefore suffices to show that the definition satisfies the claim.
Note that it holds trivially for any strategy if $i=0$. Suppose that  for all $H$ views $\alpha$ of length $i$,
if  $K(\alpha, \pi_i, \beta) \neq \emptyset$ then  $K(\alpha, \pi_i, \beta)\in \K_i$.
Let $\alpha a_H o_H$ be an $H$ view of length $i+1\leq n-1$
with $K(\alpha a_H o_H, \pi_{i+1}, \beta) \neq \emptyset$. Then also
 $K(\alpha, \pi_{i+1}, \beta) \neq \emptyset$ and $\pi_{i+1}(\alpha) = a_H$. Since $\pi_i$ and $\pi_{i+1}$ agree
 on views of length at most $i-1$, we also have  $K(\alpha, \pi_{i+1}, \beta)=K(\alpha, \pi_{i}, \beta) \neq \emptyset$,
 so  $K(\alpha, \pi_{i+1}, \beta)\in \K_i$.
 By Lemma~\ref{lem:delta},  we have that
 $ K(\alpha a_H o_H, \pi_{i+1}, \beta)= \delta_{b_{i+1},o_{i+1},a_H,o_H}$ $(K(\alpha, \pi_{i+1}, \beta)) \in \K_{i+1}$,
 as required. \qed
 \end{proof}

\medskip

We obtain the claimed complexity bound from
Lemma~\ref{lem:ndsreduction}, simply by noting that it reduces $\tNDS$
to a reachability problem in the transition system $T(M)$.  Since the
states of the system $T(M)$ can be represented in space $O(|S|\cdot
2^{|S|}) = 2^{O(|S|)}$, we obtain from Savitch's theorem that we can do
the search in DSPACE$(2^{O(|S|)})$.

\subsection{EXPSPACE-Hardness}

To show that $\NDSs$ is \EXPSPACE-hard, we show how to encode the game
BLIND-PEEK of Reif~\cite{Reif84}. We need only scheduled machines for
the encoding, so the problem is \EXPSPACE-hard already for this
subclass.

\subsubsection{The Game BLIND-PEEK}

BLIND-PEEK is a variant of the two-player game PEEK introduced by
Stockmeyer and Chandra~\cite{SC79}. A PEEK game consists of a box with
two open sides that contains horizontally stacked plates; the players
sit at opposite sides of the box.  Each plate has two positions, `in'
and `out', and contains a knob at one side of the box, so that this
plate can be controlled by one of the players.  At each step, one of
the two players may grasp a knob from his side and push it `in' or
`out'. The player may also pass. Both the top of the box and the
plates have holes in various positions, and each hole is associated to
a player. If, just after a move of player $a\in\set{1,2}$, the plates are
positioned so that for one of the player's holes in the top of the box,
it is possible to peek through from the top of the box to the bottom 
(i.e., each plate has a hole positioned directly underneath the top
hole), then player $a$ wins.  In PEEK, both players can observe the
position of all plates at all times. BLIND-PEEK~\cite{Reif84} (more
formally, the game $G^{2B}$ of that paper) is a modification of PEEK
in which player $1$'s side of the box is partially covered, so that it
is not possible for player $1$ to see the positions of the plates
controlled by player $2$. We may represent the game formally as follows:

\def\pos{\mbox{\tt Pos}}
\def\pass{\mbox{\tt Pass}}
\def\Move{\mbox{\tt Move}}
\def\mmove{\mbox{\tt move}}

\newcommand{\winf}{\Phi}
\newcommand{\plate}{P}

\begin{definition}
An instance $G$ of the BLIND-PEEK game is given by a tuple
$(n, n_1, \winf_1, \winf_2, \nu_0)$ where $n$ and $n_1$ are natural numbers with $n_1<n$,
$$\winf_1 = \bigvee_{j=1}^{h_1}  \gamma^1_j~~ \text{and}~~ \winf_2= \bigvee_{j=1}^{h_2} \gamma^2_j $$
are disjunctive normal form formulas over the
set of atomic propositions $\{\plate_1, \ldots, \plate_n\}$, and $\nu_0: [1..n]\rightarrow \{0,1\}$
represents a boolean assignment to these propositions. The size of the instance is $O(n(h_1 + h_2))$.
\end{definition}

Here $h_i$, for  $i \in \{1,2\}$, is
the number of holes on the top of the box for
each player.  Intuitively, $n$ gives the number of plates, and the
propositions $P_k$ for $1\leq k \leq n$ correspond to the positions of
the plates, which can be either $in$ ($P_k$ false) or $out$ ($P_k$
true), and a state of the game $G$ is given by a mapping $\nu : [1..n]
\rightarrow \{0,1\}$, with $P_k$ true at $\nu$ just when $\nu(k) =1$.
The total number of states in a game is thus exponential in the size
of the game. The assignment $\nu_0$ specifies the initial state of the game.  
The number $n_1$ specifies the number of plates associated to player 1; we take these to
be plates $1..n_1$. The formula $\winf_i$ gives the winning condition for
player $i$.  Each disjunct $\gamma^i_j$ corresponds to one of the holes on the
top of the box that is associated to player $i$. Which literals are in
$\gamma^i_j$ depends on how the hole in the top of the box aligns with a hole
on the plates when these are in or out. If there is always an alignment with a
hole on plate $k$ then $\gamma^i_j$ contains neither $P_k$ nor $\neg P_k$. If
there is an alignment only when the $k$-th plate is $out$ then $\gamma^i_j$
contains the literal $P_k$, and conversely, if there is an alignment only when
the $k$-th plate is $in$ then $\gamma^i_j$ contains the literal $\neg P_k$.
(If there is never an alignment then we may include both $P_k$ and $\neg P_k$,
but, obviously, we may just as well remove the hole from the game.)

Players~1 and~2 play in turn by moving one of their plates or
passing. As the players' plate numbers partition the set $[1\dots n]$,
we can denote the moves of the players $\mmove_i$ with $1 \leq i \leq
n$; if $i \in [1..n_1]$ it is
a move of player~1, and a move of player~2 otherwise.  We let
$\Move_1=\{\mmove_i~|~1 \leq i \leq n_1\} \cup \{ \pass \}$ and
$\Move_2=\{\mmove_i~|~ n_1+1 \leq i \leq n\} \cup \{ \pass \}$.

A \emph{play} $\varrho$ in $G$ is an alternating sequence  of player~1 and
player~2 moves of the form
\[
\varrho = \nu_0 \xrightarrow{\ \lambda_1\ }   \nu_1 \xrightarrow{\ \lambda_2\ }   \nu_2
\quad \cdots \quad
\nu_{i-1}
\xrightarrow{\ \lambda_i\ }   \nu_i
\]
where $\lambda_{j}$ is a player~1 move in $\Move_1$ when $j$ is odd and a  player~2 move in $\Move_2$ when $j$ is even.
Moreover, if
$\lambda_l=\pass$ then $\nu_{l}=\nu_{l-1}$ and  if
$\lambda_l=\mmove_k$, then $\nu_l(k)=1-\nu_{l-1}(k)$ and
$\nu_l(j)=\nu_{l-1}(j)$ for $j \neq k$.

A state $\nu$ is  winning for player $p$ if it satisfies the formula $\winf_p$.
A play $\varrho$ is {\em winning for player $p$} if  it contains a state $\nu_k$
immediately after a move by player $p$ that is winning for that player,
and there is no earlier such winning  state for the other player.
Otherwise the play is undecided.

We are interested in the problem of deciding whether there is a
winning strategy for player 1, i.e., a way for the player to choose
their moves that guarantees, whatever the other player does, that
player 1 will win. Strategies usually choose a next move based on what
the player has been able to observe over a play of the game.  In the
case of the game $G$, the information directly visible to player 1 in
a state is just the position of plates $1..n_1$.  Player 2 sees the
position of all plates. A player also remembers the sequence of moves
they have played at their turn.  At each step of the play, the player
is also advised whether any player has won the game.  (In a physical
realization of the game, if a player were to peek through a hole they
would be able to see which is the topmost plate that blocks it. We
assume that the player does {\em not} get this information in the
formal game.  One can imagine a physical realization in which a
referee peeks through the holes and announces the result.)

Since the winning condition is a discrete, state-based condition,
deterministic strategies suffice.  Moreover, note that, except for the
information about who has won the game, the effect of the players'
moves on the information directly visible to player 1 is
deterministic: we can deduce the position of player 1's plates from
the moves that player 1 has made so far in the game.  Thus, every
undecided play in which player 1 has made a particular sequence of
moves yields the same view for player 1, and on a deterministic
strategy, player 1 must make the same next move on all such
plays. This means that we may represent a player 1 strategy simply by
a (finite or infinite) sequence of moves $\Lambda = \lambda_1,
\lambda_3, \lambda_5 \ldots$.  Such {\em a strategy is winning for
  player 1} if every play with this sequence of moves by player 1 is
winning for player 1.

The following result characterizes the complexity of BLIND-PEEK.

\begin{theorem} {\bf \cite{Reif84}}
The game  BLIND-PEEK is complete for EXPSPACE under log-space reductions.
\end{theorem}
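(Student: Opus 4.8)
The statement is Reif's theorem, and when using it below one would simply cite \cite{Reif84}; nonetheless, here is how I would approach a proof. The result has two halves: containment of BLIND-PEEK in \EXPSPACE\ (every instance can be decided in exponential space), and \EXPSPACE-hardness (some \EXPSPACE-complete problem reduces to it in logarithmic space). The starting point for both is the observation, already recorded in the text, that because player~$1$ is blind to player~$2$'s plates, an undecided play yields a view for player~$1$ that is a deterministic function of player~$1$'s own move sequence; hence a player~$1$ strategy may be taken to be a fixed sequence $\Lambda = \lambda_1,\lambda_3,\ldots$ of moves, and ``player~$1$ has a winning strategy'' has the purely $\exists\forall$ shape: there exists a move sequence $\Lambda$ such that every play of $G$ in which player~$1$ follows $\Lambda$ is winning for player~$1$.

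For the upper bound I would run a knowledge-set (subset) construction, exactly in the spirit of the transition system $T(M)$ used above for $\tNDS$. A \emph{belief set} is the set $B \subseteq \{0,1\}^n$ of full game states consistent with player~$1$'s observable history so far; since the only thing player~$1$ cannot see is the configuration of player~$2$'s plates, $B$ records precisely player~$1$'s uncertainty about player~$2$. A player~$1$ move deterministically flips one of player~$1$'s own plates in every state of $B$, after which one folds in all possible player~$2$ responses to obtain the successor belief set. The decisive structural point is that the universally-quantified player~$2$ is \emph{absorbed} into $B$, so that deciding whether player~$1$ can force a win is no longer an alternating game but an ordinary (existential) reachability question over belief sets: is there a sequence of player~$1$ moves driving $B$ into a configuration in which player~$1$ is declared the winner before player~$2$ ever is? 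A belief set is a subset of the $2^n$ game states and so is described in $2^{O(n)}$ bits, i.e.\ exponential space in the instance size $O(n(h_1+h_2))$; existential reachability over such states lies in $\NSPACE(2^{O(n)})$, which by Savitch's theorem is contained in $\DSPACE(2^{O(n)}) \subseteq \EXPSPACE$.

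For the lower bound I would reduce from an \EXPSPACE-complete problem, for instance the word problem for a fixed Turing machine running in space $2^{p(m)}$ on inputs of length $m$ (equivalently, an exponential-width corridor-tiling problem), aiming to make player~$1$ win exactly when the machine accepts. Configurations of length $2^{p(m)}$ are addressed by a block of $O(p(m))$ plates acting as a binary cell pointer, together with plates holding the scanned symbol and the control state; player~$2$'s hidden plates are used to pose verification \emph{challenges} about the claimed computation (the contents of a cell, or the consistency of two successive configurations), while player~$1$, blind to which challenge has been issued, must commit to moves answering \emph{all} possible challenges correctly. It is precisely this blindness --- the need to succeed against every hidden challenge simultaneously --- that lifts the complete-information game PEEK from \textbf{EXPTIME}-completeness \cite{SC79} to \EXPSPACE-completeness for BLIND-PEEK, since player~$1$ is in effect forced to reason about an exponentially large hidden object all at once. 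The winning formulas $\winf_1,\winf_2$ and the assignment $\nu_0$ are then chosen so that a winning move sequence for player~$1$ encodes an accepting computation.

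The main obstacle is entirely in the hardness direction: engineering the plates, the alignment of holes (and hence the DNF winning conditions $\gamma^i_j$), and the move protocol so that (i)~player~$1$ wins whenever the machine accepts, by playing out the accepting computation, and (ii)~player~$1$ cannot win otherwise, because for any proposed move sequence player~$2$ has a hidden challenge exposing an error. Making this encoding faithful --- in particular forcing the players to alternate in lock-step with the simulation and preventing either player from ``cheating'' via illegal plate moves --- is delicate and is the technical core of Reif's construction \cite{Reif84}; I would follow that construction for the detailed gadgetry while presenting the two reductions above as the conceptual skeleton.
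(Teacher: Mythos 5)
The paper does not prove this theorem; it is imported verbatim from Reif, so there is no internal argument to compare yours against. Your sketch is a faithful outline of how the result is actually established. The upper bound via belief sets is correct: blindness collapses player~1's strategies to move sequences, the universally quantified opponent is absorbed into a subset of the $2^n$ game states, and existential reachability over objects of $2^{O(n)}$ bits sits in $\NSPACE(2^{O(n)})\subseteq\DSPACE(2^{O(n)})\subseteq\EXPSPACE$ by Savitch. The one bookkeeping point you elide is the winning condition's temporal clause (``no earlier winning state for the other player''): the belief-set search must prune states already won by player~1 and reject the branch if the set ever contains a player~2 win after a player~2 move, since a single bad play suffices to defeat $\Lambda$. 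Your hardness direction correctly locates the source of the jump from \textbf{EXPTIME} (PEEK) to \EXPSPACE\ (BLIND-PEEK) in player~1's obligation to answer all hidden challenges simultaneously, and appropriately defers the gadget engineering to Reif; as a citation-level justification this is exactly the right division of labour, though of course it is not a self-contained proof of the lower bound.
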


\def\play{\text{Play}}

We use this result to show that \tNDS\ is EXPSPACE-hard.
Given an instance $G$ of BLIND-PEEK, we construct a  synchronous system
$M(G)$ of size polynomial in the size of $G$,
with the following property: player 1 has a winning strategy
 in $G$ iff there exists an $L$ view $\vw_L$ and
an $H$ strategy $\pi$ that excludes $\vw_L$ in $M(G)$.

Intuitively, the winning strategy $\Lambda$ of player 1 in the game $G$ will be encoded
within the sequence of $L$ actions contained in the view $\vw_L$. The role of the
$H$ strategy $\pi$ in the machine $M(G)$ is  to help in the verification that the
strategy $\Lambda$ is winning, by ensuring that the view $\vw_L$ cannot occur when this is the case.
As we cannot encode the exponential number of possible states of the BLIND-PEEK
game $G$ directly in the polynomial number of  states of $M(G)$,
we use an encoding trick, which is to represent the state of the game
as the set of states of $M(G)$ that are consistent with the
$H$ view. Roughly, each such consistent state corresponds to one
of the plates; there are some additional states for initialization and book-keeping
related to the winning condition.

\subsubsection{High level structure of $M(G)$.}

We let $n_2 = n- n_1$ be the number of plates that can be moved by player 2.

\def\cw{\mbox{\tt checkwin}}
\def\isblocked{\mbox{\tt isBlocking}}
\def\isopen{\mbox{\tt isOpen}}
The machine $M(G)$ will be a scheduled machine (see Section~\ref{sec:schedmachine}),
with  deterministic schedule following the regular expression
$\bot(LHL\bot H^{h_2})^\omega$.
Here occurrences of $H$ and $L$ indicate which agent's action
the transition is allowed to depend upon, and
$\bot$ is for a system step that is independent
of  both agents $H$ and $L$. We call each instance of the infinitely repeated block 
$LHL\bot H^{h_2}$ a {\em round}, and use indices, as in
$L_1H_0L_2\bot H_1..H_{h_2}$ to refer to the stages of the round.

The alphabet of actions for $L$ and $H$ are $A_L$ and $A_H$
defined by:
\begin{eqnarray*}
  A_L^- & = & \{ \mmove_i \ | \  i \in [1\dots n_1] \} \\
  A_L & = & A_L^- \cup \{ \cw \} \\
  A_H & = & \{ \isopen_k \ | \ k \in [1..h_1] \} \cup \{ \isblocked_k \ | \ k \in [1
  \dots n] \} \mathpunct.
\end{eqnarray*}

Informally, the behaviour of $M(G)$ in each step is as follows:

\begin{enumerate}
\item In the first $\bot$ step of the schedule, the machine nondeterministically makes a transition to one of
      $n$ subsystems, each of which monitors one particular plate of the game. Neither $H$
      nor $L$ is able to see which subsystem they are actually in during subsequent transitions.
      The machine then moves into the cyclically repeated rounds.
\item The $L_1$ stage of each round allows $L$ to perform one move of player 1 in
      the game $G$ (using an action $\mmove_i$) or to pass (using the action $\cw$).
      This stage corresponds to a move according to a blindfold strategy of player~1.
\item In the following $H_0$ stage, agent $H$ is given an opportunity to
      assert that the last $L$ move has achieved a win of player~1,
      by specifying a hole $j$ of player 1 and claiming that it is possible to peek through
      (using an action $\isopen_j$). $H$ may also pass (using any other action).
\item The following $L_2$ stage allows $L$ to check, by performing a ``\cw" action,
      if a winning state has been reached, as claimed by $H$ sometime before. If
      it is so, then some of the $L$ views will be ruled out.  If it
      is still not a win, or $H$ has not yet asserted a win, or $H$
      has made a mistake, this ``\cw'' action will not rule out
      any $L$ view. (Once a ``\cw'' fails in this way, no $L$ views will be
      excluded thereafter.)
\item The next stage $\bot$ simulates a move of player~2 in the game $G$.

\item During the last $h_2$ stages, agent $H$ is given an opportunity to assert that
      the last player~2 move is not a win, and explain this claim by pointing,
      for each of the $h_2$ peek-holes of
      player~2, to a plate $j$ that blocks that hole,
      using an action $\isblocked_j$.
\end{enumerate}
The observations of the agents $L,H$ in $M(G)$ are
defined so that neither agent ever learns which plate is being
simulated on the current run. Agent $L$ observes the player 1 moves and a
result of any ``\cw" actions. Agent $H$ observes all  moves by either player.

Intuitively, suppose player~1 has a winning strategy and agent $L$
faithfully follows this strategy in the $L_1$ stage of each round.
Suppose also that agent $H$, who knows every previous move of the play, always
makes correct assertions about whether holes are open or blocked.
Then $L$ is guaranteed to be able to eventually make a successful ``checkwin'' 
and get some $L$ views ruled out. To handle the case where $H$ makes incorrect assertions,
the construction ensures that no $L$ view is eliminated if this
happens. Thus, the statement that there is some  $H$ strategy that eliminates
an $L$ view corresponds to the statement that $H$ has a way of making
correct assertions in order to show that the player 1 strategy is winning.

\def\mmoved{\mmove}
\def\lastmove{\mathbb{M}}
\newcommand{\win}{\mathit{win}}
\newcommand{\lose}{\mathit{error}}
\newcommand{\C}{\mathbb{C}}
\newcommand{\B}{\mathbb{B}}
\newcommand{\Plates}{\mathbb{P}}
\newcommand{\F}{\mathbb{F}}

\subsubsection{States and observations of $M(G)$.}

Formally, the state space of $M(G)$ is defined as $$\set{s_0}\cup (\C\times
\Plates\times \B\times \lastmove) \cup (\C\times \F) \mathpunct,$$
where $s_0$ is the initial state, and the components are as follows:
\begin{itemize}
\item $\C=\set{ L_1, H_0, L_2, \bot, H_1, \cdots, H_{h_2}}$ encodes
a clock that represents the current stage in a round of the cyclic part of the schedule,
\item $\Plates=\set{1,\dots n}$ represents the plate being monitored,
\item $\B=\set{0,1}$ encodes whether the current plate is ``in'' or ``out'',
\item $\M=\Move_1\cup \Move_2\cup\set{\bot}$ records the most recent move in the play
(and $\bot$ for simulation steps that do not correspond to game steps), and
\item $\F = \set{(\win,\bot),(\lose,\bot),(\win,1),(\lose,1),(\lose,2)}$  records the
result of claims made by $H$.
\end{itemize}

\newcommand{\doneobs}{\mathtt{end}}

The observation mappings for $H$ and $L$ on these  states are given by: 
\begin{itemize}
\item $\obs_L((c, r,1))=1$ and $\obs_L((c, r,2))=2$, where $c\in\C$ and $r\in\set{\win, \lose}$,
      and $\obs_L(s)=\bot$ for all $s$ not of this form.
\item $\obs(s_0) = \bot$, and $\obs_H((c,i,k,a))=a$ and $\obs_H(s)=\doneobs$ for all $s\in \C\times \F$.
\end{itemize}

States of $M(G)$ of type $(c,i,k,a) \in \C\times \Plates\times \B\times \lastmove$
encode information about the effect of the play of the game so far on a particular plate:  $c
\in \C$ indicates the current stage of the simulation, $i \in [1..n]$
is a \emph{monitored} plate, $k \in \B$ is the position of the plate
$i$ and $a \in \lastmove$ is the most recent move made in the game,
or $\bot$ if none. 

\begin{figure}
\centerline{\includegraphics[height=6cm]{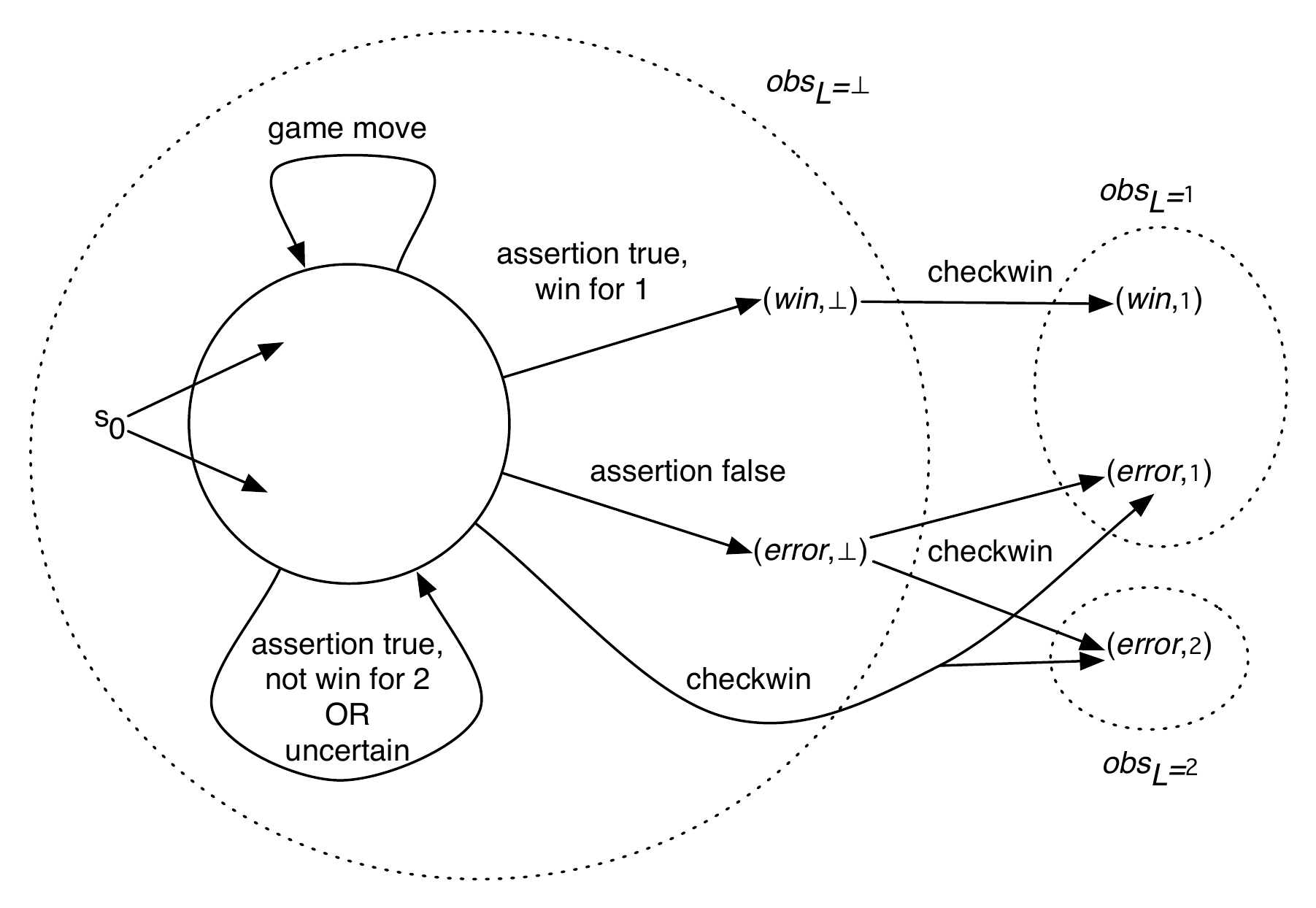}}
\caption{\label{fig:terminal} High level structure of $M(G)$}
\end{figure}

States of $M(G)$ in $\C\times \F$ are used to capture the effect of assertions made
by $H$ relating to the winning conditions, and play a key role in ensuring that
an $L$ view is eliminated under the appropriate conditions.
These states form a ``terminal" part of the machine: it is not possible to return to the
component $\C\times \Plates\times \B\times \lastmove$ from these states.
The $\C$ component simply tracks the simulation stages.
Figure~\ref{fig:terminal} sketches the way that the $\F$ components of these states are
used to check winning conditions for the game and to generate observations.

Intuitively, at various stages of the simulation (viz., $H_0$ and $H_1\ldots H_{h_2}$),
agent $H$ is allowed to make assertions about the state of the game.
When $H$ makes such an assertions, $M(G)$ checks whether they are true at the
plate being simulated.
\begin{itemize}
\item If the assertion entails that there is not
yet a win for player 2, and is true of the present plate, or does not concern the present plate,
then we continue the simulation.
(Specifically, this case occurs when the assertion is that a particular plate is
blocking the hole for player 2 under consideration, and this is true
or concerns another plate.)
\item If the assertion entails a win for player 1  and is true  of the
present plate, then a transition is made to a state $(\win,\bot)$.
(Specifically, the  assertion is that a particular player 1 hole is open, and this holds at the present plate.)

\item The remaining possibility is that the assertion is false.
In this case  we make a transition to a state $(\lose, \bot)$.
(We have this case when either the assertion is that a particular player 1 hole is open, but this is false at the present plate,  or is that the present plate is blocking a particular player 2 hole, but this is false.)
\end{itemize}
When $L$ eventually performs a $\cw$ action in the appropriate phase $L_2$,
states with $(\lose,\bot)$ could produce either the observation $1$ or $2$.
By contrast, states  with $(\win,\bot)$ produce only the observation 1.
Thus, the $\win$ states result in a reduced set of views.
Note that we can only check assertions locally at the current plate, and the winning
condition for player 1 requires that a player 1 hole be unblocked   at all plates.
The way that the encoding handles this is via $L$'s uncertainty about which plate is
being monitored: if there is any plate at which the hole is blocked, then
there will be a run consistent with $L$'s observations, monitoring this plate,
at which we have  $(\lose,\bot)$. When $L$ performs $\cw$ in this run it will obtain
both observations $1$ and $2$,  and the set of views is not reduced.

\subsubsection{Transitions of $M(G)$}

In general, transitions in a machine are labelled by joint actions
$(a_H, a_L)$ of $H$ and $L$, but since $M(G)$ is a scheduled machine,
at most one of these has any effect on the state.  To simplify the
presentation, we use the convention of writing $s\LT{b}t$ with $b\in A_u$, to
indicate that the current transition is dependent only on $u$, and
write $s\LT{\tau}t$ if the transition is independent of both $H$ and
$L$ (i.e., a system step).
To capture the scheduler, we define the function $\nxt: \C \rightarrow
\C$ so that it maps each element of the sequence $L_1, H_0, L_1, \bot,
H_1 \ldots H_{h_2}$ to the element next in the sequence, with
$\nxt(H_{h_2})=L_1$.

\newcommand{\Open}{\mbox{\tt Open}}

To describe the transitions we use the following predicates relating
to the winning conditions. When $i \in \{1,2\}$ is a player, $j=
1..h_i$ is a hole associated to that player, $k \in \Plates$ is a
plate and $b\in \B$ is a plate position, we define $\Open^i_j(k,b)$,
to be true just when either proposition $P_k$ does not occur
positively or negatively in $\gamma^i_j$, or $P_k$ occurs positively
in $\gamma^i_j$ and $b=1$ or $\neg P_k$ occurs in $\gamma^i_j$ and
$b=0$.  Intuitively, this says that when plate $k$ is in position $b$,
it is not blocking hole $j$.

We group the transitions according to the step of the schedule.
We first describe transitions from states in $\{s_0\} \cup
(\C\times\Plates\times\B\times\lastmove)$.

\paragraph{\textbf{Initial Step $\bot$}}
The initial state of $M(G)$ is $s_0$.  From this state,
we non-deterministically choose one plate to be monitored.  
This transition does not depend on any agents and has the form
$$s_0\xrightarrow{\ \tau \ } (L_1,i, \nu_0(i), \bot)$$ 
for $i \in [1\dots n]$, where $\nu_0$ is the initial state of $G$.

\paragraph{\textbf{Stage $L_1$}}
At this stage, we simulate player~1's move of a plate $j$. The special
action $\cw$ is used for the pass move.  For each $\mmove_j \in A_L$,
and state $(L_1,i,k,a)$ there is a transition of the form:
$$(L_1,i,k,a) \xrightarrow{\ \mmove_j \ } (H_0,i,k',\mmoved_j)$$ with
$k'=1-k$ if $j=i$ and $k'=k$ otherwise.  We also add $$(L_1,i,k,a)
\xrightarrow{\ \cw \ } (H_0,i,k,\pass)$$ for the pass move.

\paragraph{\textbf{Stage $H_0$}}
At this stage $H$ may try to prove that player~1 can peek through some hole
$j \in [1..h_1]$.  To do this, it chooses an action $\isopen_j$.  As
it is a guess, $H$ might be wrong.
If $H$ claims that player~1 can peek through hole $j$ and this is right
at the present plate, we reach a ``winning'' state in $\C\times\F$, otherwise an error state.
From a plate simulation state $(H_0,i,k,a)$ there is a transition
\[
(H_0,i,k,a) \xrightarrow{\ \isopen_j \ } (L_2,r,\bot)
\]
where $r=\win$ if $\Open^1_j(i,k)$,
and  $r=\lose$ otherwise. $H$ may also intentionally
choose not to declare a win, by performing any of its actions
$\isblocked_j$ for $j\in\set{1\dots n}$.
This is captured by the transitions
\[(H_0,i,k,a)\xrightarrow{ \ \isblocked_j \ } (L_2,i,k,\bot)\]
for $j\in\set{1\dots n}$.

\paragraph{\textbf{Stage $L_2$}}
At this stage, $L$ can perform the action ``$\cw$'' to check if
$H$ has proved a win by player 1. If the current state at this stage is a plate simulation
state, then $H$ has not yet claimed a win for player 1,
and  any past assertions made by $H$ about player 2's winning condition were either true or irrelevant to the
current plate. In this case, we do not have evidence for a player 1 win,
so we do not  wish to eliminate an $L$ view. Thus, from states $s=(L_2,i,k,a)$,
we have transitions
 \[s \xrightarrow{\ \cw \ } (\bot,\lose,1) \quad \text{and} \quad s \xrightarrow{\ \cw \ } (\bot,\lose,2)\]
 so that both observations 1 and 2 can be obtained.
Agent $L$ is also allowed to continue playing without checking
for a win, by performing any of the actions  $\mmove_j$ with $j\in[1\dots n_1]$.
For this case we have a transition
\[(L_2,i,k,a) \xrightarrow{\ \mmove_j\ } (\bot,i,k,\bot)~\mathpunct. \]
(We discuss the case of  transitions at this stage from states in $\C\times \F$ below.)

\paragraph{\textbf{Stage $\bot$}}
At this stage, we simulate a move of a plate by player~2.  The
following transitions are in $M(G)$:
\[
(\bot,i,k,a) \xrightarrow{\ \tau\ } (H_1,i,k',\mmoved_j)
\]
for each $i \in \Plates$, $k\in \B$
and  $j \in [(n_1+1)\dots n]$
(player~2's plates), and: $k'=1-k$ if $i=j$ and $k'=k$ otherwise.
To model a pass move by player 2 we also have a transition
\[
(\bot,i,k,a) \xrightarrow{\ \tau\ } (H_1,i,k,\pass)~\mathpunct.
\]

\paragraph{\textbf{Stages $H_1$ to $H_{h_2}$}}
In these stages, $H$ tries to prove that
that last move by player 2 was not a winning move
for player 2. It does so by showing that all player~2 holes
are blocked (by at least one plate).
For each player 2 peek hole $j$, at stage $H_j$, agent
$H$ chooses a plate $i\in[1\ldots n]$ and asserts that the hole is blocked by that plate
using the action $\isblocked_i$. An incorrect assertion results in a transition
to an error state $(c, \lose, \bot)$. In particular, if the current state is
indeed a win of player~2, then some hole $j$ is open at all plates, and
any attempt $H$ makes to assert that it closed at a plate causes a transition to an error state.

This is encoded by the following transitions.
At state $(H_j,i',k,a)$ with $j \in [1..h_2]$,
we have
\[
 (H_j,i',k,a) \xrightarrow{\ \isblocked_i \ }
 (\nxt(H_j),i',k,\bot) ~\mathpunct.
\]
 when either $i\neq i'$ (plate $i$ is not monitored in the current state),
  or $i=i'$ and not $\Open^2_j(i,k)$ (the present plate is blocking player 2's hole $j$).
On the other hand, if $i = i'$ and $\Open^2_j(i,k)$, i.e.,  
player~2's peek hole $j$ is not blocked by plate $i$, we have
the transition
\[
 (H_j, i', k, a) \xrightarrow{\ \isblocked_i \ } (\nxt(H_j),\lose,\bot) ~\mathpunct.
\]

As pointed out before, the construction is designed to ensure that $H$
knows the exact state of the game and thus can always determine
whether a peek hole is blocked or not.
Since we are looking for a winning strategy for player 1, if there is
a win by player 2 then the present $L$ strategy has failed.  We would
therefore like to insist that $H$ must play only $\isblocked_i$
actions at this stage of the process.  To ensure this, we define
transitions so that all $H$ actions other than $\isblocked_i$ cause an
error transition at these stages. That is, for all $H$ actions 
$\isopen_j$ and states of the form $(c, i,k, a)$ with $c\in \{H_1,\ldots , H_{h_2}\}$,
we have a transition
\[
(c,i,k,a) \xrightarrow{\isopen_j}
(\nxt(c), \lose,\bot) ~\mathpunct.\]

This completes the description of transitions from states in $\C\times\Plates\times\B\times\lastmove$.

\paragraph{\textbf{Transitions from $\C\times \F$}}
The behaviour of the machine on states in $\C\times \F$
was described informally above. The main effect of actions
is from $\cw$ actions performed at stage $L_2$.
From states with observation $\bot$, the action $\cw$ causes an $L$ observation of $1$ or $2$.
For the $\win$ states we have a transition
\[(L_2, \win, \bot)\LT{\cw}(\bot, \win, 1)\]
 and for the $\lose$ states we have transitions
\[(L_2, \lose, \bot)\LT{\cw}(\bot, \lose, 1)\quad\mbox{and}\quad (L_2,
\lose, \bot)\LT{\cw}(\bot, \lose, 2)~\mathpunct.\]
For all other cases, i.e., for $c= L_2$ and an action $b = \mmove_j$,
or for $c\in \C\setminus \{L_2\}$ and an action $b$
(appropriate to stage $c$),   we have for all $r\in\set{\win, \lose}$
a transition
\[(c, r, \bot)\LT{b}(\nxt(c), r, \bot)~\mathpunct.\]
States at which an observation of $1$ or $2$ has already been obtained by $L$
act as sinks, except for scheduler moves, i.e., we have a transition
\[(c, r, x)\LT{b}(\nxt(c), r, x)\]
for all $c\in \C$, $r\in \set{\win, \lose}$ and $x\in \set{1,2}$.

\subsubsection{Correctness of the construction}

We now give the argument for the correctness of the encoding.

We first characterize the views obtained by the agents in $M(G)$.
In the case of agent $L$, the structure of the possible views follows straightforwardly
from the fact that the transitions as defined above follow the
structure indicated in Figure~\ref{fig:terminal}. Until a $\cw$ action is performed by $L$
at some stage $L_2$ state, $L$ observes $\bot$. Once it performs that action at
this stage, it will observe either $1$ or $2$ for the remainder of time.
Thus, $L$ views are prefixes of the sequences generated by the regular expressions
\[\bot\, A_L\, \bot\, (A_L\, \bot\, A_L\, \bot\,A_L^- \bot A_L \bot (A_L\,\bot)^{h_2})^*\, A_L\, \bot\, A_L\, \bot \,\cw \,  x\, (A_L \, x)^* \]
with $x=1$ or $x=2$. Here the expression $A_L\, \bot\, A_L\, \bot\,A_L^- \bot A_L \bot (A_L\,\bot)^{h_2}$
corresponds to a round in which $L$ does not
perform $\cw$ at stage $L_2$. If $\alpha$ is an $L$ view, we write
$\Lambda(\alpha)$ for the subsequence of player 1 actions performed at times
when the simulation is at stage $L_1$, where we treat a $\cw$ at such a time
as the action $\pass$.

The $H$ views are prefixes of the sequences in the regular expression
\[ \bot (A_H \, (\Move_1\cup \Move_2\cup \set{\bot}))^* (A_H\,\doneobs)^*\]
where the observations obtained at stage $H_0$ are in $\Move_1\cup \set{\bot}$,
the observations obtained at stage $H_1$ are in $\Move_2\cup \set{\bot}$, and
all other observations before the first $\doneobs$ are $\bot$.

We will show that there is a correspondence
between plays of the game $G$ and views $\beta$ of $H$.
In particular, given an $H$ view $\beta$, let $\sigma(\beta) = \lambda_1 \ldots \lambda_n$
be the subsequence of elements of $\Move_1\cup \Move_2$
appearing in observations. It follows from the
definition of the transition relation that $\sigma(\beta)$ is
an alternating sequence of player 1 and player 2 actions.
Since the moves of $G$ have a deterministic effect on the states of $G$,
we obtain a play
$$\varrho(\beta) = \nu_0 \LT{\lambda_1} \nu_1 \LT{\lambda_2} \nu_2 \ldots \LT{\lambda_n} \nu_n$$
of the game $G$. We define $\nu(\beta)$ to be the final state $\nu_n$ of $\varrho(\beta)$.

Consider an action $a$ of $H$ performed at an $H$ view $\beta$ at stage $c$. This will
be recorded in the view of $H$, which will have the form $\beta\, a \, o$
immediately after this action. We say that the action $a$ is {\em truthful} at
view $\beta$, if
\begin{itemize}
\item $\beta= \bot$ is the view obtained at the run $s_0$, or
\item $\beta$ contains an action $\isopen_j$ (intuitively, $H$ has already discharged the obligation to
prove that $L$ wins), or
\item
the stage $c$ is in $\set{L_0, L_1, \bot}$ ($H$ makes no assertion at these stages), or
\item $c= H_0$ and $a = \isopen_j$ and hole $j$ of player 1 is open in state $\nu(\beta)$,
i.e., $\nu(\beta)$ satisfies $\gamma^1_j$, or
\item $c= H_0$ and $a = \isblocked_j$ for some $j$ (this corresponds to
no assertion by $H$), or
\item $c = H_k$ for $k\in [1\ldots h_2]$ and $a = \isblocked_i$ and not
$\Open^2_k(i, \nu(\beta)(i))$, i.e., player 2's hole $k$ is blocked at plate
$i$ in state $\nu(\beta)$.
\end{itemize}
Note that we omit the case where $c=H_k$, $k\geq 1 $ and $a = \isopen_j$;
intuitively, this corresponds to an assertion of False by $H$, which has
an obligation to prove that the play is winning for player 1, and is failing to
do so in this instance.
We say that the view $\beta$ is truthful if for every prefix
$\beta\, a \, o$, the action $a$  is truthful at $\beta$.

We now show that, so long as the  play $\varrho(\beta)$ is undecided,
the knowledge set of $H$ encodes the state $\nu(\beta)$.
For agent $u$ we write $K_u(\alpha)$ for the set of final states of runs $r$ of $M(G)$
with $\view_u(r) = \alpha$, representing agent $u$'s knowledge of the
state after obtaining view $\alpha$.
Given a stage $c$, a state $\nu$ of game $G$ and $a \in \lastmove $, we let $S(c,\nu,a)$ be the set defined
by  $S(c,\nu,a)=\set{(c,i,\nu(i),a)~|~i \in [1..n]}$. Note that $\nu$ can be recovered from the
set $S(c,\nu,a)$.

\begin{proposition} \label{prop:undecided}
Suppose that $\beta$ is an $H$ view in $M(G)$ of length at least 1 such that
$\beta$ does not contain the observation $\doneobs$ and
$\varrho(\beta)$ is an
undecided play of game $G$. Let $c$ be the stage reached at the end of $\beta$ and
let the final observation of $\beta$ be $a$. Then if $\beta$ is truthful,  we have
$K_H(\beta) = S(c, \nu(\beta),a)$.
\end{proposition}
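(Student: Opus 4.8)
The plan is to prove the statement by induction on the length $|\beta|$. The first simplification I would exploit is that, since $\beta$ contains no $\doneobs$ observation, every state of $K_H(\beta)$ lies in the plate-simulation region $\C\times\Plates\times\B\times\lastmove$: the states in $\C\times\F$ all satisfy $\obs_H=\doneobs$, and $s_0$ is excluded by $|\beta|\geq 1$. The engine of the induction is the one-step update of the knowledge set. Writing $\gamma=\beta\,a_H\,o$, a run whose $H$-view is $\gamma$ is exactly a run whose $H$-view is $\beta$ extended by one joint transition with $H$-action $a_H$ reaching a state $t$ with $\obs_H(t)=o$, the $L$-action being unconstrained by the $H$-view; hence
\[ K_H(\gamma)=\set{\, t \mid \exists\, s\in K_H(\beta),\ \exists\, b_L\in A_L,\ s\LT{(a_H,b_L)}t \text{ and } \obs_H(t)=o \,}. \]
The whole proof then reduces to evaluating this image of $S(c,\nu(\beta),a)$ against the transition table of $M(G)$, one schedule stage at a time.

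Before the induction I would record that the three hypotheses descend to prefixes: absence of $\doneobs$ and truthfulness are hereditary directly from their definitions, while $\sigma(\beta)$ is a prefix of $\sigma(\gamma)$, so $\varrho(\beta)$ is a prefix of $\varrho(\gamma)$ and is undecided because it can only contain fewer winning states. This licenses applying the induction hypothesis to the length-$(|\gamma|-1)$ prefix $\beta$ of $\gamma$ in the inductive step. For the base case $|\beta|=1$ I would compute directly: the only transitions out of $s_0$ are the system steps $s_0\LT{\tau}(L_1,i,\nu_0(i),\bot)$ for $i\in[1..n]$, all carrying $\obs_H=\bot$, so $K_H(\beta)=\set{(L_1,i,\nu_0(i),\bot)\mid i\in[1..n]}=S(L_1,\nu_0,\bot)$, which is $S(c,\nu(\beta),a)$ since here $c=L_1$, $\nu(\beta)=\nu_0$ and $a=\bot$.

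The inductive step is a case analysis on the stage $c$ at the end of $\beta$, starting from $K_H(\beta)=S(c,\nu(\beta),a)$ and computing the image under the update above. In the stages where $L$ is scheduled ($L_1$, $L_2$) or the step is a system step ($\bot$), the transition is independent of $a_H$ and the observation $o$ pins down which move was made; I would check that each monitored plate $i$ has a unique successor carrying observation $o$, with last-move component $o$ and position flipped exactly when $o=\mmove_i$, so that the image is precisely $S(\nxt(c),\nu(\gamma),o)$ for the game update $\nu(\gamma)$ of $\nu(\beta)$ (a player~1 move at $L_1$, no game effect at $L_2$ where $L$'s $\mmove_j$ only skips the win-check and yields $o=\bot$, and a player~2 move at $\bot$). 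At stage $H_0$ the action $\isopen_j$ produces observation $\doneobs$ and is excluded by hypothesis, leaving only $\isblocked_j$, which sends each $(H_0,i,\nu(\beta)(i),a)$ unconditionally to $(L_2,i,\nu(\beta)(i),\bot)$ and so yields $S(L_2,\nu(\beta),\bot)$.

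The delicate case, which I expect to be the main obstacle, is a stage $H_k$ with $k\in[1..h_2]$, where truthfulness is indispensable. Here every $\isopen_j$ errors to $\C\times\F$ (observation $\doneobs$), so the only relevant action is $a_H=\isblocked_i$ and necessarily $o=\bot$. On a run monitoring a plate $i'\neq i$ the claim is not checked and the run advances to $(\nxt(H_k),i',\nu(\beta)(i'),\bot)$; on the run monitoring plate $i'=i$ the claim is checked and errors to $(\nxt(H_k),\lose,\bot)$ --- deleting plate $i$ from the knowledge set --- exactly when $\Open^2_k(i,\nu(\beta)(i))$ holds. But truthfulness at $H_k$ is defined to be precisely $\neg\,\Open^2_k(i,\nu(\beta)(i))$, so under the hypothesis plate $i$ is retained, no plate is lost, and the image is the full set $S(\nxt(H_k),\nu(\beta),\bot)=S(\nxt(c),\nu(\gamma),o)$. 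The care required in writing this up is to establish both inclusions: the forward one because every error target carries $\obs_H=\doneobs\neq o$, and the reverse one because truthfulness makes the monitored plate survive while all other plates survive unconditionally. The remaining bookkeeping --- that $\nxt$ advances the clock correctly in each case, and that $\pass\in\Move_1\cap\Move_2$ so passes are tracked in $\sigma$ --- is routine.
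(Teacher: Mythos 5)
Your proposal is correct and follows essentially the same route as the paper's proof: induction on $|\beta|$, the same one-step update formula for $K_H$, and the same stage-by-stage case analysis in which truthfulness is used precisely to keep the monitored plate from being sent to an error state at stages $H_0$ and $H_1,\ldots,H_{h_2}$. The only cosmetic difference is that you rule out $\isopen_j$ at the $H$-stages via the no-$\doneobs$ hypothesis where the paper invokes truthfulness together with undecidedness; both arguments are sound.
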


\begin{proof}
We proceed by induction on the length of $\beta$.
If $\beta$ has length 1, then it arises from a run $$s_0 \LT{(a_H,a_L)} (L_1,i, \nu_0(i), \bot)$$
so we have $\beta = \bot a_H\bot$, and $\sigma(\beta)$ is the empty sequence, corresponding to the play
$\varrho(\beta) = \nu_0$.
(Since we are interested in views, we
use the explicit form of runs here, with actions of both $H$ and $L$ given,
rather than the shorthand form used above, which mentioned only actions of the scheduled agents
and implies that action the other agents may take all possible values.)
The runs consistent
with $\beta$ are
$$s_0 \LT{(a_H,a)} (L_1,j, \nu_0(j), \bot)$$
where $j\in \Plates$ and $a$ is some action of $L$. It is immediate that the claim holds.

Inductively, assume that $\beta \, a_H \, o$ is a truthful $H$ view in $M(G)$ such that
$K_H(\beta) = S(c, \nu(\beta),a)$ for some $c\in \C$ and $a\in \lastmove$.
We need to show that $K_H(\beta\,a_H\, o) = S(c', \nu(\beta\, a_H\, o),a')$ for some $c'\in \C$ and $a'\in \lastmove$.
Note that
$$K_H(\beta\,a_H \, o) = \set{t~|~s\in K_H(\beta)\text{ and } a\in A_L \text{ and } s\LT{(a_H,a)} t\text{ and }\obs_H(t) = o} ~\mathpunct.$$
We consider each of the possible cases of the stage $c$.

\paragraph{Stage $c=L_1$}
Here we have either $o = \mmoved_j$ or $o= \pass$, which
records the action of $L$, and the transition does not depend on $a_H$.
Note that $\sigma(\beta \, a_H \, o) = \sigma(\beta)\, o$ in this case.
Let  $\nu(\beta) \LT{o} \nu'$, so that $\nu(\beta\, a \, o) = \nu'$.
For each plate $i\in \Plates$, we have $(L_1,i,\nu(i),a) \in  K_H(\beta)$,
and there is a unique transition
$$(L_1,i,\nu(i),a) \LT{(a_H,o)} (H_0,i, \nu'(i), o)$$
yielding observation $o$ at the next state.
It follows that $K_H(\beta\,a_H\, o) = S(H_0, \nu(\beta\, a_H\, o),o)$.

\paragraph{Stage  $c=H_0$} Here we have $o = \bot$,
and $\sigma(\beta\,a_H \, o) = \sigma(\beta)$ and $\nu(\beta\, a\, o) = \nu(\beta)$.
Since  $\varrho(\beta \, a \, o)$ is
undecided, $\nu(\beta)$ is not a winning position for player 1.
Because $\beta\,a_H \, o$ is truthful, we cannot have
that $a_H$ is $\isopen_j$ for any $j$, so we must have that
$a_H = \isblocked_j$ for some $j$.
For each plate $i\in \Plates$, we have $(H_0,i,\nu(\beta)(i),a) \in  K_H(\beta)$,
and there is a transition
$$(H_0,i,\nu(\beta)(i),a) \LT{(a_H,a')} (L_1,i, \nu(\beta)(i), \bot)$$
yielding observation $\bot$ at the next state for each action $a'$ of $L$.
It follows that $K_H(\beta\,a_H\, o) = S(H_0, \nu(\beta\, a_H\, o),o)$.

\paragraph{Stage  $c=L_2$} Since $\beta$ does not contain $\doneobs$, here we have $o = \bot$,
and $L$ cannot have performed the action $\cw$.
Also $\sigma(\beta\,a_H \, o) = \sigma(\beta)$ and $\nu(\beta\, a\, o) = \nu(\beta)$.
Thus,  from each  $(L_2,i,\nu(\beta)(i),a) \in  K_H(\beta)$,
we have  a transition
$$(L_2,i,\nu(\beta)(i),a) \LT{(a_H,a')} (\bot,i, \nu(\beta)(i), \bot)$$
from which we obtain that $K_H(\beta\,a_H\, o) = S(\bot, \nu(\beta\, a_H\, o),\bot)$.

\paragraph{Stage  $c=\bot$} Here we have that $o = \pass$ or $o = \mmoved_i$
for some plate $i = n_1+1\ldots n$ of player 2.
Thus, $\sigma(\beta\,a_H \, o) = \sigma(\beta)\, o$ and
if $\nu(\beta) \LT{o} \nu'$ then $\nu(\beta\, a\, o) = \nu'$.
For each plate $i\in \Plates$, we have $(\bot,i,\nu(\beta)(i),a) \in  K_H(\beta)$,
and there is a unique transition
$$(\bot,i,\nu(\beta)(i),a) \LT{(a_H,o)} (H_1,i, \nu'(i), o)$$
yielding observation $o$.
It follows that $K_H(\beta\,a_H\, o) = S(H_1, \nu(\beta\, a_H\, o),o)$.

\paragraph{Stage  $c=H_i$ for $i= 1\ldots h_2$}
Here we have $o = \bot$,
and $\sigma(\beta\,a_H \, o) = \sigma(\beta)$ and $\nu(\beta\, a\, o) = \nu(\beta)$.
Since $\beta\,a_H \, o$ is truthful and $\sigma(\beta\,a_H \, o)$ is undecided,
the position is not winning for player 2. Thus, we must have that $a_H = \isblocked_j$
for some $j$ such that not $\Open^2_i(j,\nu(\beta)(j))$.
For each plate $i'\in \Plates$, we have $(H_i,i',\nu(\beta)(i'),a) \in  K_H(\beta)$,
and there is a transition
$$(H_i,i',\nu(\beta)(i),a) \LT{(a_H,a)} (\nxt(H_i),i', \nu(\beta)(i'), \bot)$$
yielding observation $\bot$ for every $L$ action $a$.
It follows that $K_H(\beta\,a_H\, o) = S(H_1, \nu(\beta\, a_H\, o),o)$. \qed
\end{proof}

In fact, we can show this characterization of $K_H(\beta)$ in one
further case, corresponding to the state of the simulation just after
player 1 plays a winning move.

\begin{proposition} \label{prop:undecided-final}
Suppose that $\beta$ is an $H$ view in $M(G)$ of length at least 1 such that
$\beta$ does not contain the observation $\doneobs$ and
$\varrho(\beta)$ is a play in which the last move is a move of player 1 by which
player 1 wins the game. Assume that no shorter prefix of $\beta$ has this property and
let the final observation of $\beta$ be $a$. Then if $\beta$ is truthful,  we have
$K_H(\beta) = S(H_0, \nu(\beta),a)$.
\end{proposition}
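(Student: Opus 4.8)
Since the winning move is a move of player~1, it is performed by $L$ at a stage $L_1$ transition, whose target is a stage $H_0$ state recording that move. Hence the stage reached at the end of $\beta$ is $H_0$, and the final observation $a$ is the player~1 move just made. The plan is to obtain $K_H(\beta)$ from $K_H(\beta')$, where $\beta'$ is the prefix of $\beta$ with its last $H$ action and observation removed, by replaying exactly the stage $L_1$ case of the induction step in the proof of Proposition~\ref{prop:undecided}.

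First I would check that $\beta'$ satisfies the hypotheses of Proposition~\ref{prop:undecided}. The view $\beta'$ ends at a stage $L_1$ state and has length at least $1$ (since $\beta$ has length at least $2$). It does not contain $\doneobs$, being a prefix of $\beta$, and it is truthful, since truthfulness is a condition on all prefixes of a view and $\beta$ is truthful. Crucially, by the minimality assumption --- no shorter prefix of $\beta$ witnesses a player~1 win --- the play $\varrho(\beta')$, which is $\varrho(\beta)$ with its last move deleted, is undecided. Thus Proposition~\ref{prop:undecided} applies and yields $K_H(\beta') = S(L_1, \nu(\beta'), a')$, where $a'$ is the final observation of $\beta'$.

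Next I would carry out the single-step knowledge update for the $L_1 \to H_0$ transition. Writing $o$ for the recorded player~1 move (so $a = o$) and letting $\nu(\beta') \LT{o} \nu(\beta)$, I note that at stage $L_1$ the transition depends only on the $L$ action: from each $(L_1, i, \nu(\beta')(i), a') \in K_H(\beta')$ there is a unique transition to $(H_0, i, \nu(\beta)(i), o)$ producing observation $o$ at the successor state, independently of $H$'s action. Taking the union over all plates $i \in \Plates$ gives $K_H(\beta) = S(H_0, \nu(\beta), o) = S(H_0, \nu(\beta), a)$, as required.

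The main obstacle --- and the reason this case is not already subsumed by Proposition~\ref{prop:undecided} --- is that $\varrho(\beta)$ is no longer undecided, so the proposition cannot be invoked on $\beta$ directly. The key observation that makes the argument go through is that the machine performs no winning-condition test on the $L_1 \to H_0$ transition: player~1 win assertions are checked only at stage $H_0$ (via $H$'s $\isopen_j$ actions) and player~2 wins only at stages $H_1, \dots, H_{h_2}$. Consequently the state update at stage $L_1$ merely flips the monitored plate and records the move, behaving identically whether or not the resulting position $\nu(\beta)$ happens to be winning for player~1. This is exactly why the undecided-case computation transfers verbatim to the first winning configuration.
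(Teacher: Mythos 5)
Your proof is correct and follows essentially the same route as the paper's: decompose $\beta$ as $\beta' a_H o$ with $\beta'$ at stage $L_1$, use minimality to see that $\varrho(\beta')$ is undecided so Proposition~\ref{prop:undecided} gives $K_H(\beta') = S(L_1,\nu(\beta'),a')$, and then replay the stage-$L_1$ case of that proposition's induction step for the final transition. Your closing remark explaining why the $L_1\to H_0$ update is insensitive to whether $\nu(\beta)$ is winning is a helpful elaboration of what the paper leaves implicit.
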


\begin{proof}
The minimality constraint on $\beta$ implies that $\beta = \beta' a_H o$
where $\beta'$ is at stage $L_1$,  that $\varrho(\beta')$ is not a
winning play for either player, and that $o\in \Move_1$ is a move of player~1
such that $\varrho(\beta' a_H o) = \varrho(\beta') \LT{o} \nu(\beta)$.
By Proposition~\ref{prop:undecided}, we obtain that
$K_H(\beta') = S(L_1, \nu(\beta),a)$ for some $a$. The argument for the case of
$c= L_1$ in the proof of Proposition~\ref{prop:undecided-final} now yields
the conclusion. \qed
\end{proof}

\begin{proposition} \label{prop:LH}
Suppose that $r$ is a run of $M(G)$ and $s\in K_H(\view_H(r))$ is a state in $\C\times\Plates\times\B\times\lastmove$.
Then there exists a run $r'$ of $M(G)$ with final state $s$ such that $\view_H(r') = \view_H(r)$ and $\view_L(r') = \view_L(r)$.
\end{proposition}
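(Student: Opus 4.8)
The plan is to obtain $r'$ by fixing a witnessing run for $s\in K_H(\view_H(r))$ and then rewriting only its $L$-actions. Write $\beta=\view_H(r)$ and fix a run $r''$ with $\view_H(r'')=\beta$ whose final state is $s$. The first step is to observe that both $r$ and $r''$ stay inside the plate-simulation region $\C\times\Plates\times\B\times\lastmove$ for their whole length. Indeed, the states of $\C\times\F$ form a sink, and $\obs_H$ separates the two regions: it takes the value $\doneobs$ on $\C\times\F$ and a value in $\lastmove$ on every plate-simulation state. Since $\obs_H(s)$ is the final symbol of $\beta$ and belongs to $\lastmove$, it is not $\doneobs$, so neither run ever enters $\C\times\F$. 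Throughout the plate-simulation region $\obs_L$ is constantly $\bot$; hence both $\view_L(r)$ and the view of any run we build in this region are determined entirely by their $L$-action sequences.

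The crux is to isolate which $L$-actions can affect the state. As $M(G)$ is scheduled with the deterministic clock $L_1,H_0,L_2,\bot,H_1,\ldots,H_{h_2}$, both $r$ and $r''$ sit at the same stage at each step. At the $H$-stages $H_0,H_1,\ldots,H_{h_2}$ and at the system ($\bot$) steps the transition does not depend on $L$'s action, so altering it cannot change the target state; at stage $L_2$ every move $\mmove_j$ leads to the single state $(\bot,i,k,\bot)$, and a run confined to the plate-simulation region never plays $\cw$ there. Thus the only $L$-action that influences the state is the one taken at stage $L_1$ (a move $\mmove_j$ or a pass $\cw$), and this choice is echoed verbatim as $H$'s observation at the next state. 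Being recoverable from $\beta$, it must coincide for $r$ and $r''$ at every $L_1$ step, since these runs share the $H$-view $\beta$.

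Given this, I would define $r'$ to traverse exactly the state sequence of $r''$, pairing $r''$'s $H$-actions (which equal those of $r$, since the $H$-views agree) with $r$'s $L$-actions at every step. Each step is then a legal transition to $r''$'s next state: at $L_1$ the two runs already use the same $L$-action; at $L_2$ the target is the same for every $\mmove_j$, and $r$, staying in the region, supplies such a move; at the $H$-stages and the system ($\bot$) steps the transition does not constrain $L$, so by input-enabledness $r$'s $L$-action is admissible and the inherited $r''$-target is reached. Hence $r'$ is a genuine run with the same states as $r''$; its final state is $s$ and $\view_H(r')=\view_H(r'')=\beta=\view_H(r)$, while, since $r'$ uses $r$'s $L$-actions and $\obs_L$ is constantly $\bot$ along it, $\view_L(r')=\view_L(r)$, as required.

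The step demanding the most care --- and the only one that really exploits the construction of $M(G)$ --- is the claim that the $L_1$-action is the sole state-affecting $L$-choice and that $r$ and $r''$ are forced to agree on it. I would establish this by a stage-by-stage inspection of the transition tables: confirming that $L$ is genuinely unscheduled at $H_0,H_1,\ldots,H_{h_2}$ and at the $\bot$ steps, that the $L_2$-continuation transitions all collapse to one target, and that the $L_1$-choice reappears unchanged as $H$'s next observation, so that equal $H$-views force equal $L_1$-actions. The remaining arguments are routine bookkeeping about the fixed schedule and the sink structure of $\C\times\F$.
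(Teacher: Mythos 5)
Your proof is correct and rests on the same key observations as the paper's: a stage-by-stage analysis showing that $L$'s action is either irrelevant to the transition (stages $H_0,\ldots,H_{h_2}$ and $\bot$), collapses to a single target (stage $L_2$, where $\cw$ is excluded because both runs stay in $\C\times\Plates\times\B\times\lastmove$), or is forced to agree with $r''$'s because it is echoed in $H$'s next observation (stage $L_1$). The only difference is packaging --- you splice $r$'s $L$-actions into a single global witnessing run $r''$, whereas the paper performs the same splice one step at a time by induction on the length of $r$ --- and your one loose phrase (``by input-enabledness'') should really appeal to the scheduled-machine property that the transition relation at those stages is invariant under changing the $L$-component, which you do also state.
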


\begin{proof}
By induction on the length of $r$. For $r=s_0$, the
statement is trivial, since we must have $K_H(\view_H(r)) = \set{s_0}$.
For another base case, suppose $r$ is the initial step of a run.
In this case, $$r= s_0 \LT{(a_H,a_L)} (L_1,i,\nu_0(i),\bot)$$ and
$\view_H(r) = \bot \, a_H\,  \bot$ and $\view_L(r) = \bot \, a_L\,  \bot$.
If $s\in K_H(\view_H(r))$, then we must have
$s= (L_1, j,\nu_0(j), \bot)$ for some $j$. We may
take $$r' =  s_0 \LT{(a_H,a_L)} (L_1,j,\nu_0(j),\bot)$$ and this has the
required properties.

For the induction, let $$r = r_1 \LT{(a_H,a_L)} t$$ where the result holds for $r_1$,
which is at stage $c$.
Let $s\in K_H(\view_H(r))$.
Since $s\in \C\times\Plates\times\B\times\lastmove$, the final observation of
$\view_H(r)$ is not $\doneobs$, and it follows that $t\in \C\times\Plates\times\B\times\lastmove$ also.
We have that $s = (\nxt(c), i, k,a)$ for some $i,k,a$, and arises
in  $K_H(\view_H(r))$ from some run $$r_2\LT{(a_H', a_L')} s$$ with
$\view_H(r_2) = \view_H(r_1)$ and $a_H' = a_H$ and $\obs_H(s) = \obs_H(t)$.
Necessarily, the final state of $r_2$ is in $\C\times\Plates\times\B\times\lastmove$.
Moreover, it is in $K_H(\view_H(r_2)) = K_H(\view_H(r_1))$.
By the induction hypothesis, there exists a run $r_3$ ending in the same final
state as $r_2$, with $\view_H(r_3) = \view_H(r_1)$ and $\view_L(r_3) = \view_L(r_1)$.
We consider the possibilities for the
scheduler step $c$ of $r_1$:
\begin{itemize}
\item {\em Case $c=L_1$}:
Note that at this stage, the final $L$ action in $r$ can be deduced from
$\obs_H(s) = \obs_H(t)$, so in fact we have $a_L' = a_L$ also.  Let $$r' = r_3
\LT{(a_H,a_L)} s~\mathpunct.$$ This is a run because $r_3$ and $r_2$ have the
same final state, and the final transition in $r'$ is identical to the final
transition of the run $r_2$. Then  $\view_H(r') = \view_H(r_3)\, a_H \,
\obs_H(s) = \view_H(r_1) \, a_H \, \obs_H(s) = \view_H(r)$ and $\view_L(r') =
\view_L(r_3)\,  a_L\,  \bot  = \view_L(r_1)\,  a_L \, \bot = \view_L(r)$, as
required.

\item {\em Case $c = H_k$ for $k = 0..h_2$}:  Transitions at these stages are independent of
$L$, so we can switch the action of $L$ in any transition label while keeping the states the
same. So $$r' = r_3 \LT{(a_H,a_L)} s$$
is a run and satisfies the required properties.

\item {\em Case $c = L_2$}:
Here it follows from $t\in \C\times\Plates\times\B\times\lastmove$ that
$a_L \neq \cw$. Hence $a_L = \mmove_j$ for some $j$, and $\obs_H(t) =  \obs_H(s) =\bot$.
For the same reasons, $a_L' = \mmove_{j'}$ for some $j'$.
The transitions for $\mmove_j$ and $\mmove_{j'}$ at this stage are identical.
We may therefore take
$$r' = r_3 \LT{(a_H,a_L)} s$$
and this is a run and satisfies the required properties.

\item{\em Case $c= \bot$}:  Here transitions are independent of both players,
so $$r' = r_3 \LT{(a_H,a_L)} s$$
is a run and satisfies the required properties.
\end{itemize}
This completes the proof of the inductive case. \qed
\end{proof}

We can now prove the key result that shows that $G$ has a winning strategy for player~1 iff
$M(G)$ satisfies $\tNDS$.

\begin{lemma}
  There exists  a
  winning strategy for player~1 in $G$
  iff there is an $L$  view $\alpha$ of $M(G)$ and an $H$ strategy $\pi$ that excludes
  $\alpha$ in $M(G)$.
\end{lemma}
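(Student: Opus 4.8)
The plan is to prove the two directions of the biconditional separately, using the fact (via Proposition~\ref{prop:charnds}) that the right-hand side is exactly the assertion $M(G)\notin\tNDS$. The whole argument turns on the $\cw$ action at stage $L_2$: from a $\win$ state only the observation $1$ is produced, whereas from a $\lose$ state, or from a plate-simulation state $(L_2,i,k,a)$, both $1$ and $2$ are produced; and the only route into a $\win$ state is an $\isopen_j$ assertion by $H$ at a plate where hole $j$ is actually open. I would first record the normal form of a witness: an excluded $L$ view must end in a $\cw$ yielding observation $2$ (one checks that a view ending in observation $1$ can never be excluded, since the all-$\bot$ prefix is producible under any $\pi$ and observation $1$ can be emitted from every reachable $L_2$ state), and before that $\cw$ all $L$ observations are $\bot$. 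Thus a witness is pinned down by the sequence $\Lambda(\alpha)$ of player~1 moves encoded in the $L_1$ stages, together with the round $T$ at which $L$ finally plays $\cw$.

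For the direction from a winning strategy to an exclusion, let $\Lambda$ be a winning (blindfold) strategy for player~1, i.e.\ a sequence of moves. Since the game has finitely many states and player~2 branches finitely, K\"{o}nig's lemma yields a uniform bound $T$ such that along every play consistent with $\Lambda$ player~1 reaches a winning position within $T$ rounds. I take $\alpha$ to be the $L$ view in which $L$ plays $\Lambda$ at every $L_1$ stage, does not check at $L_2$ for the first $T$ rounds, and then performs $\cw$ at stage $L_2$ of round $T$ observing $2$; this is a possible $L$ view (realized by a run in which the monitored plate is never driven out of the plate-simulation component). I define $\pi$ to be the \emph{truthful} strategy: at $H_0$, if the state $\nu(\beta)$ read off from $H$'s view is a player~1 win via a hole $j^*$, play $\isopen_{j^*}$, else make no claim; at each $H_k$ assert $\isblocked_i$ for some plate $i$ blocking player~2's hole $k$. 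By Propositions~\ref{prop:undecided} and~\ref{prop:undecided-final}, $H$'s knowledge set equals $S(c,\nu(\beta),\cdot)$ while the play is undecided and $\beta$ truthful, so $\pi$ is well defined and $H$ genuinely knows the game state. When player~1 first wins via $j^*$, the claim $\isopen_{j^*}$ is correct at \emph{every} plate, so each $(H_0,i,\nu(i),\cdot)$ in the knowledge set moves to $(L_2,\win,\bot)$, and these $\win$ states persist to round $T$. Hence every $\pi$-consistent run producing the $\bot$-prefix of $\alpha$ is in a $\win$ state at the final $\cw$ and emits $1$, so $\pi$ excludes $\alpha$.

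For the converse, suppose $\pi$ excludes an $L$ view $\alpha$, which by the normal form ends with $\cw$ and observation $2$ at some round $T$; let $\Lambda(\alpha)$ be its encoded player~1 moves. I argue the contrapositive: if $\Lambda(\alpha)$ is not winning, there is a player~2 play for which the induced game play $\varrho$ is not a player~1 win by round $T$, and I construct a $\pi$-consistent run producing $\alpha$. Fixing the player~2 play determines $H$'s moves through $\pi$, since $H$'s view tracks $\varrho$ independently of the monitored plate; it then suffices to choose a monitored plate whose run never enters a $\win$ state before the $\cw$. A plate enters $\win$ only at the first $H_0$ stage where $H$ asserts some $\isopen_j$ while the plate is still simulated and hole $j$ is open there. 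If player~2 wins first, then at the relevant $H_k$ stage any assertion by $H$ sends the pointed-at plate to $\lose$ \emph{before} any win-claim can fire; if the play stays undecided, then at the first $\isopen_j$ claim hole $j$ is not open at all plates (else player~1 would already have won), so some plate is blocked and moves to $\lose$ rather than $\win$; and if $H$ never claims a win, no plate enters $\win$ at all. In each case there is a plate that is, at the $\cw$, either a plate-simulation state or a $\lose$ state; I use Proposition~\ref{prop:LH} to realize it as an actual run with $L$ view the $\bot$-prefix of $\alpha$, and extend it by $\cw$ to obtain observation $2$, reconstructing $\alpha$ and contradicting exclusion.

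The main obstacle is this reverse direction, specifically selecting a \emph{single} monitored plate that escapes $\win$ over the entire prefix of $\alpha$, despite $H$ making many assertions across many rounds. The crux is that once a plate reaches a $\lose$ state it is trapped in the terminal $\F$-component and can never afterwards be sent to $\win$, so only the first $\isopen$ win-claim (or the moment player~2 wins) matters; this is what makes the case analysis local rather than global. One must also verify that the constructed run is genuinely consistent with $\pi$: because transitions out of $\F$-states are independent of the acting agent's choice and $\pi$ is total on candidate views, whatever $\pi$ dictates after a plate has entered $\lose$ leaves the produced $L$ view unchanged. The remaining work—matching observation sequences to $\alpha$ and confirming that the split (player~2 wins first versus undecided) is exhaustive using the round structure $L_1 H_0 L_2 \bot H_1\cdots H_{h_2}$—is routine bookkeeping.
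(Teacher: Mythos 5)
Your proof is correct and follows the paper's skeleton quite closely in most respects: the same normal form for an excluded view ($\alpha_1\,\cw\,2$ with an all-$\bot$ prefix, justified by showing observation $1$ is always obtainable at any $L_2$ state), the same use of K\"onig's lemma to bound the witness length, the same truthful strategy $\pi$ in the forward direction (with Propositions~\ref{prop:undecided} and~\ref{prop:undecided-final} guaranteeing that $H$'s knowledge set encodes the game state, so that the first correct $\isopen_{j^*}$ drives every consistent run into a $\win$ state that forces observation $1$), and the same reliance on Proposition~\ref{prop:LH} to realize a contradicting run in the converse. Where you genuinely diverge is the organization of the converse direction. The paper runs an induction over prefixes of the $H$ view of the constructed run, proving simultaneously that every $H$ action is truthful and that the play never becomes winning for player~2, refuting each possible violation separately; the conclusion is that \emph{all} plates remain in $\C\times\Plates\times\B\times\lastmove$, whence the final $\cw$ can emit $2$. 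You instead exhibit a \emph{single} plate that never enters a $\win$ state, exploiting the fact that $\lose$ is absorbing so that only the first ``bad event'' (a premature $\isopen$ claim at $H_0$, or player~2 reaching a winning position) matters; this suffices because both simulation states and $\lose$ states emit $2$ on $\cw$. Your route is more economical, avoiding the global truthfulness induction. Two small points need tightening, both of the kind you flag as routine: (i) Proposition~\ref{prop:LH} is stated only for states in $\C\times\Plates\times\B\times\lastmove$, so for a plate that has already fallen into $\lose$ you must invoke it at the last simulation state and then extend the run through the $\F$-component, where transitions are action-independent; and (ii) your three cases overlap (e.g.\ player~2 may win first \emph{and} $H$ may have made an earlier $\isopen$ claim) --- the clean formulation is to take the first round at which either event occurs and observe that some plate falls into $\lose$ there, before any all-plates-$\win$ transition can fire. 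With those adjustments the argument is complete.
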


\newcommand{\skp}{a_0}

\begin{proof}\mbox{}
  \paragraph{Only If Part}
  Assume player~1 has a winning strategy  in $G$.
  As argued before, this strategy can be given by the list $\Lambda =\lambda_1, \lambda_3 \ldots $
  of moves of player~1.
  The number of moves player~1 needs to win is bounded: indeed, in
  every play of the game, a winning position for player 1 is
  eventually reached, and no winning position for player 2 is reached
  before this position. By Koenig's lemma, there exists a number $N$
  such  in all plays of the game compatible with $\Lambda$, player 1 has won the game
  at the latest, just after the $N$-th move. Thus, we may assume that
  $\Lambda =  \lambda_1, \lambda_3 \ldots \lambda_N$.

  We can prove that there exists an $H$ strategy $\pi$ in $M(G)$ that
  excludes the $L$ view
\[
\alpha = \bot \,\skp\, \bot\,  a_{1}\, \bot\,  (\skp\, \bot)^{3+h_2}\, a_{3}\, \bot \,
(\skp\, \bot)^{3+h_2} \ldots a_{N}\, \bot\, \skp\, \bot\, \cw\, 2.
\]
where $\skp$ denotes any letter in $A_L^-$,
and each $a_i$ for $i$ odd is the $L$ action that corresponds to $\lambda_i$
at stage $L_1$, i.e., $a_i= \lambda_i$ if $\lambda_i = \mmove_j$ for some $j$, and
$a_i= \cw$ if $\lambda_i = \pass$. The strategy $H$ is defined as follows.
For $H$ views $\beta$ such that the sequence
of player 1 moves in $\sigma(\beta)$ is a prefix of $\Lambda$,
we let $\pi(\beta)$ be any truthful action of $H$
at $\beta$. In all other cases, $\pi(\beta)$ is chosen arbitrarily.

We first need to show that $\pi$ is well-defined. For this, we need to
show that $H$ is able to act truthfully whenever
the sequence  of player 1 moves in
 $\sigma(\beta)$ is a prefix of $\Lambda$.
 Suppose, therefore, that the sequence  of player 1 moves in
 $\sigma(\beta)$ is a prefix of $\Lambda$. Then the play $\varrho(\beta)$ is
 not winning for player 2, since $\Lambda$ is a winning strategy for
 player 1.  There are two possibilities: the play is undecided, or the
 play is winning for player 1. If the play $\varrho(\beta)$ is undecided, then
 by Proposition~\ref{prop:undecided},   we have that
 $K_H(\beta) = S(c, \nu(\beta),a)$. Since $\nu(\beta)$ is, by definition,  the final state of
 $G$ reached in the play $\sigma(\beta)$, if $\sigma(\beta)$
 ends in a move of player 1 then  $\nu(\beta)$ is not a winning state for player 1, and
 if $\sigma(\beta)$  ends in a move of player 2 then  $\nu(\beta)$ is not a winning
 state for player 2. In either case, depending on the stage, it is possible to select an
 action that is truthful at $\beta$.

 In the other case, the play $\varrho(\beta)$ is already winning for player 1.
 Let $\beta'$ be the smallest prefix of $\beta$ such that $\sigma(\beta')$ is
 winning for player 1. Since the last action in $\sigma(\beta')$ is a move of
 player 1 (we can assume without loss  of generality that the game is
 undecided at the initial state),  must have that $\beta'$ is at stage $H_0$,
 and $\nu(\beta)$ is a winning state for player 1.
In the case that $\beta' = \beta$, we choose $\pi(\beta)$
to be any action $\isopen_j$ such that $\nu(\beta)$ satisfies $\gamma^1_j$,
i.e., player 1's hole $j$ is open at all plates in $\nu(\beta)$. This is then a truthful
action at $\beta$.  In all other cases,
we choose $\pi(\beta)$ arbitrarily. (Note that, by definition, after $H$'s first $\isopen_j$, any choice
of $H$ action is truthful.)

We now argue that $\pi$ excludes view $\alpha$. To the contrary, suppose that
$r$ is run consistent with $\pi$ and $\view_L(r) = \alpha$. Consider $\beta = \view_H(r)$, and write this
as $\beta = \beta_1\, a_H\, o$. Then the sequence  of player 1 moves in $\sigma(\beta_1)$
is $\Lambda$. Since $\Lambda$ is a winning strategy for player 1, the play $\varrho(\beta_1)$ is a winning
play for player 1. Consider the shortest prefix $\beta_2$ of $\beta_1$ such that
$\sigma (\beta_2)$ is a winning play for player 1. Then $\beta_2$ is at stage $H_1$, and
there is at least one $H$ action  $a$ and observation $o'$ such that $\beta_2 \, a\, o'$ is a prefix of $\beta$.
(In the worst case, $\beta_2 = \beta_1$ and $a= a_H$.) By construction of $\pi$,
$a$ is an action $\isopen_j$ that is truthful at $\beta_2$. Using Proposition~\ref{prop:undecided-final},
$K_H(\beta_2) = S(H_1, \nu(\beta_2), a')$ for some $a'$. Because
$\isopen_j$ is truthful at $\beta_2$, we obtain that  $K_H(\beta_2 \, a\, o') = \set{(\bot,\win,\bot)}$.

In particular, the prefix $r_2$ of $r$ with
$\view_H(r_2) = \beta_2\, a\, o' $ has final state $(\bot,\win,\bot)$.
Since $L$ does not perform  $\cw$ at stage $L_2$ in the interim,
the prefix $r_1$ of $r$ with $\alpha = \view_L(r_1)\, \cw \, 2 $
has final state $(L_2,\win,\bot)$. But then
we get that the final state of $r$, after $L$ performs $\cw$,
is  $(\bot,\win,1)$, which yields an $L$ observation  of 1 rather than the
final observation 2 of $\alpha$. This is a contradiction.

\paragraph{If Part}
  Assume there is no  winning strategy for player~1 in $G$.
  We show that there is no $H$ strategy that can exclude any $L$ view.
  To the contrary, assume an $L$ view $\alpha$ that is excluded by an $H$ strategy
  $\pi$. The following must hold:
  \begin{enumerate}
  \item The view $\alpha$ contains a $\cw$ action at stage $L_2$.
    Indeed, if no $\cw$ action at stage $L_2$ occurs,
    then all $L$ observations in the view must be $\bot$.
    For any such sequence of $L$ actions,
    there is always a run consistent with $\pi$
    yielding $L$ observation $\bot$ throughout, so that, contrary to
    assumption, $\alpha$ is not excluded by $\pi$.

  \item The view $\alpha$ is not of the form $\alpha_1 \, \cw \; 1 (A_L \,1)^*$, with
  $\alpha_1$ being a view at stage $L_2$, and containing no prior $\cw$ action at stage $L_2$.
  There is always a run consistent with $\pi$ that yields such a view.
  There are two possibilities. If the final state $s$ of a run yielding $L$ view $\alpha_1$
  is in $\C\times\Plates\times\B\times\lastmove$, then the $\cw$ action extends this
  run to one yielding $L$ view  $\alpha_1 \, \cw \; 1$ by means of the stage $L_2$ transition
  $$s \LT{\cw} (\bot, \win, 1)~\mathpunct.$$ Otherwise, the state $s$ is in $\C\times \F$,
  and must be of the form $(L_2,r,\bot)$, for $r\in \set{\win,\lose}$, since there has not yet been a  $\cw$ at stage $L_2$.
  In this case, we obtain observation $1$ at the next step by means of a transition
  $$(L_2,r,\bot) \LT{\cw} (\bot, r,1)$$ for both possible values of $r$.
  The resulting states with observation $1$ are sinks, so we can extend these runs to obtain a
  run with $L$ view $\alpha_1 \, \cw \; 1 (A_L \,1)^*$.
  (In either case, we may take the $H$ action in the final transition to be the action prescribed by $\pi$, since
  this transition is independent of $H$.)
  \end{enumerate}
  It follows that view $\alpha$ is in the regular set $\alpha_1 \, \cw \; 2 (A_L \,2)^*$, with
  $\alpha_1$ being a view at stage $L_2$, and containing no prior $\cw$ action at stage $L_2$.

  Let $\Lambda = \lambda_1 \lambda_3 \cdots \lambda_N $ be the player~1 moves of $G$
  corresponding to the actions taken by $L$ in $\alpha$ at each  $L_1$ stage.
  (These are the same as the $L$ actions, except that we treat $\cw$ at stage $L_1$ as corresponding to $\pass$.)
  The sequence $\Lambda$ is a player~1  strategy  in $G$.
  This strategy cannot be winning for player~1  as  we have assumed that
  this player has no winning strategy in $G$.
  Thus, there exists some sequence $\lambda_2 \ldots \lambda_{N-1}$ of player~2 moves such that the
  play $$\varrho = \nu_0 \LT{\lambda_1} \nu_1 \LT{\lambda_2} \nu_2 \LT{\lambda_3} \ldots  \LT{\lambda_N} \nu_N$$
  is not winning for player~1. Let $r_1$ be a run consistent with $\pi$ such that $\view_L(r_1) = \alpha_1$
  and at the $m$-th occurrence of stage $\bot$, we take the transition
  $$(\bot, i, k,a) \LT{\tau} (H_1, i, k',\lambda_{2m}) $$
  corresponding to move $\lambda_{2m}$ by player~2.
  (The choices of $L$ actions in this run come from $\alpha_1$, and the
  choices of $H$ actions are fixed by the strategy $\pi$. The
  only nondeterminism remaining is in the initial step,
  where  we choose the plate $i$ to be monitored in the simulation. Since we will work at the
  level of the $H$ view, any choice suffices.)
  Let $\beta_1 = \view_H(r_1)$ be the $H$ view obtained along this run.
  Note that by construction of $r_1$, we obtain that
  $\varrho(\beta_1) = \varrho$ is the play which is not winning for player 1.

  We now argue that the  view $\beta_1$ is truthful and the play $\varrho(\beta_1)$ is also not winning for player 2.
  More precisely, we claim that
  for every prefix $\beta' a_H o$ of $\beta_1$, we have (1) the action $a_H = \pi(\beta')$ is truthful at $\beta'$
  and (2) the play $\varrho(\beta' a_H o)$ is not winning for player 2.
  We proceed by induction, assuming that $\beta'$ is truthful and  $\varrho(\beta')$ is not winning for player 2.
  Note that since $\varrho(\beta')$ is also not winning for player~1,
  we obtain by Proposition~\ref{prop:undecided} that $K_H(\beta') = S(c,\nu(\beta'), a)$ for some $c\in \C$ and $a\in \lastmove$.
  We consider the possible cases for the stage $c$:
  \begin{enumerate}
  \item If $c = L_1$, then  $\varrho(\beta' a_H o)= \varrho(\beta') \LT{\lambda_m} \nu_m$
  for some player~1 move $\lambda_m$.
  Since $\varrho(\beta')$ is not winning for player~2, an extension by a player~1 move also cannot
  be winning for player~2. But also $a_H$ is trivially truthful at $\beta'$, so both (1) and (2) hold.

  \item If $c=H_0$, then  $\varrho(\beta' a_H o)= \varrho(\beta')$ is not
  winning for player 2 by assumption, so we have (2). For (1),
  note that if $a_H = \isblocked_j$ for some $j$ then $a_H$ is trivially truthful at $\beta'$.
  We show that the other case, where $a_H = \isopen_j$ for some $j$, is not possible, because it
  leads to a contradiction.  Note $\varrho(\beta')$  is also not
  a winning play for player~1, so $\nu(\beta')$ is not a winning position for player 1,
  and there exists a plate $i\in \Plates$ for which not $\Open^1_j(i,\nu(\beta')(i))$.
  This means that for the state $s= (c,i,\nu(\beta')(i),a) \in S(c,\nu(\beta'), a) = K_H(\beta')$, we
  have a transition $$s \LT{\isopen_j} (L_1, \lose, \bot)~\mathpunct.$$ It follows
  using Proposition~\ref{prop:LH}  that there exists a run
  $r'$ ending in state $(L_1, \lose, \bot)$ with $\view_H(r') = \beta' a_H o$ and $\view_L(r')$ a prefix of $\alpha_1$.
  The run $r'$ is necessarily consistent with $\pi$ because $\beta$ is consistent with $\pi$.
  Following the actions dictated  for $L$ and $H$ by $\alpha_1$ and $\pi$, respectively,
  we may extend this to a longer run $r'_1$, still consistent with $\pi$,
  with $\view_L(r'_1) = \alpha_1$, also ending
  in state  $(L_1, \lose, \bot)$. But then the next $\cw$ step allows a transition to
   $(\bot, \lose, 2)$, and we obtain a run with $L$ view $\alpha$, a contradiction.

  \item If $c = L_2$, then $a_H$ is trivially truthful, and $\varrho(\beta' a_H o)= \varrho(\beta')$ is not
  winning for player 2 by assumption.

  \item For stages $c = H_k$ with $k \in \set{1\ldots h_2}$,
  we have that $\varrho(\beta' a_H o)= \varrho(\beta')$.
  It is immediate that $\varrho(\beta' a_H o)$ is not winning for either player, and it
  remains to show that $a_H$ is truthful at $\beta'$.

  As noted above,  the assumption that $\beta'$ is truthful, together with the assumption that $\varrho(\beta')$ is  
  not a winning play for
  either player, implies that $K_H(\beta') = S(c,\nu(\beta'),a)$, by Proposition~\ref{prop:undecided}.
  We will show that the desired conclusion that $a_H$ is truthful at $\beta'$ follows from the
  weaker assumption that $\beta'$ is truthful and $K_H(\beta') = S(c,\nu(\beta'),a)$: this
  helps with the argument for case $c=\bot$, which is handled below.

  Note first that $a_H$ cannot be $\isopen_j$, since the final state of the prefix $r'$ of $r$
  with $\view_H(r') = \beta'$  is in $S(c,\nu(\beta'),a)$, hence in $\C\times\Plates\times\B\times\lastmove$.
  so the action $\isopen_j$ results in a transition to the state $(\nxt(c), \lose, \bot)$ in $r$.
  It follows that the final state of $r_1$ is $(L_2, \lose, \bot)$, and then the subsequent action
  $\cw$ produces a run consistent with $\pi$ with view $\alpha$, contrary to the assumption that $\pi$ excludes $\alpha$.
  Hence $a_H= \isblocked_j$ for some $j$.

  Suppose  that $\Open^2_k(j,\nu(\beta')(j))$. Since $K_H(\beta') = S(c,\nu(\beta'),a)$,
  we have that $(c,j,\nu(\beta')(j),a) \in  K_H(\beta')$. By Proposition~\ref{prop:LH}, there exists a run
  $r'$ ending in state $(c,j,\nu(\beta')(j),a)$ with $\view_H(r') = \beta' $ and $\view_L(r')$ a prefix of $\alpha_1$.
  The transition
  $$(c,j,\nu(\beta')(j),a) \LT{\isblocked_j} (\nxt(c), \lose, \bot)$$
  extends this to a run whose $L$ view remains a prefix of $\alpha_1$, and by following strategy 
  $\pi$ and the remaining $L$ actions in $\alpha_1$ we may continue to  extend to the point where 
  we obtain a run $r_1'$ with $\view_L(r_1')= \alpha_1$ and final state $(L_2, \lose, \bot)$. 
  But then the next $\cw$ step allows a transition to $(\bot, \lose, 2)$, and we obtain a run consistent 
  with $\pi$ with $L$ view $\alpha$, a contradiction. Thus, in fact, we must have not $\Open^2_k(j,\nu(\beta')(j))$, 
  so that $a_H$ is truthful at $\beta'$, as required.

  For the purposes of the next case, we make one further conclusion.
  Note that by definition of the transitions for $\isblocked_j$ at stage $H_k$,
  we get from $K_H(\beta') = S(c,\nu(\beta'),a)$
  that $K_H(\beta' a_H o) = S(\nxt(c),\nu(\beta'),o)= S(\nxt(c),\nu(\beta' a_H o),o)$, so we
  preserve the weakened assumption.
  Thus, since the above argument applies for all $k = 1\ldots h_2$,
  we have that for all such $k$, there exists $j$ such that not $\Open^2_k(j,\nu(\beta')(j))$.
  That is, no hole of player~2 is open in $\nu(\beta')$.
  It follows that $\nu(\beta')$ cannot be a winning position of player 2.

  \item If $c= \bot$, then $a_H$ is trivially truthful, and $\varrho(\beta' a_H o)= \varrho(\beta')\LT{\lambda_m} \nu_m$,
  where $o= \lambda_m$ is a move of player 2. As noted above,
  $K_H(\beta') = S(c,\nu(\beta'),a)$ for some $a$.
  The transitions for case $c= \bot$ then imply that  $K_H(\beta' a_H o)= S(c,\nu(\beta' a_H o),o)$.
  We therefore satisfy the weakened assumption for the stages $H_1\ldots H_{h_2}$ in the previous case.
  It therefore follows  using the argument of the previous case that $\nu(\beta' \, a_H \,o)= \nu_m$ 
  cannot be a winning position of player 2. Thus, from the assumption that $\varrho(\beta')$ is not winning for either player, we
  obtain that $\varrho(\beta' a_H o)$ is not winning for either player.
  \end{enumerate}
This completes the argument that $\beta_1$ is truthful and $\varrho(\beta_1)$ is
not a winning play for either player. By Proposition~\ref{prop:undecided},
we obtain that $K_H(\beta_1) = S(c,\nu(\beta_1),a)$ for some $a$.
In particular, the final state of $r_1$ must be in $\C\times\Plates\times\B\times\lastmove$,
and the next $\cw$ action then results in a run consistent with $\pi$
with $L$ view $\alpha$, a contradiction. \qed
\end{proof}

Again, we point out that the hardness result holds for scheduled
machines already.

\section{Synchronous Bisimulation-based Notions} \label{sec:res}

In this section we establish the result:
\begin{theorem}\label{thm-res} For the class of finite state synchronous machines, $\tRES$ is in PTIME.
\end{theorem}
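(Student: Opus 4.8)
The plan is to reduce $\tRES$ to a membership test in the \emph{largest} symmetric relation satisfying clauses (2) and (3) of Definition~\ref{def-unwinding}, and to compute that relation by a greatest-fixpoint (partition-refinement style) iteration that runs in polynomial time.

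First I would observe that clauses (2) and (3) are preserved under arbitrary unions: if each $\sim_i$ is symmetric and satisfies (2) and (3), then so does $\bigcup_i \sim_i$, since for a pair $(s,t)$ drawn from some $\sim_j$, the witness $t'$ supplied by clause (3) for $\sim_j$ already lies in $\sim_j$, hence in $\bigcup_i \sim_i$. Consequently there is a largest symmetric relation $\sim_{\max} \subseteq S \times S$ satisfying (2) and (3). I would then establish the key equivalence: $M \in \tRES$ iff $(s_0,s_0) \in \sim_{\max}$. The forward direction is immediate, since any synchronous unwinding relation is symmetric, satisfies (2) and (3), and contains $(s_0,s_0)$, hence is contained in $\sim_{\max}$. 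Conversely, if $(s_0,s_0)\in\sim_{\max}$ then $\sim_{\max}$ itself additionally satisfies clause (1), so it is a synchronous unwinding relation and $M\in\tRES$.

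Next I would compute $\sim_{\max}$ by a decreasing fixpoint iteration. Define a monotone refinement operator $\tilde F$ by letting $(s,t) \in \tilde F(R)$ iff $\obs_L(s) = \obs_L(t)$ and both $\mathrm{good}(s,t,R)$ and $\mathrm{good}(t,s,R)$ hold, where $\mathrm{good}(s,t,R)$ means that for every $a_3 \in A_L$, every $a_1\in A_H$ and every $s'$ with $s \xrightarrow{(a_1,a_3)} s'$, and for every $a_2 \in A_H$, there is a state $t'$ with $t \xrightarrow{(a_2,a_3)} t'$ and $(s',t') \in R$. Requiring both $\mathrm{good}(s,t,R)$ and $\mathrm{good}(t,s,R)$ keeps $\tilde F(R)$ symmetric. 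Starting from $R_0 = S\times S$ and setting $R_{k+1} = \tilde F(R_k)$ yields a decreasing chain that stabilises after at most $|S|^2$ steps at a relation that, by the standard Knaster--Tarski argument together with the union-closure observation above, is exactly $\sim_{\max}$ (the first refinement already imposes clause (2), using input-enabledness, and each fixpoint step imposes clause (3)).

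For the complexity bound, a single application of $\tilde F$ examines at most $|S|^2$ pairs, and testing $\mathrm{good}(s,t,R_k)$ for one pair ranges over $a_3 \in A_L$, over successors $s'$ of $s$, and over $a_2 \in A_H$, in each case checking in $O(|S|)$ time whether a suitably $R_k$-related successor $t'$ of $t$ exists; this is polynomial in $|M|$. As at most $|S|^2$ refinement rounds occur, the whole computation is polynomial in $|M|$, and the final test $(s_0,s_0)\in\sim_{\max}$ is immediate. I expect the main subtlety to lie not in the fixpoint machinery or the complexity count, but in getting the operator right: clause (3) quantifies \emph{universally} over $H$'s action $a_2$ on the matching side, so $\sim_{\max}$ is \emph{not} an ordinary bisimulation -- in particular the identity relation need not satisfy it -- and this universal quantifier, which is precisely what encodes that $L$'s observations are uncorrelated with $H$'s choices, must be reflected faithfully in $\mathrm{good}$, while the explicit symmetrisation built into $\tilde F$ is needed because the one-step condition itself is not symmetric in $s$ and $t$.
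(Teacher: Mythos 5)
Your proof is correct, but it takes a genuinely different route from the paper's. You work directly with binary relations: you observe that the symmetric relations satisfying clauses (2) and (3) of Definition~\ref{def-unwinding} are closed under arbitrary union, characterise the largest one $\sim_{\max}$ as the greatest fixpoint of a monotone operator on $\mathcal{P}(S\times S)$, and reduce $\tRES$ to the single test $(s_0,s_0)\in\sim_{\max}$; the decreasing iteration from $S\times S$ stabilises in at most $|S|^2$ rounds, each polynomial. The paper instead first proves more structure (Lemma~\ref{lem:unw}: unwinding relations are closed under union and composition, so the largest one is transitive, and on the reachable states it is an equivalence relation), which licenses a Kanellakis--Smolka-style partition refinement (Algorithm~\ref{alg-unwin}) operating on equivalence classes rather than on pairs, with the reflexive instances of clause (3) checked separately as an early-abort test. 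What your approach buys is economy: no reachability restriction, no composition or transitivity argument, and no special treatment of reflexivity --- the fixpoint machinery absorbs all of it --- and the subtlety you flag (the universal quantifier over $a_2$, which makes the identity relation a non-solution and the operator not an ordinary bisimulation refinement) is precisely the issue the paper also must confront in its modified notion of stability. What the paper's approach buys is a sharper bound, $O(|A_H|^2\,|A_L|\,|S|^3)$, against the roughly $O(|A_H|^2\,|A_L|\,|S|^6)$ of your pairwise iteration; both are polynomial, so either establishes the theorem. One nit: clause (2) is enforced in your first refinement simply because $\tilde F$ demands $\obs_L(s)=\obs_L(t)$; input-enabledness plays no role there (it only ensures that the condition $\mathrm{good}$ is never vacuously satisfied).
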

The following Lemma shows that in searching for an unwinding relation on a machine $M$, it
suffices to consider equivalence relations on the reachable states of $M$.

\begin{lemma}\label{lem:unw} \mbox{}
\begin{enumerate}
\item If there exists a synchronous unwinding relation on $M$, then there
  exists a largest such relation, which is transitive.
\item If all states in $M$ are reachable then the largest synchronous
  unwinding relation  (if one exists) is an equivalence relation.
\item A system satisfies $\tRES$ iff its restriction to its reachable
  states satisfies $\tRES$.
\end{enumerate}
\end{lemma}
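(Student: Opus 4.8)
The plan is to handle the three parts in order, each building on the previous. For part~1, I would first show that the \emph{union} of any family of synchronous unwinding relations is again a synchronous unwinding relation. Each defining condition is preserved under union: condition~1 holds for every member of the family; conditions~2 and~3 are universally quantified statements about an arbitrary related pair $(s,t)$, and if $(s,t)$ lies in the union it lies in some member $\sim_i$, so the witness $t'$ produced by that member's condition~3 satisfies $s'\sim_i t'$ and hence relates $s'$ and $t'$ in the union; symmetry is likewise inherited. Taking the union of \emph{all} synchronous unwinding relations therefore yields the largest one, call it $\sim$, provided at least one exists. To see that $\sim$ is transitive I would show that its transitive closure is again a synchronous unwinding relation and then invoke maximality. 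Conditions~1, 2 and symmetry are immediate for the transitive closure; the crux is condition~3, which I would prove by induction on the length $k$ of a witnessing chain $s = u_0 \sim u_1 \sim \cdots \sim u_k = t$. Here the fact that the high action is \emph{universally} quantified in condition~3 is essential: given a transition $s \xrightarrow{(a_1,a_3)} s'$ and a prescribed target high action $a_2$, I first use $u_0 \sim u_1$ to produce an intermediate successor $u_1'$ with $u_1 \xrightarrow{(a_1,a_3)} u_1'$ and $s' \sim u_1'$, then apply the induction hypothesis along $u_1 \sim \cdots \sim u_k = t$ with the prescribed $a_2$ to obtain $t'$ with $t \xrightarrow{(a_2,a_3)} t'$ and $u_1'$ related to $t'$, and finally concatenate the two chains (the base case $k=1$ being a direct appeal to condition~3).

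For part~2, with all states reachable, symmetry and (by part~1) transitivity are already in hand, so only reflexivity on all of $S$ remains. I would prove that the set $R = \{\, s \mid s \sim s \,\}$ contains $s_0$ (by condition~1) and is closed under transitions, whence $R = S$ since every state is reachable from $s_0$. Closure is the neat step: if $s \sim s$ and $s \xrightarrow{(a_1,a_3)} s'$, then condition~3 (taking the high action $a_2 = a_1$) yields some $t'$ with $s' \sim t'$; by symmetry $t' \sim s'$, and transitivity then gives $s' \sim s'$, so $s' \in R$.

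For part~3, let $M'$ be the restriction of $M$ to its reachable states $S'$. I would first note that $M'$ is a well-formed, and in particular input-enabled, synchronous machine, because any successor of a reachable state is reachable. For the forward direction, given an unwinding relation $\sim$ on $M$, its restriction $\sim \cap (S'\times S')$ is an unwinding relation on $M'$: all four requirements transfer, using again that successors of reachable states stay in $S'$, so that the witnesses produced by condition~3 land in $S'$. For the converse, an unwinding relation $\sim'$ on $M'$ already relates only reachable states, and condition~3 need only be checked for such pairs; since every successor involved is reachable and the transitions of $M'$ coincide with those of $M$ on these states, the very same relation $\sim'$ serves as an unwinding relation on $M$.

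The main obstacle is the transitivity argument in part~1: closure of condition~3 under transitive closure is not a one-line verification but needs the chain induction above, and it works precisely because the matching requirement constrains only the low action while leaving the high action free. Parts~2 and~3 are then short, the only care being the repeated use of the elementary fact that reachability is preserved under taking successors.
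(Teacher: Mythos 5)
Your proposal is correct and follows essentially the same route as the paper: closure of unwinding relations under union gives the largest one, transitivity comes from showing the transitive closure is again an unwinding (the paper packages your chain induction as closure under relational composition, exploiting in the same way that one link can keep the high action fixed while another absorbs the change to $a_2$), reflexivity on reachable states is proved by the same induction along a run using symmetry and transitivity, and part~3 is handled by the same restriction/trivial-extension argument.
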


\begin{proof} 
\begin{enumerate}
\item First we show that the set of synchronous unwinding relations on
  $M$ is closed under union.  Let $\sim_1$ and $\sim_2$ be two
  unwinding relations on $M$. Clearly Items~1 and~2 of Def.~\ref{def-unwinding}
  hold for $\sim_1 \cup \sim_2$.  Item~3 holds as well as if $s \sim_1
  \cup \sim_2 t$ then either $s \sim_1 t$ or $s \sim_2 t$
  holds. Assume $s \sim_1 t$, then as $\sim_1$ is an unwinding
  relation, by Item~3 of Def.~\ref{def-unwinding}, it follows that for all
  $a_1, a_2 \in A_H$ and $a_3 \in A_L$, if $s \xrightarrow{\
    (a_1,a_3)\ } s'$ there is some $t'$ such that $t \xrightarrow{\
    (a_2,a_3)\ } t'$ and $s' \sim_1 t'$, which implies $s' \sim_1 \cup
  \sim_2 t'$. This implies that there exists a largest unwinding
  relation.

  Second, the composition of two unwinding relations is an unwinding
  relation.  Again Items~1 and~2 of Def.~\ref{def-unwinding} hold for
  $\sim_1 \circ \sim_2$. Assume $s \sim_1 \circ \sim_2 t$. In this
  case there is some $x$ such that $s \sim_1 x$ and $x \sim_2 t$. As
  $\sim_1$ is un unwinding relation, by Item~3 of
  Def.~\ref{def-unwinding}, for any $a_1, a_2 \in A_H$ and $a_3 \in
  A_L$, if $s \xrightarrow{\ (a_1,a_3)\ } s'$, there is some $x'$ such
  that $x \xrightarrow{\ (a_2,a_3)\ } x'$ and $x' \sim_1 t'$.  As $x
  \sim_2 t$, and as $\sim_2$ is an unwinding relation, Item~3
  Def.~\ref{def-unwinding} applied with $a_1=a_2$ implies there exists
  some $t'$ such that $t \xrightarrow{\ (a_2,a_3)\ } t'$ and $x'
  \sim_2 t'$.  Putting it all together, there is some $t'$ such that
  $t \xrightarrow{\ (a_2,a_3)\ } t'$ and some $x'$ such that $s'
  \sim_1 x' \sim_2 t'$ \ie $s' \sim_1 \circ \sim_2 t'$.

  It follows that the transitive closure of any synchronous unwinding relation is
  a synchronous unwinding relation. In particular, the largest such
  relation must be transitive.

\item Let $\sim$ be 
  the largest synchronous unwinding relation. 
  By definition and the Item~1 above, we already have that $\sim$ is
  symmetric and transitive, so it suffices to show reflexivity.  Let
  $s$ be a reachable state. In this case there is a run
  $s_0a_1s_1\dots a_ns_n$ of $M$ such that $s=s_n$.  We need to show
  that $s\sim s$. The proof is by induction on the length of the run.

The base case of $s_0 \sim s_0$ is immediate from Item~1 of
Def.~\ref{def-unwinding}. Suppose $s_i\sim s_i, 0 \leq i \leq n$.
Assume $s_n\xrightarrow{\ a_{n+1}\ }s_{n+1}$. As $s_n \sim s_n$, and
applying Item~3 of Def.~\ref{def-unwinding} to the right handside copy
of $s_n$, there exists a transition $s_n\xrightarrow{\ a_{n+1}\ } s'$
for some $s'$ and $s_{i+1}\sim s'$. Because $\sim$ is symmetric, we
have $s'\sim s_{n+1}$, and by transitivity $s_{n+1}\sim s_{n+1}$.

\item Any synchronous unwinding relation is still a synchronous
  unwinding relation when restricted to the reachable states.
  Conversely, given a synchronous unwinding relation on the reachable
  states, the (identical) relation which extends this to all states by
  union with the empty relation on unreachable states is also a
  synchronous unwinding relation.  \qed
\end{enumerate}
\end{proof}

Using Lemma~\ref{lem:unw}, we can design an algorithm to compute the
largest synchronous unwinding relation, or the empty relation
if none exists. By  part (3) of  Lemma~\ref{lem:unw} we may
assume that all the states of the machine
$M=\word{S,A,s_0,\rightarrow, O, \obs}$ are reachable.

The algorithm is an adaptation of the algorithm for
calculating the relational coarsest partition by Kanellakis and
Smolka~\cite{KS83}. For a partition $P$ or equivalence relation $\approx$,
we write $[s]_{P}$ or $[s]_{\approx}$ for the equivalence class
containing element $s$.
We say that a partition $P$ of the state space $S$ is {\em stable}
if the corresponding equivalence relation $\sim_P$ satisfies condition (3)
of Def.~\ref{def-unwinding}.%
\footnote{Our definition of stability differs from that of
Kanellakis and Smolka: they require that
for each of a set of functions $f_a: S\rightarrow {\cal P}(S)$
(each corresponding to transitions with respect to some label $a$), and partition
$p\in P$, for states $s,t\in S$ we have $s \sim_P t $ implies that that $f_a(s)\cap p = \emptyset$
iff $f_a(t) \cap p\neq \emptyset$. In this condition, we
apply the same function to $s$ and $t$. Our definition amounts to the
application of different functions to $s$ and $t$, corresponding
to transitions with respect to $(a_1,a_3)$ and $(a_2,a_3)$, respectively.}
The idea of the algorithm is to compute the coarsest stable partition satisfying condition (2) of Def.~\ref{def-unwinding},
by iteratively refining an existing partition if the latter is not stable.
 Given the current partition $P$, if there exists a (reachable) state $s$ such that
condition (3) of Def.~\ref{def-unwinding} is not satisfied with $t=s$,
the algorithm terminates and returns the empty relation: this follows
from Lemma~\ref{lem:unw}.(2) because reflexivity is a necessary
condition for the existence of an unwinding relation.
Otherwise, we check whether it is stable for $s\neq t$.
If it is, we have found the largest unwinding relation. If
not, we refine the current partition
based on the counterexample found.

The procedure is given by Algorithm~\ref{alg-unwin}.
In each refinement step, with the current partition equal to $P$, we
first compute the set $$R(s,a_H,a_L) = \{ [t]_P~|~ s\LT{(a_L,a_H)} t\}$$
for each state $s$, and $L$ action $a_L$ and $H$ action $a_H$.
Rule (3) with respect to $\sim_P$ is equivalent to the statement that
if $s \sim_P t$ then $R(s,a_H,a_L) = R(t,a_H',a_L)$ for all
$L$ actions $a_L$ and $H$ actions $a_H,a_H'$.
We first check this when $s=t$. Note that once this condition
has been verified, we have verified that $R(s,a_H,a_L)$ does not
depend on the second argument $a_H$. To check the non-reflexive cases,
it therefore suffices to check the condition with respect to any fixed $a_H$.

\begin{algorithm}[t]
\SetAlgoVlined
\SetNoFillComment
\caption{$\textsc{Compute-Largest-Unwinding}(M)$}
\KwIn{A synchronous machine $M=\word{S,A,s_0,\rightarrow, O, \obs}$. \tcc*[h]{We assume $S$ is reachable }}
\KwOut{The largest unwinding partition $P$ on $M$, or $\emptyset$ if none exists}
\tcc{Initial partition $P$ is given by the set of $L$ observations}
$P \leftarrow \{ \textit{obs}_L^{-1}(o) \ | \ o \in O \}$\;
Loop:
  \tcc{Check stability of each subset $p \in P$}
  \ForEach{$p \in P$}{
      \ForEach{$s\in p$ and $(a_H,a_L)\in A$}{
     Let $R(s,a_H,a_L) = \{ [t]_{P}~|~s\LT{(a_H,a_L)}t\}$\;}
     \tcc{Check reflexive case of condition (3)}
      \ForEach{$s\in p$ and $a_3\in A_L$}{
      \If{there exists $a_1,a_2\in A_H$ with $R(s,a_1,a_3) \neq R(s,a_2,a_3)$}
      {\Return $\emptyset$\;}}
      \tcc{$R(s,a_H,a_L)$ is independent of $a_H$}
      Fix $a_H\in A_H$\;
      For $s,t\in p$ and $a_3 \in A_L$ let  $s\approx_{a_3}  t$ when $R(s,a_H,a_3) = R(t,a_H,a_3)$\;
        \If{there exists $s,t\in p$ and $a_3\in A_L$ with $s\neq t$ and $s\not \approx_{a_3} t$ }
      {
      \tcc{split $p$ according to equivalence classes $[s]_{\approx_{a_3}}$ of $\approx_{a_3}$}
      $P \leftarrow (P\setminus \{p\})\cup \{ [s]_{\approx_{a_3}}~|~s\in p\}$\;
      Go to Loop
      }
      }
       \tcc{$P$ contains the largest stable partition}
  \Return $P$\;
\label{alg-unwin}
\end{algorithm}

We can now prove Theorem~\ref{thm-res}:

\begin{proof}[of Theorem~\ref{thm-res}]
 Algorithm~\ref{alg-unwin} terminates when no split occurs in the main loop.  Since, when a split occurs,
 the new partition is a strict refinement of the previous one, the
 number of iterations of the main loop is at most $|S|$.
 For each $p\in P$, computing the function $R$ can be done in time
 $O((|A_H|\times |A_L| \times |p|) + |p~\cdot\LT{}|)$.  Checking the
 reflexive cases can be done in time $O(|A_H|^2\times |A_L| \times
 |p|)$, and the non-reflexive cases can be done in time $O(|p|^2\times
 |A_L|)$.  Hence the $\mathbf{foreach}$ loop over $p\in P$ can be
 handled in time $O( |A_H|^2\times |A_L|\times|S|^2 + |\LT{}|)$.
 Since there are at most $|S|$ iterations, we have a total time of $O(
 |A_H|^2\times |A_L|\times|S|^3 + |\LT{}|\times |S|)$.  Because
 $|\LT{}|$ may be as large as $|S|^2$, this is $O( |A_H|^2\times
 |A_L|\times|S|^3)$.  (Literature subsequent to Kanellakis and Smolka
 has shown how to optimize their algorithm using careful scheduling,
 union-find data stuctures and amortized analysis, as well as parallel
 implementation.  Similar optimizations may be applicable to our
 algorithm, but we will not pursue this here.)

To argue correctness, we first show that if there exists a synchronous
unwinding $\sim$ on $M$, corresponding to partition $P_\sim$,
the algorithm maintains the invariant  that $P_\sim$ is a refinement of $P$.
That this holds for the initial value of $P$ follows from condition (2) of
Def.~\ref{def-unwinding}. The only case where $P$ changes value is
where we have $p\in P$ and $a_3\in A_L$ with
\begin{enumerate}
\item $R(s,a_1,a_3) = R(s,a_2,a3)$ for all $s\in p$ and $a_1,a_2 \in A_H$,
(there are no reflexivity violations), and
\item $R(s,a_H,a_3) \neq R(t,a_H,a_3)$ for some $s,t\in p$ and $a_H\in A_H$.
\end{enumerate}
In this case, we obtain the new value $P'$ for $P$ by splitting $p$ into
the collection $\{[s]_{\approx_{a_3}}~|~s\in p\}$, where
$\approx_{a_3}$ is defined on $p$ by $s\approx_{a_3} t$ if  $R(s,a_H,a_3) = R(t,a_H,a_3)$.
Suppose that $P_\sim$ is not a refinement of  $P'$. Since the only element of $P$ that changed was $p$,
we must have $s,t\in p$ with $s\sim t$ and $R(s,a_H,a_3) \neq R(t,a_H,a_3)$.
The latter means that there exists $p'\in P$ with (without loss of generality)
$p'\in R(s,a_H,a_3)$ and $p'\not \in  R(t,a_H,a_3)$.
That is, there exists $s'\in p'$ such that $s \LT{(a_H,a_3)}s'$, but
for all $t'$  with $t\LT{(a_H,a_3)} t'$ we have $t'\not \in p'$.
Because $\sim$ is a synchronous unwinding, $s\sim t$, and $s \LT{(a_H,a_3)}s'$,
there exists $t'$ with $t \LT{(a_H,a_3)}t'$ and $s'\sim t'$.
But because $P_\sim$ refines $P$, this implies that $t'\in [s']_P= p'$, a contradiction.
We conclude that in fact $P'$ refines $P$.

The correctness argument now follows straightforwardly.
Suppose that the algorithm outputs $\emptyset$: we show that
there exists no synchronous unwinding on $M$. Suppose to the contrary that $\sim$
is a synchronous unwinding.
At the time the algorithm terminates, we have
$R(s,a_1,a_3) \neq  R(s,a_2,a3)$ for some $p\in P$, some
$s\in p$ and some $a_1,a_2 \in A_H$. Without loss of
generality,
there exists some $p'\in P$ and $t\in S$ such that  $s\LT{(a_1,a_3)} t \in p'$
but there exists no $t'\in p'$ such that $s\LT{(a_2,a_3)} t'\in p'$.
By reflexivity of $\sim$, we have $s\sim s$,
so there exists $t'$ such that $s\LT{(a_2,a_3)} t' $
and $t \sim t'$. Because $P_\sim$ is a refinement of $P$, we obtain
$t' \in [t]_P = p'$, a contradiction.
We conclude that there exists no synchronous unwinding.

Conversely, suppose that the algorithm  outputs a partition $P\neq \emptyset$,
and let $\sim_P$ be the corresponding equivalence relation.
Since $P$ is a refinement of  $ \{ \textit{obs}_L^{-1}(o) \ | \ o \in O \}$,
we have that $s\sim_P t$ implies $\obs_L(s) = \obs_L(t)$,
so condition (2) of Def.~\ref{def-unwinding} is satisfied.
Moreover, we have, for all $p\in P$, that
\begin{enumerate}
\item
$R(s,a_1,a_3) = R(s,a_2,a3)$ for all  $s\in p$ and $a_1,a_2 \in A_H$,
\item
$R(s,a_H,a_3) = R(t,a_H,a_3)$ for all $s\neq t\in p$ and $a_H\in A_H$.
\end{enumerate}
Together, these imply that $\sim_P$ satisfies condition (3) of
Def.~\ref{def-unwinding}. Finally, since $P$ is a partition,
we have $s_0\sim_P s_0$, so condition (1) also holds.
\qed
\end{proof}

\section{Related Work} \label{sec:related}

In asynchronous machines the verification complexities of $\tNDI$ and
$\tNDS$ are both PSPACE-complete, and $\tRES$ (based on asynchronous
unwinding) is in polynomial time~\cite{FG95,FG96,MZ06b}.
Interestingly, PSPACE is also the complexity result for verifying
Mantel's BSP conditions~\cite{Mantel00} on asynchronous finite state
systems. For (asynchronous) push-down systems, the verification
problem is undecidable~\cite{DHKRS08}.

A number of works have defined notions of security for synchronous or
timed systems, but fewer complexity results are known. K{\"o}pf and
Basin \cite{KB06} define a notion similar to $\tRES$ and show it is
PTIME decidable. Similar definitions are also used in the literature
on language-based security \cite{Agat00,VS97}.

Focardi \emph{et al.} \cite{FGM00} define a spectrum of definitions
related to ours in a timed process algebraic setting, and state a
decidability result for one of them, close to our notion
$\tNDS$. However, this result concerns an approximation to the notion
``timed nondeducibility on compositions" (tBNDC)
 that is their real target, and they do not give a complexity
result.  Beauquier and Lanotte defined \emph{covert channels} in timed
systems with $tick$ transitions by using strategies~\cite{BL06}. They
prove that the problem of the existence of a covert channel in such
systems is decidable.  However, their definition of covert channel
requires that $H$ and $L$ have strategies to force a system into
sets of runs with disjoint sets of $L$ views.
The induced definition on \emph{free of covert channels} appears
to be a weaker notion than~$\tNDS$.

\section{Conclusion}
\label{sec:concl}

We remarked above that nondeducibility-based notions of security may
have the disadvantage that they do not readily support a compositional
approach to secure systems development, motivating the introduction of
unwinding-based definitions of security.  The complexity results of
the present paper can be interpreted as lending further support to the
value of unwinding-based definitions.  We have found that the two
nondeducibility notions we have considered, while both decidable, are
intractable.  On the other hand, the unwinding-based notion of
synchronous restrictiveness has tractable complexity. This makes this
definition a more appropriate basis for automated verification of
security. Even if the desired security property is nondeducibility on
inputs or nondeducibility on strategies, it is sufficient to verify
that a system satisfies synchronous restrictiveness, since this is a
stronger notion of security.  It remains to be seen whether there is a
significant number of practical systems that are secure according to
the nondeducibility-based notions, but for which there does not exist
a synchronous unwinding.  If so, then an alternate methodology needs
to be applied for the verification of security for such systems.

\bibliographystyle{spbasic}

\newcommand{\etalchar}[1]{$^{#1}$}

\end{document}